\DeclareMathOperator*{\argmin}{arg\,min}
\DeclareMathOperator*{\argmax}{arg\,max}
\newcommand{\ie}{{i.e.,}\xspace}
\newcommand{\eg}{{e.g.,}\xspace}
\newcommand{\AlgNWF}{\textsc{Greedy}\xspace}
\newcommand{\AlgSPH}{\textsc{Sphere}\xspace}
\newcommand{\AlgDMM}{\textsc{DMM}\xspace}
\newcommand{\AlgBG}{\textsc{BiGreedy}\xspace}
\newcommand{\AlgIBG}{\textsc{BiGreedy+}\xspace}
\newcommand{\AlgTwoD}{\textsc{IntCov}\xspace}
\begin{document}
\title{Happiness Maximizing Sets under Group Fairness Constraints}

\author{Jiping Zheng}
\authornote{Jiping Zheng is also with the State Key Laboratory for Novel Software Technology, Nanjing University, Nanjing, China.}
\affiliation{%
  \institution{Nanjing University of Aeronautics and Astronautics}
  \city{Nanjing}
  \country{China}
}
\email{jzh@nuaa.edu.cn}
\author{Yuan Ma}
\affiliation{%
  \institution{Nanjing University of Aeronautics and Astronautics}
  \city{Nanjing}
  \country{China}
}
\email{mayuancs@nuaa.edu.cn}
\author{Wei Ma}
\affiliation{%
  \institution{Nanjing University of Aeronautics and Astronautics}
  \city{Nanjing}
  \country{China}
}
\email{mawei@nuaa.edu.cn}
\author{Yanhao Wang}
\authornote{Corresponding author}
\affiliation{%
  \institution{East China Normal University}
  \city{Shanghai}
  \country{China}
}
\email{yhwang@dase.ecnu.edu.cn}
\author{Xiaoyang Wang}
\affiliation{%
  \institution{The University of New South Wales}
  \city{Sydney}
  \state{NSW}  
  \country{Australia}
}
\email{xiaoyang.wang1@unsw.edu.au}

\begin{abstract}
  Finding a happiness maximizing set (HMS) from a database, i.e., selecting a small subset of tuples that preserves the best score with respect to any nonnegative linear utility function, is an important problem in multi-criteria decision-making.
  When an HMS is extracted from a set of individuals for assisting data-driven algorithmic decisions such as hiring and admission, it is crucial to ensure that the HMS can fairly represent different groups of candidates without bias and discrimination.
  However, although the HMS problem was extensively studied in the database community, existing algorithms do not take \emph{group fairness} into account and may provide solutions that under-represent some groups.
  
  In this paper, we propose and investigate a fair variant of HMS (FairHMS) that not only maximizes the minimum happiness ratio but also guarantees that the number of tuples chosen from each group falls within predefined lower and upper bounds.
  Similar to the vanilla HMS problem, we show that FairHMS is NP-hard in three and higher dimensions.
  Therefore, we first propose an exact interval cover-based algorithm called \AlgTwoD for FairHMS on two-dimensional databases.
  Then, we propose a bicriteria approximation algorithm called \AlgBG for FairHMS on multi-dimensional databases by transforming it into a submodular maximization problem under a matroid constraint.
  We also design an adaptive sampling strategy to improve the practical efficiency of \AlgBG.
  Extensive experiments on real-world and synthetic datasets confirm the efficacy and efficiency of our proposal.
\end{abstract}

\maketitle

\section{Introduction}
\label{sec:intro}

Scoring, ranking, and selecting tuples from a database\footnote{The terms ``database''/``dataset'' and ``tuple''/``point'' will be used interchangeably throughout this paper.} based on a combination of multiple criteria is an essential function of modern data management systems.
Since the number of tuples matching a query can be large, it may be impossible for a decision-maker to search within it entirely.
This gives rise to the database operators to select a small yet representative subset of tuples for supporting decision-making.
In the database community, the most widely used operators to fulfill this task are \emph{top-$k$ queries}~\cite{Ilyas:2008}, \emph{skyline queries}~\cite{Borzsony:2001}, and \emph{regret minimizing} (or \emph{happiness maximizing}) \emph{sets}~\cite{Nanongkai:2010,Xie:2020,Qiu:2018}.

A top-$k$ query outputs $k$ tuples with the highest scores based on a predefined utility function.
Unfortunately, in many cases, we may not provide the utility function exactly in advance or should consider different trade-offs among multiple criteria that cannot be captured by any single utility function.
Skyline queries do not need any explicit utility function anymore.
They are based on the concept of ``dominance'': a tuple $p$ dominates another tuple $q$ if and only if $p$ is no worse than $q$ in all attributes and strictly better than $q$ in at least one attribute.
The results of skyline queries contain all the tuples that are not dominated by any other tuple and thus include the best tuples w.r.t.~all possible utility functions for different trade-offs among attributes.
However, the output sizes of skyline queries are uncontrollable, particularly so when the dimensionality of the database is high.
Therefore, the regret minimizing set (RMS) problem~\cite{Nanongkai:2010} is proposed to avoid the deficiencies of both top-$k$ and skyline queries.
In the RMS problem, the notion of \emph{regret ratio} is used to quantify how regretful a user is if she/he obtains the best tuple in the selected subset instead of the best tuple in the whole database with regard to a utility function.
And an RMS is a set of $k$ tuples such that the maximum of regret ratios among all nonnegative linear utility functions is minimized.
Furthermore, the minimization of \emph{maximum regret ratios} (MRR) is often transformed into the maximization of \emph{minimum happiness ratios} (MHR)~\cite{Luenam:2021, Xie:2020, Qiu:2018} for ease of theoretical analysis, as the MHR of any subset is always equal to one minus its MRR.
This equivalent form of RMS is called the \emph{happiness maximizing set} (HMS) problem.
Although there have been extensive studies on RMS and HMS (\eg~\cite{Nanongkai:2010,Asudeh:2017,Agarwal:2017,Peng:2014,Chester:2014,Qi:2018,Xie:2018,Xie:2019,Xie:2020,Luenam:2021,Faulkner:2015}, see~\cite{Xie:2020VLDBJ} for a survey), they only consider the numerical attributes in scoring, ranking and representative subset selection, but ignore the remaining attributes.

\begin{table}[t]
	\centering
	\footnotesize
	\caption{Example of tuples in the LSAC database}
	\label{tab:example}
	\begin{tabular}{ccccc}
		\toprule
		\textbf{Applicant ID} & \textbf{Gender} & \textbf{Race} & \textbf{LSAT} (140-180) & \textbf{GPA} (0-4) \\
		\midrule
		$a_1$ & \emph{Female} & \emph{Black}    & $164$ & $3.31$ \\
		$a_2$ & \emph{Male}   & \emph{Black}    & $163$ & $3.55$ \\
		$a_3$ & \emph{Female} & \emph{White}    & $165$ & $3.09$ \\
		$a_4$ & \emph{Male}   & \emph{White}    & $160$ & $3.83$ \\
		$a_5$ & \emph{Male}   & \emph{Hispanic} & $170$ & $2.79$ \\
		$a_6$ & \emph{Female} & \emph{Hispanic} & $161$ & $3.69$ \\
		$a_7$ & \emph{Male}   & \emph{Asian}    & $153$ & $3.89$ \\
		$a_8$ & \emph{Female} & \emph{Asian}    & $156$ & $3.87$ \\
		\bottomrule
	\end{tabular}
\end{table}

In many real-world applications, a database is composed of a set of individuals grouped by sensitive categorical attributes such as \emph{gender} and \emph{race}, from which a subset is extracted for assisting data-driven algorithmic decisions~\cite{DBLP:journals/datamine/Zliobaite17} in employment, banking, education, and so on.
The societal impact of these algorithmic decisions has raised concerns about \emph{fairness}.
Several pieces of evidence have indicated that an algorithm, if left unchecked, could ``\emph{systematically and unfairly discriminate against certain individuals or groups of individuals in favor of others}''~\cite{Friedman:1996} and such discrimination might be further passed to algorithmic decisions~\cite{Chouldechova:2020, Stoyanovich:2020}.
For RMS/HMS problems, the selected subset would be biased by over-representing particular groups while under-representing the others.
As an illustrative example, Table~\ref{tab:example} shows a database of eight applicants, each of whom is denoted by two numerical attributes (\ie \emph{LSAT} and \emph{GPA}) and two demographic attributes (\ie \emph{Gender} and \emph{Race}), in the Law School Admission Council (LSAC) database\footnote{\url{http://www.seaphe.org/databases.php}}.
Typically, the combination of \emph{LSAT} and \emph{GPA} is used to score and rank the applicants for admission decisions.
Since all the applicants are in the skyline but the quota for admission is limited, we should select a subset of $k = 3$ applicants to admit from the database to best represent all combinations of both attributes.
If an HMS algorithm is used for this task, the subset $\{a_4,a_5,a_7\}$ will be returned since it achieves the highest minimum happiness ratio of $0.9984$ among all size-$3$ sets of tuples.
We observe that the solution only contains \emph{male} applicants, which implies discrimination against \textit{female} applicants because they constitute half of all applicants in the database and are equally eligible as none of the selected applicants can dominate them.
Such a biased selection would further lead to unfairness in education opportunities.
Therefore, ensuring a fair representation of each group in the solution is necessary for RMS and HMS.
However, to our knowledge, none of the existing algorithms have taken group-level fairness into account.

To fill the gap, we propose and investigate a fair variant of HMS in this paper.
Specifically, we introduce an additional \emph{group fairness constraint}, a well-established definition of fairness widely adopted in many real-world problems including top-$k$ selection~\cite{Stoyanovich:2018,DBLP:conf/fat/MehrotraC21}, data summarization~\cite{DBLP:conf/icml/KleindessnerAM19,Celis:2018}, submodular maximization~\cite{DBLP:conf/nips/HalabiMNTT20,Wang:2021a}, and elsewhere~\cite{DBLP:conf/ijcai/CelisHV18,Moumoulidou:2021,DBLP:conf/wsdm/MaGTW22}, into HMS.
We consider that the tuples in a database $\mathcal{D}$ are partitioned into $C$ disjoint groups $\{\mathcal{D}_1, \mathcal{D}_2, \ldots, \mathcal{D}_C\}$ by a certain sensitive attribute, \eg \emph{gender} or \emph{race}.
To fairly represent each group $\mathcal{D}_c$ for $c \in [C]$, we restrict that the number $k_c$ of tuples selected from $\mathcal{D}_c$ falls within predefined lower and upper bounds, \ie $l_c \leq k_c \leq h_c$. Consequently, we define the \emph{fair happiness maximization set} (FairHMS) problem that asks for a set of $k$ tuples from $\mathcal{D}$ to maximize the minimum happiness ratio (MHR) while ensuring the satisfaction of the above-defined group fairness constraint.
Similar to vanilla HMS, FairHMS is NP-hard in three or higher dimensions.

We propose an exact interval cover-based algorithm called \AlgTwoD for FairHMS on two-dimensional databases.
In the \AlgTwoD algorithm, the space of nonnegative linear utility functions in $\mathbb{R}^{2}_{+}$ is denoted as an interval $[0, 1]$ based on their angles from $0$ to $\frac{\pi}{2}$~\cite{Chester:2014,Asudeh:2017,Cao:2017}.
Then, based on geometric transformations in the plane, FairHMS is converted into the problem of determining whether there is a feasible set of points under the fairness constraint whose corresponding sub-intervals fully cover the interval $[0, 1]$.
By solving decision problems with dynamic programs, we can find the optimal solution of FairHMS in polynomial time when $C = O(1)$.

For databases in $\mathbb{R}^d_{+}$ with $d \geq 2$, since the happiness ratio with respect to any non-negative linear utility function is submodular, and the fairness constraint is a case of \emph{matroid constraints}~\cite{Korte2012, DBLP:conf/nips/HalabiMNTT20}, we transform FairHMS into a multi-objective submodular maximization problem under a matroid constraint based on the notion of $\delta$-nets.
We propose the \AlgBG algorithm that provides a bicriteria approximate solution for FairHMS based on existing methods for multi-objective submodular maximization~\cite{Anari:2019, Krause:2008}.
Furthermore, we exploit an adaptive sampling strategy to reduce the sizes of $\delta$-nets and thus improve the efficiency of the \AlgBG algorithm.

Finally, we conduct extensive experiments to evaluate the performance of our proposed algorithms on real-world and synthetic datasets.
It is shown that the solutions of existing RMS/HMS algorithms do not satisfy the group fairness constraints in almost all cases if left unchecked.
Moreover, the \emph{price of fairness} indicated by the decreases in MHRs caused by fairness constraints is low in most cases.
Our proposed algorithms provide solutions of higher quality than existing algorithms adapted for ensuring fairness.
Meanwhile, they have comparable efficiencies to existing algorithms and scale to massive datasets with millions of tuples.

To sum up, the main contributions of this paper are as follows.
\begin{itemize}
  \item We formally define FairHMS and show its NP-hardness in $\mathbb{R}^{d}_{+}$ when $d \geq 3$. (Section~\ref{sec:problem})
  \item We propose an exact algorithm called \AlgTwoD for FairHMS in $\mathbb{R}^{2}_{+}$. (Section~\ref{sec:2d})
  \item We propose a bicriteria approximation algorithm called \AlgBG for FairHMS in $\mathbb{R}^{d}_{+}$ for any $d \geq 2$ and a practical strategy for speeding up \AlgBG. (Section~\ref{sec:algorithm})
  \item We compare our proposed algorithms against the state-of-the-art RMS and HMS algorithms to show their effectiveness, efficiency, and scalability. (Section~\ref{sec:experiments})
\end{itemize}

\section{Preliminaries}
\label{sec:problem}

In this section, we first introduce the basic concepts, then define the FairHMS problem, and finally discuss the properties of FairHMS.

\smallskip\noindent\textbf{Data Model:}
We consider a database $\mathcal{D}$ of $n$ tuples, each of which has $d$ nonnegative numeric attributes and is represented as a point $p = (p[1], p[2], \ldots, p[d]) \in \mathbb{R}^d_{+}$. Without loss of generality, we assume that each numeric attribute is normalized to the range $[0,1]$ and larger values are preferred. This assumption is mild because of the scale invariance of regret/happiness ratios~\cite{Nanongkai:2010}.

\smallskip\noindent\textbf{Scoring Model:}
We focus on the class of nonnegative linear functions for score computation, where a $d$-dimensional utility vector $u = (u[1], u[2], \ldots, u[d]) \in \mathbb{R}^d_{+}$ represents a user's preference over the $d$ attributes. Note that linear functions are adopted by most of the existing literature on RMS/HMS problems~\cite{Nanongkai:2010,Peng:2014,Chester:2014,Asudeh:2017,Xie:2018,Asudeh:2019,Xie:2019,Shetiya:2020,Xie:2020,Wang:2021a,Agarwal:2017} for modeling users' preferences. Specifically, given a point $p$ and a utility function $f_u$ with regard to a vector $u$, the utility of point $p$ is computed as the inner product of $u$ and $p$, \ie $f_u(p) = \sum_{i=1}^{d} u[i] \cdot p[i]$. Because of the scale invariance of regret/happiness ratios~\cite{Nanongkai:2010}, we assume that the utility vector is rescaled to be a unit vector based on the $l_1$-norm or $l_2$-norm, \ie $\sum_{i=1}^{d} u[i] = 1$ or $\sqrt{\sum_{i=1}^{d} u^2[i]} = 1$. Note that the class of all nonnegative linear utility vectors after the $l_2$-normalization corresponds to the unit $(d-1)$-sphere $\mathbb{S}^{d-1}_{+}$.

\smallskip\noindent\textbf{Minimum Happiness Ratio~\cite{Xie:2020}:}
Given a subset $S \subseteq \mathcal{D}$ of points and a utility vector $u$, the \emph{happiness ratio} (HR) of $S$ over $\mathcal{D}$ w.r.t.~$u$ is defined as
$hr(u, S, \mathcal{D}) = \frac{\max_{p \in S}f_u(p)}{\max_{p \in \mathcal{D}}f_u(p)}$.
Intuitively, the happiness ratio measures how satisfied a user is when she/he sees the tuple with the highest score from the subset $S$ instead of the tuple with the highest score from the database $\mathcal{D}$. We need to find a subset that achieves a good happiness ratio for all utility functions to satisfy all possible users. The \emph{minimum happiness ratio} (MHR) is defined to measure the minimum of the happiness ratios achieved by $S$ over $\mathcal{D}$ in the worst case, \ie $mhr(S, \mathcal{D}) = \min_{u \in \mathbb{S}^{d-1}_{+}} hr(u, S, \mathcal{D})$.
When the context is clear, we will drop $\mathcal{D}$ from the notations of HR and MHR. By definition, $hr(u, S)$ and $mhr(S)$ are in the range $[0,1]$ for any subset $S$ and vector $u$.

\smallskip\noindent\textbf{Fairness Model:}
We adopt a well-established model for group fairness in the existing literature~\cite{DBLP:conf/nips/HalabiMNTT20,Stoyanovich:2018,DBLP:conf/fat/MehrotraC21,DBLP:conf/ijcai/CelisHV18}. In addition to the numeric attributes as discussed in the data model, the tuples in the database $\mathcal{D}$ are also associated with one or more categorical attributes. When the tuples denote individuals, each categorical attribute often corresponds to a protected feature such as \emph{gender} or \emph{race}. Suppose that $\mathcal{A}$ is the domain of an attribute $A$ with $C = |\mathcal{A}|$. The database $\mathcal{D}$ is partitioned into $C$ disjoint groups $\{\mathcal{D}_1, \mathcal{D}_2, \ldots, \mathcal{D}_C\}$ by attribute $A$ with $\mathcal{D} = \bigcup_{c=1}^{C} \mathcal{D}_c$. The partitioning scheme can also be generalized to multiple attributes $A_1, \ldots, A_r$, where $\mathcal{D}$ is divided into $C = \prod_{j=1}^{r} C_j$ groups, each of which corresponds to a unique combination of values for all the $r$ attributes. For example, the tuples in Table~\ref{tab:example} can be partitioned into $2$, $4$, and $8$ groups by \emph{gender}, \emph{race}, and both of them, respectively.

Given a database $\mathcal{D}$ with $C$ groups $\{\mathcal{D}_1, \mathcal{D}_2, \ldots, \mathcal{D}_C\}$, the group fairness constraint is defined by restricting that the number of tuples selected from each group $\mathcal{D}_c$ is between a lower bound $l_c$ and an upper bound $h_c$ in $\mathbb{Z}_{+}$ with $l_c \leq h_c$. Formally, a subset $S \subseteq \mathcal{D}$ is considered fair if and only if $l_c \leq |S \cap \mathcal{D}_c| \leq h_c$ for every $c \in [C]$. Furthermore, we limit the total number of tuples in the result set $S$ to a positive integer $k \in \mathbb{Z}_{+}$. In practice, the values of $l_c$ and $h_c$ for each $c \in [C]$ can be specified according to different concepts of \emph{fairness}. Two typical examples~\cite{DBLP:conf/nips/HalabiMNTT20} are to set $l_c= \lfloor (1 - \alpha) k \cdot \frac{|\mathcal{D}_c|}{|\mathcal{D}|} \rfloor$ and $h_c= \lceil (1 + \alpha) k \cdot \frac{|\mathcal{D}_c|}{|\mathcal{D}|} \rceil$ for \emph{proportional representation} and to set $l_c = \lfloor \frac{(1 - \alpha) k}{C} \rfloor$ and $h_c = \lceil \frac{(1 + \alpha) k}{C} \rceil$ for \emph{balanced representation}, given a parameter $\alpha \in (0, 1)$.

An important notion closely related to the above-defined fairness constraint is \emph{matroid}~\cite{Korte2012,DBLP:conf/nips/HalabiMNTT20}. Specifically, a matroid $\mathcal{M}$ is defined on a finite ground set $\mathcal{D}$ and a family $\mathcal{I}$ of subsets of $\mathcal{D}$ called \emph{independent sets} with the following properties: \emph{(i)} $\emptyset \in \mathcal{I}$; \emph{(ii)} If $S_1 \subset S_2 \subseteq \mathcal{D}$ and $S_2 \in \mathcal{I}$, then $S_1 \in \mathcal{I}$; \emph{(iii)} If $S_1, S_2 \in \mathcal{I}$ and $|S_2| > |S_1|$, then $\exists p \in S_2 \backslash S_1$ such that $S_1 \cup \{p\} \in \mathcal{I}$. Following the analysis in~\cite{DBLP:conf/nips/HalabiMNTT20}, we define a matroid $\mathcal{M} = (\mathcal{D}, \mathcal{I})$ based on the group fairness constraint, where the family of independent sets is
\begin{displaymath}
\mathcal{I} = \Big\{ S \subseteq \mathcal{D}: \sum_{c=1}^{C} \max\{|S \cap \mathcal{D}_c|, l_c\} \leq k \wedge |S \cap \mathcal{D}_c| \leq h_c, \forall c \in [C] \Big\}.
\end{displaymath}
As shown in~\cite{DBLP:conf/nips/HalabiMNTT20}, all feasible size-$k$ subsets of $\mathcal{D}$ under the group fairness constraint are in $\mathcal{I}$; and for any set $S \in \mathcal{I}$ with $|S| < k$, there exists at least one superset of $S$ satisfying the group fairness constraint. According to the above results, the group fairness constraint will be treated as a special case of matroid constraints in our proposed algorithms.

\smallskip\noindent\textbf{Problem Formulation:}
Next, we formally define the fair happiness maximizing set (FairHMS) problem based on the above notions.
\begin{definition}[FairHMS]
\label{def:problem}
  Given a database $\mathcal{D}$ with $C$ disjoint groups $\{\mathcal{D}_1, \ldots, \mathcal{D}_C\}$, a positive integer $k$, and a set of $2C$ positive integers $l_1,\ldots,l_C$ and $h_1,\ldots,h_C$, the FairHMS problem asks for a subset $S \subseteq \mathcal{D}$ of size $k$ such that $mhr(S)$ is maximized and $l_c \leq |S \cap \mathcal{D}_c| \leq h_c$ for each group $\mathcal{D}_c$. Formally,
  \begin{displaymath}
    S^* = \argmax_{S \subseteq \mathcal{D} : |S| = k} mhr(S) \; \mathrm{s.t.} \; l_c \leq |S \cap \mathcal{D}_c| \leq h_c, \forall c \in [C]
  \end{displaymath}
\end{definition}
Here, we use $S^*$ and $\mathtt{OPT} = mhr(S^*)$ to denote the optimal solution for FairHMS and its MHR on $\mathcal{D}$, respectively.

\begin{example}
  \label{ex:def}
  Let us consider the database $\mathcal{D}$ in Table~\ref{tab:example}, where the class of utility functions is defined as the linear combinations of \emph{LSAT} and \emph{GPA} scores. An HMS with $k = 2$ returns $S_0 = \{a_4, a_5\}$ with $mhr(S_0) = 0.9846$. Nevertheless, a FairHMS, for which the fairness constraint is imposed on \emph{gender} with $l_c = h_c = 1$ for $c = 1,2$, does not return $S_0$ because $S_0$ is no longer a valid solution as both tuples are from the \emph{male} group. Alternatively, the optimal solution for this FairHMS problem is $S^* = \{a_5, a_8\}$ with $mhr(S^*) = 0.9834$.
\end{example}

\noindent\textbf{Properties of FairHMS:}
Since HMS is a special case of FairHMS when $C = 1$ and $l_1 = k = h_1$, and HMS has been shown to be NP-hard~\cite{Agarwal:2017,Chester:2014,Cao:2017} in $\mathbb{R}^{d}_{+}$ when $d \geq 3$, FairHMS is also NP-hard in $\mathbb{R}^{d}_{+}$ when $d \geq 3$. Note that the happiness ratio function $hr(\cdot, \cdot)$ is monotone\footnote{A set function $f(\cdot)$ is monotone iff $f(S_1) \leq f(S_2)$ for any $S_1 \subseteq S_2 \subseteq \mathcal{D}$.} and submodular\footnote{A set function $f(\cdot)$ is submodular iff $f(S_{1}\cup \{p\})-f(S_{1}) \geq f(S_{2}\cup \{p\})-f(S_{2})$ for any $S_1 \subseteq S_2 \subseteq \mathcal{D}$ and $p \in \mathcal{D} \setminus S_{2}$.}~\cite{Qiu:2018,Xie:2020,Luenam:2021} as indicated in Lemma~\ref{lem:sub}.
\begin{lemma}[\cite{Qiu:2018}]
  \label{lem:sub}
  The happiness ratio function $hr(u, \cdot) : 2^{\mathcal{D}} \rightarrow \mathbb{R}_{+}$ is monotone and submodular for any vector $u \in \mathbb{R}^{d}_{+}$.
\end{lemma}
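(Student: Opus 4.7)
The plan is to reduce the claim to the well-known fact that the maximum-of-values function is monotone submodular, then observe that dividing by a positive constant preserves both properties.

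First I would fix an arbitrary utility vector $u \in \mathbb{R}^{d}_{+}$ and let $M := \max_{p \in \mathcal{D}} f_u(p)$. Since we may assume without loss of generality that there is at least one point $p \in \mathcal{D}$ with $f_u(p) > 0$ (otherwise $hr(u,\cdot)$ is identically $1$ by convention and the lemma is trivial), $M$ is a strictly positive constant that does not depend on the argument $S$. Defining $g(S) := \max_{p \in S} f_u(p)$, we have $hr(u, S) = g(S)/M$, so it suffices to prove that $g$ is monotone and submodular on $2^{\mathcal{D}}$.

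Monotonicity of $g$ is immediate: if $S_1 \subseteq S_2$, the maximum on the right is taken over a superset of the terms on the left, hence $g(S_1) \leq g(S_2)$. For submodularity, I would compute the marginal gain of adding a single point $p \in \mathcal{D} \setminus S$ in closed form:
\begin{displaymath}
g(S \cup \{p\}) - g(S) = \max\bigl\{ f_u(p),\, g(S) \bigr\} - g(S) = \max\bigl\{ 0,\, f_u(p) - g(S) \bigr\}.
\end{displaymath}
Now take any $S_1 \subseteq S_2 \subseteq \mathcal{D}$ and any $p \in \mathcal{D} \setminus S_2$. By the monotonicity established above, $g(S_1) \leq g(S_2)$, hence $f_u(p) - g(S_1) \geq f_u(p) - g(S_2)$, and applying $\max\{0, \cdot\}$ (a nondecreasing function) preserves this inequality. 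Therefore
\begin{displaymath}
g(S_1 \cup \{p\}) - g(S_1) \geq g(S_2 \cup \{p\}) - g(S_2),
\end{displaymath}
which is exactly the submodularity of $g$.

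Finally, since $hr(u, S) = g(S)/M$ and $M > 0$ is a constant, both monotonicity and submodularity transfer from $g$ to $hr(u, \cdot)$, completing the proof. There is no real technical obstacle here; the only care needed is the edge case $M = 0$ (handled by the convention above) and ensuring the marginal-gain identity is written correctly so that the monotonicity of $\max\{0,\cdot\}$ can be invoked to conclude.
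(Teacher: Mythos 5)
Your proof is correct and is the standard argument: write $hr(u,S) = g(S)/M$ with $g(S)=\max_{p\in S}f_u(p)$ and $M=\max_{p\in\mathcal{D}}f_u(p)>0$, observe that the marginal gain of $g$ is $\max\{0, f_u(p)-g(S)\}$, which is nonincreasing in $S$ by monotonicity, and note that dividing by a positive constant preserves both properties. The paper itself gives no proof of this lemma --- it simply cites~\cite{Qiu:2018} --- and your argument is precisely the expected one; the only cosmetic detail worth adding is the convention $g(\emptyset)=0$ so that the marginal-gain identity also covers $S=\emptyset$ (which works since $f_u(p)\geq 0$ for all $p$ when $u\in\mathbb{R}^d_{+}$).
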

Despite the submodularity of $hr(\cdot, \cdot)$, the minimum happiness ratio function $mhr(\cdot)$ is monotone but not submodular because the minimum of more than one submodular function is not submodular anymore~\cite{Krause:2014}. Therefore, in subsequent sections, we will propose an exact algorithm and bicriteria approximation algorithms for FairHMS in $\mathbb{R}^{2}_{+}$ and $\mathbb{R}^{d}_{+}$ ($d \geq 2$), respectively.

\section{Exact Algorithm in 2D}
\label{sec:2d}

In this section, we present our exact algorithm called \AlgTwoD for FairHMS on two-dimensional databases by transforming a FairHMS instance into several instances of \emph{interval cover}.

\subsection{Reduction from FairHMS to Interval Cover}
\label{subsec:reduction}

We first consider the decision version of FairHMS: \emph{given a threshold $\tau \in [0, 1]$, does there exist a feasible set $S$ of size $k$ such that $S$ satisfies the group fairness constraint and $mhr(S)$ is at least $\tau$?}
Recall that $mhr(S)$ is in the range $[0, 1]$.
By solving its decision version, FairHMS is transformed into the problem of finding the largest threshold $\tau$ for which the answer to the decision problem is yes. Next, we will show how this decision problem is formulated as an interval cover problem based on the geometric properties of HMS.

For FairHMS in $\mathbb{R}^2_{+}$, the space of all nonnegative linear utility functions is essentially one-dimensional, which can be denoted by the interval $[0,1]$. We consider that each utility vector $u = (u[1], u[2])$ is rescaled to have a unit $l_1$-norm, i.e., $u[1] + u[2] =1$, and denoted with a parameter $\lambda \in [0, 1]$ as $u = (\lambda, 1-\lambda)$ by setting $\lambda = u[1]$. Given a two-dimensional point $p \in \mathbb{R}^{2}_{+}$, its score for the utility function w.r.t.~$\lambda$ is $ f_u(p) = \lambda p[1] + (1 - \lambda) p[2] = p[2] + (p[1] - p[2]) \lambda $. That is, the scores of each point for all utility functions correspond to a line segment in the plane. For example, Figure~\ref{fig:1} presents six line segments w.r.t.~points $\{p_1, \ldots, p_6\}$. Then, the four points $\{p_1, \ldots, p_4\}$ achieve the highest scores for some utility functions (or geometrically, they are vertices on the convex hull), and their line segments form the upper envelope with minimum happiness ratio $1$ (\ie no regret). Then, we define the $\tau$-fraction of the upper envelope as the $\tau$-envelope ($\tau$-$env$ for short). By definition, any point whose corresponding line segment is on or above the $\tau$-envelope at position $\lambda \in [0, 1]$ achieves a happiness ratio of at least $\tau \in (0,1)$ for utility vector $u = (\lambda, 1-\lambda)$. For a line segment $L(p)$ of a point $p$, we use $I_{\tau}(p) = \{\lambda \, | \, L_{\lambda}(p) \geq \tau\text{-}env \wedge \lambda \in [0,1]\}$ to denote the sub-interval of $[0,1]$ where $L(p)$ is on or above the $\tau$-envelope and thus point $p$ achieves happiness ratios of at least $\tau$ for the corresponding utility functions. The sub-interval $I_{\tau}(p)$ for any point $p$ and threshold $\tau$ can be computed from the intersections of $L(p)$ and the line segments in the $\tau$-envelope. Finally, the decision version of FairHMS is reduced to an interval cover problem, which aims to find a feasible subset of intervals whose corresponding points satisfy the fairness constraint to fully cover $[0,1]$. Suppose the interval cover problem can be solved optimally. In that case, the optimal solution for FairHMS can be found by performing a binary search on a sorted array of all possible values of minimum happiness ratios, which can be computed based on the results for two-dimensional RMS in~\cite{Asudeh:2017, Cao:2017}.

\begin{figure}[t]
  \centering
  \begin{subfigure}[b]{0.23\textwidth}
    \centering
    \includegraphics[height=1.25in]{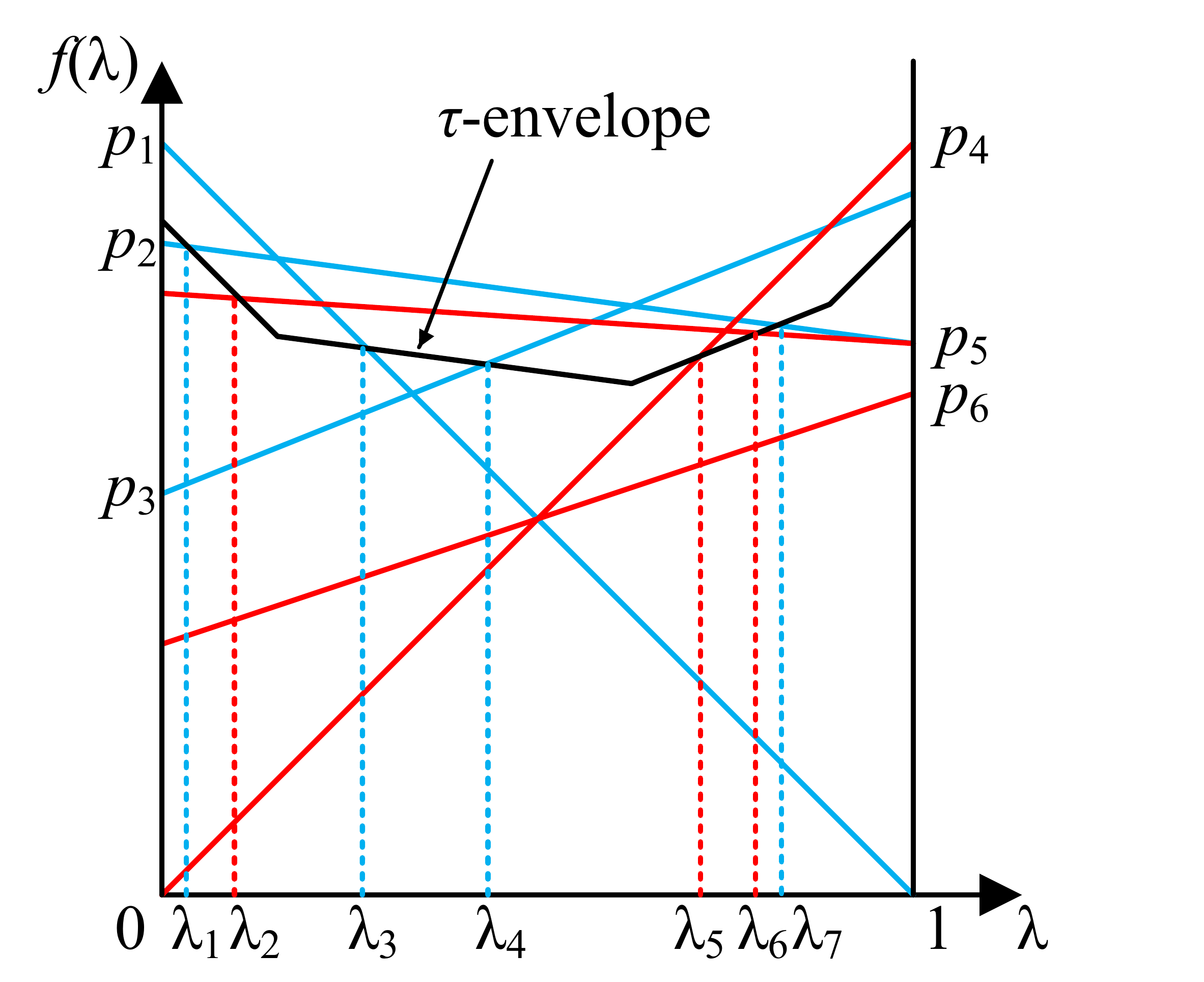}
    \caption{Transform points to intervals}
    \label{fig:2D}
  \end{subfigure}
  \hfill
  \begin{subfigure}[b]{0.24\textwidth}
    \centering
    \includegraphics[height=1.25in]{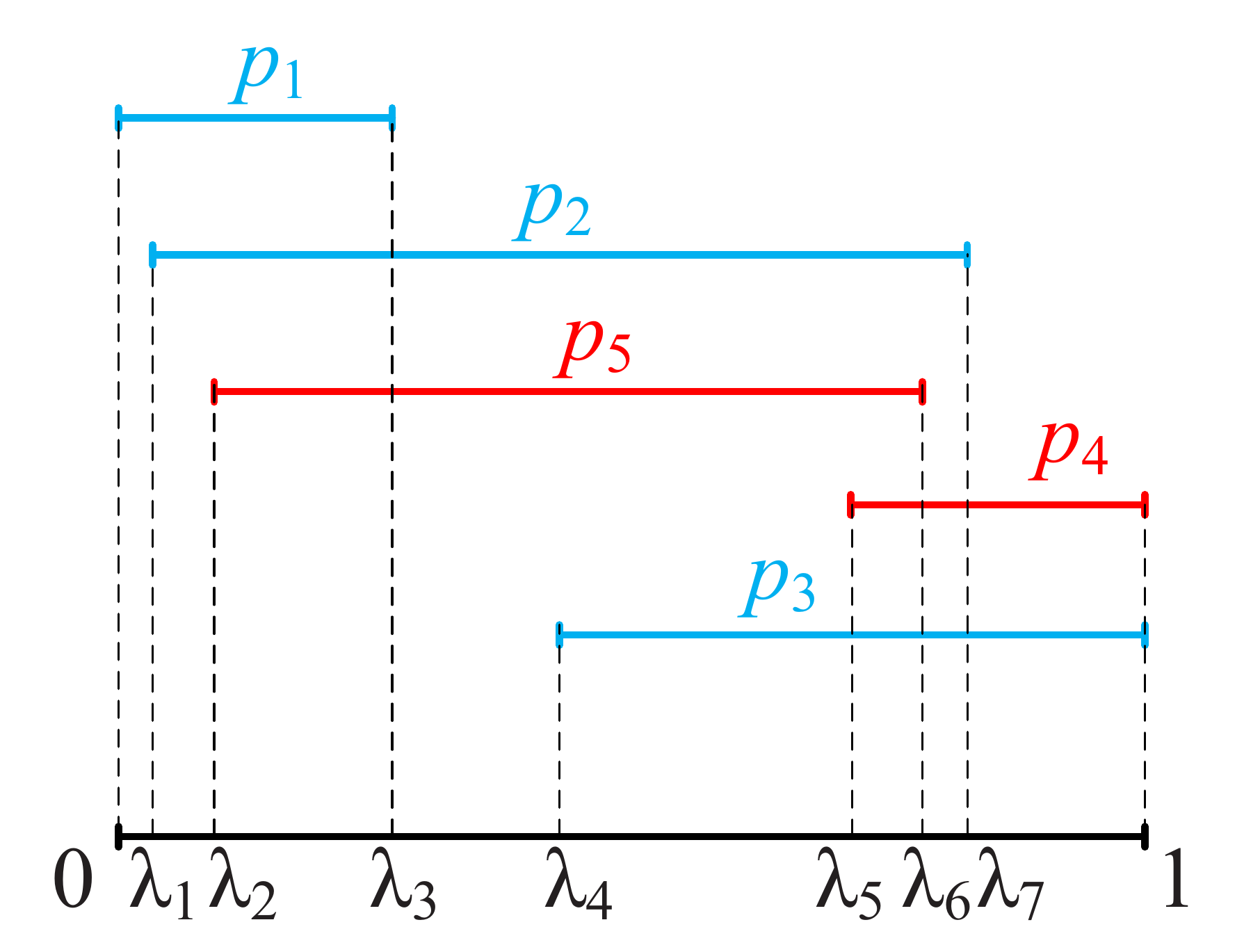}
    \caption{Reduce FairHMS to \textsc{IntCov}}
    \label{fig:2DCov}
  \end{subfigure}
  \caption{Illustration of the \AlgTwoD algorithm in $\mathbb{R}^{2}$. We show the transformation from points to intervals and draw a $\tau$-envelope w.r.t.~a minimum happiness ratio $\tau = 0.9$ in Figure~(a). We then convert the decision version of FairHMS when $\tau = 0.9$ into an interval cover problem in Figure~(b): for FairHMS with $k = 3$ and $l_c = 1$, $h_c = 2$, $S = \{p_1, p_4, p_5\}$ covers the interval $[0,1]$ and thus the answer is \emph{yes} when $\tau = 0.9$.}
  \Description{Illustration of the 2D algorithm.}
  \label{fig:1}
\end{figure}

\subsection{Exact Two-Dimensional Algorithm}
\label{subsec:2d:alg}

In Section~\ref{subsec:reduction}, we have considered the reduction from FairHMS to \emph{interval cover} and the computation of an interval for a given point $p$ and a threshold $\tau$. There are two problems remaining for an exact two-dimensional FairHMS algorithm. First, we should identify all possible values of MHRs from which the optimal one can always be found. Second, we need to determine whether a feasible interval set exists under the fairness constraint to cover the interval $[0,1]$ for a given value of MHR. Next, we present our solutions to both problems and describe our exact algorithm called \AlgTwoD for FairHMS in $\mathbb{R}^2_{+}$ in Algorithm~\ref{alg:2d}.

\begin{algorithm}[t]
  \small
  \caption{\AlgTwoD}
  \label{alg:2d}
  \KwIn{Dataset $\mathcal{D} \subseteq \mathbb{R}^{2}_{+}$ with $\mathcal{D} = \bigcup_{c=1}^{C} \mathcal{D}_c$; solution size $k \in \mathbb{Z}^{+}$; bounds $l_1, \ldots, l_C \in \mathbb{Z}^{+}$ and $h_1, \ldots, h_C \in \mathbb{Z}^{+}$}
  \KwOut{A feasible set $S^* \subseteq \mathcal{D}$ for FairHMS}
  Initialize $\mathcal{H} \gets \emptyset$\;\label{ln:2d:init}
  \ForEach{$p \in \mathcal{D}$}
  {
    Add $p[1]$ and $p[2]$ to $\mathcal{H}$\;\label{ln:2d:cand:1}
  }
  \ForEach{$p_i, p_j \in \mathcal{D}$ s.t.~$p_i \neq p_j$\label{ln:2d:cand:2:s}}
  {
    Let $u'$ be the vector $u \in \mathbb{S}^1$ s.t.~$f_u(p_i) = f_u(p_j)$\;
    \If{$u' \in \mathbb{S}^1_+$}
    {
      Add $hr(u', \{p_i, p_j\}) = \frac{f_{u'}(p_i)}{\max_{p \in \mathcal{D}} f_{u'}(p)}$ to $\mathcal{H}$\;\label{ln:2d:cand:2:t}
    }
  }
  Sort $\mathcal{H}$ ascendingly and set $l=1, h=|\mathcal{H}|$\;
  \While{$\mathcal{H}[h] > \mathcal{H}[l]$\label{ln:2d:bb}}
  {
    $cur = (l + h) / 2$ and $\tau = \mathcal{H}[cur]$\;
    Compute $I_{\tau}(p)$ for each $p \in \mathcal{D}$\;
    $S \gets$ \textsc{DynProg}$(\mathcal{I}_{\tau} = \{I_{\tau}(p) : p \in \mathcal{D}\})$\;
    \uIf{$S = \emptyset$}
    {
      $h \gets cur - 1$\;
    }
    \Else
    {
      $S^* \gets S$ and $l \gets cur + 1$\;\label{ln:2d:be}
    }
  }
  \Return{$S^*$}\;\label{ln:2d:return}
\end{algorithm}
\begin{algorithm}[t]
  \small
  \caption{\textsc{DynProg}$(\mathcal{I}_{\tau})$}
  \label{alg:dp}
  Create a stack $\mathcal{ST}$ and add a state $\mathtt{IC}[h_1, \ldots, h_C]$ to $\mathcal{ST}$\;\label{ln:dp:init}
  Initialize a state $\mathtt{IC}[0, \ldots, 0] = 0$ and set it as \emph{visited}\;\label{ln:dp:empty}
  \While{$\mathcal{ST} \neq \emptyset$}
  {
    $\mathtt{IC}[k_1, \ldots, k_C] \gets \mathcal{ST}$.top()\;
    \uIf{$\mathtt{IC}[k_1,\ldots, k_C]$ is not visited\label{ln:dp:new:s}}
    {
      Set $\mathtt{IC}[k_1, \ldots, k_C]$ as \emph{visited}\;
      $\mathcal{ST}$.push$(\mathtt{IC}[k_1,\ldots , k_c - 1, \ldots, k_C])$, $\forall c \in [C]$;\label{ln:dp:new:t}
    }
    \Else
    {
      $\mathtt{IC}[k_1, \ldots, k_C] \gets \mathcal{ST}$.pop()\;
      \If{$\mathtt{IC}[k_1, \ldots, k_{C}]$ is infeasible\label{ln:dp:inf:s}}
      {
        \textbf{continue};\label{ln:dp:inf:t}
      }
      Update $\mathtt{IC}[k_1, \ldots, k_{C}]$ according to Equation~\ref{eq:dp}\;\label{ln:dp:update}
      \If{$\mathtt{IC}[k_1, \ldots, k_{C}]=1$\label{ln:dp:yes:s}}
      {
        \Return{the set $S$ w.r.t.~$\mathtt{IC}[k_1, \ldots, k_{C}]$}\;\label{ln:dp:yes:t}
      }
    }
  }
  \Return{$\emptyset$}\;\label{ln:dp:no}
\end{algorithm}

To resolve the first problem, \ie finding all possible values of the MHR in $[0, 1]$, we are based on an important observation from~\cite[Theorem 2]{Asudeh:2017} that, for any subset $S \subseteq \mathcal{D}$, $mhr(S)$ is always equal to $hr(u, S)$ for either $u = (1, 0)$, or $u = (0, 1)$, or the vector $u$ where $f_u(p_i) = f_u(p_j)$ for each pair of points $p_i, p_j \in S$.
Therefore, we only consider the happiness ratios of each point or pair of points in all the above cases.
Specifically, we initialize an array of candidates for MHR as $\mathcal{H} = \emptyset$ (Line~\ref{ln:2d:init}).
For each $p \in \mathcal{D}$, we add $p[1]$ and $p[2]$ to $\mathcal{H}$ (Line~\ref{ln:2d:cand:1}). For each pair $\langle p_i, p_j \rangle$ of points in $\mathcal{D}$, we compute the vector $u$ such that $f_u(p_i) = f_u(p_j)$. If $u \in \mathbb{S}^{1}_{+}$, we compute the value of $hr(u, \{p_i, p_j\})$ and add it to $\mathcal{H}$ (Lines~\ref{ln:2d:cand:2:s}--\ref{ln:2d:cand:2:t}).
Then, we sort the array $\mathcal{H}$ of size $O(n^2)$ for binary search.
For each candidate $\tau$, the interval for each point is computed according to Section~\ref{subsec:reduction}, and the decision problem is solved by dynamic programming, as will be presented subsequently. Based on the answer to the decision problem, we narrow the range of the binary search by half and perform the above procedure again (Lines~\ref{ln:2d:bb}--\ref{ln:2d:be}). After the binary search is finished, the optimal solution to FairHMS is found (Line~\ref{ln:2d:return}).

To resolve the second problem (\ie \emph{fair interval cover}), we propose a dynamic programming algorithm in Algorithm~\ref{alg:dp}.
As shown in~\cite{Kleinberg:2006}, the interval cover problem can be solved optimally by a simple greedy algorithm, which starts from an interval beginning at $0$ with the rightmost ending and then greedily adds the interval whose start point is within the covered interval and end point is the rightmost until the interval $[0,1]$ is fully covered. However, the greedy algorithm is not directly applicable to our problem because it cannot guarantee the fulfillment of the fairness constraint. Thus, we propose a dynamic program based on the greedy algorithm to solve the fair interval cover problem. We define a state $\mathtt{IC}[k_1, \ldots, k_C]$ with $C$ parameters in the dynamic program, where $k_c$ denotes the number of tuples from $\mathcal{D}_c$ in the current solution, and its value is the end point of the covered interval.
We use a stack $\mathcal{ST}$ in the recursive procedure to maintain and visit the states. Initially, the state $\mathtt{IC}[h_1, \ldots, h_C]$ with the upper bounds of the fairness constraint is pushed to $\mathcal{ST}$ (Line~\ref{ln:dp:init}). First, we set the start point to $0$, \ie $\mathtt{IC}[0, \ldots, 0] = 0$ for an empty set $\emptyset$ with covered interval $[0, 0]$ (Line~\ref{ln:dp:empty}). Then, at each step, the top state is popped from $\mathcal{ST}$. A state $\mathtt{IC}[k_1, \ldots, k_C]$ is possibly transited from at most $C$ states $\mathtt{IC}[k_1, \ldots, k_c - 1, \ldots, k_C]$ for each $c \in [C]$. If $\mathtt{IC}[k_1, \ldots, k_C]$ is neither the initial state nor visited, its $C$ predecessors are pushed to $\mathcal{ST}$ (Lines~\ref{ln:dp:new:s}--\ref{ln:dp:new:t}).
Otherwise, we compute the value of $\mathtt{IC}[k_1, \ldots, k_C]$ according to its predecessors by the greedy strategy (Line~\ref{ln:dp:update}).
If there is an interval $I_{\tau}(p)$ for some $c \in [C]$ and $p \in \mathcal{D}_c$ whose start point is at most $\mathtt{IC}[k_1, \ldots, k_c - 1, \ldots, k_C]$ and end point is the rightmost, then $\mathtt{IC}[k_1, \ldots, k_c, \ldots, k_C]$ will be set to the end point of $I_{\tau}(p)$, \ie $p$ is added to the solution.
Formally,
\begin{multline}\label{eq:dp}
  \mathtt{IC}[k_1, \ldots, k_c, \ldots, k_C] = \\ \max_{c \in [C]} \Big\{ \max_{p \in \mathcal{D}_c \; : \; I^{-}_{\tau}(p) \leq \mathtt{IC}[k_1, \ldots, k_c - 1, \ldots, k_C]} I^{+}_{\tau}(p)\Big\}
\end{multline}
where $I^{-}_{\tau}(p)$ and $I^{+}_{\tau}(p)$ denote the left and right bounds of $I_{\tau}(p)$, respectively.
Moreover, a state $\mathtt{IC}[k_1, \ldots, k_C]$ is infeasible if $\sum_{c=1}^{C}$ $\max(l_c, k_c) > k$. Such a state is skipped directly because it cannot provide any feasible solution (Lines~\ref{ln:dp:inf:s}--\ref{ln:dp:inf:t}). The recursion procedure will terminate if either $\mathtt{IC}[k_1, \ldots, k_C] = 1$, \ie a feasible solution that covers $[0,1]$ is found and the decision problem is answered by ``\emph{yes}'' (Lines~\ref{ln:dp:yes:s}--\ref{ln:dp:yes:t}) or $\mathcal{ST}$ is empty and the decision problem is answered by ``\emph{no}'' (Line~\ref{ln:dp:no}).

Next, we will explain why our \AlgTwoD algorithm is optimal for the FairHMS problem.
\begin{theorem}
  \textnormal{\AlgTwoD} provides an optimal solution to FairHMS.
\end{theorem}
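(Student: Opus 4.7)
The proof splits naturally into three parts: (i) the candidate set $\mathcal{H}$ provably contains $\mathtt{OPT}$, (ii) for every fixed threshold $\tau$, the sub-routine \textsc{DynProg} correctly answers the fair interval-cover decision problem, and (iii) the outer binary search returns the largest $\tau \in \mathcal{H}$ for which the decision is ``yes'', which by (i) equals $\mathtt{OPT}$.

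For (i), I would invoke the observation from \cite[Theorem 2]{Asudeh:2017} used to build $\mathcal{H}$: for any $S \subseteq \mathcal{D}$ in $\mathbb{R}^2_{+}$, the utility vector realizing $mhr(S)$ can be taken to be one of $(1,0)$, $(0,1)$, or an equalizing vector $u$ with $f_u(p_i) = f_u(p_j)$ for some $p_i, p_j \in S$. Specializing to $S = S^*$, the optimal MHR therefore equals either $p[1]$ or $p[2]$ for some $p \in S^* \subseteq \mathcal{D}$, or $hr(u', \{p_i, p_j\})$ for such an equalizing vector $u' \in \mathbb{S}^1_{+}$. All three kinds of values are enumerated in Lines~\ref{ln:2d:cand:1}--\ref{ln:2d:cand:2:t}, so $\mathtt{OPT} \in \mathcal{H}$.

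For (ii), I would first record the exact form of the reduction of Section~\ref{subsec:reduction}: a feasible $k$-subset $S$ satisfies $mhr(S) \geq \tau$ if and only if $\bigcup_{p \in S} I_{\tau}(p) = [0,1]$, because $p$ contributes happiness ratio $\geq \tau$ at $\lambda \in [0,1]$ exactly when $\lambda \in I_{\tau}(p)$. I would then prove by induction on $k_1 + \cdots + k_C$ that the DP value $\mathtt{IC}[k_1, \ldots, k_C]$ equals the maximum right endpoint of a prefix $[0, y]$ of $[0,1]$ coverable by a subset $T$ with $|T \cap \mathcal{D}_c| = k_c$ for every $c \in [C]$ (taking the value $0$ when no such covering exists and skipping the state entirely when it cannot be extended into a fair size-$k$ solution, as in Lines~\ref{ln:dp:inf:s}--\ref{ln:dp:inf:t}). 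The base case is $\mathtt{IC}[0,\ldots,0] = 0$. For the inductive step, the classical exchange argument for interval cover applies per group: given any optimal witness $T$, one may replace its rightmost-ending member in whichever group is last used without decreasing the coverage, which is exactly the recursion of Equation~\ref{eq:dp} over predecessor states $\mathtt{IC}[k_1, \ldots, k_c - 1, \ldots, k_C]$. A state evaluates to $1$ iff some feasible $(k_1, \ldots, k_C)$ covers all of $[0,1]$, and the feasibility filter $\sum_{c} \max(l_c, k_c) \leq k$ plus the cap $k_c \leq h_c$ (enforced by the initial push of $\mathtt{IC}[h_1,\ldots,h_C]$ and the downward recursion) precisely matches the matroid definition of the fairness constraint from Section~\ref{sec:problem}.

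For (iii), standard binary search on the sorted array $\mathcal{H}$ works because the decision problem is monotone in $\tau$: if a fair $k$-subset $S$ covers $[0,1]$ under threshold $\tau$, then it also covers $[0,1]$ under any $\tau' \leq \tau$, since $I_{\tau}(p) \subseteq I_{\tau'}(p)$ for each $p$. Thus the ``yes'' answers form an initial segment of $\mathcal{H}$, and Lines~\ref{ln:2d:bb}--\ref{ln:2d:be} converge to its maximum, which by (i) coincides with $\mathtt{OPT}$. The main obstacle I anticipate is formalizing the exchange argument in (ii) so that it respects the per-group counters simultaneously; the key is to perform the exchange only in the group containing the rightmost-ending member of the witness, which keeps all other counts $k_c$ unchanged and matches the $C$-way maximum on the right-hand side of Equation~\ref{eq:dp}.
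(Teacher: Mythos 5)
Your proposal is correct and follows essentially the same route as the paper's proof: completeness of the candidate set $\mathcal{H}$ via Theorem~2 of \cite{Asudeh:2017}, correctness of \textsc{DynProg} for the fair interval-cover problem, equivalence of the decision version of FairHMS with interval cover, and binary search over $\mathcal{H}$. The only difference is that you spell out details the paper merely asserts (the induction with the per-group exchange argument behind Equation~\ref{eq:dp}, and the monotonicity in $\tau$ that justifies the binary search), which is a welcome elaboration rather than a different argument.
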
  
\begin{proof}
  First, \textsc{DynProg} is optimal for the fair interval cover problem because (1) the recursion procedure in the dynamic program has visited all feasible group permutations to form a fair solution, and (2) the greedy strategy starting from $0$ ensures the optimality of the solution found for every group permutation.
  Then, the reduction from the decision version of FairHMS to the fair interval cover problem guarantees that they must have the same answer (either ``\emph{yes}'' or ``\emph{no}'') for the same value of $\tau \in [0, 1]$.
  Finally, following the analysis of~\cite{Asudeh:2017, Cao:2017}, the array $\mathcal{H}$ has included all possible values of the optimal MHR.
  Given all the above results, we conclude that the output $S^*$ of \AlgTwoD is optimal for FairHMS.
\end{proof}

\noindent\textbf{Complexity Analysis:} In Algorithm~\ref{alg:2d}, finding all candidates of MHR takes $O(n^2 \log{n})$ time. Then, the binary search performs $O(\log{n})$ iterations to find the largest $\tau$. For each candidate $\tau$, computing the intervals for points in $\mathcal{D}$ takes $O(n \log{n})$ time. Algorithm~\ref{alg:dp} has at most $\prod_{c = 1}^{C}{(1 + h_c)}$ states and the time to compute the value of each state is $O(n)$. Since it must hold that $h_c \leq k$, Algorithm~\ref{alg:dp} runs in $O(n (k^C + \log{n}))$ time. To sum up, the time complexity of Algorithm~\ref{alg:2d} is $O\big(n \log{n} (n + k^C)\big)$.

\section{Algorithms in MD}
\label{sec:algorithm}

The FairHMS problem becomes more challenging in three and higher dimensions for its NP-hardness. Thus, we focus on efficient approximation algorithms for FairHMS on multi-dimensional datasets. Subsequently, we first introduce the background on $\delta$-net~\cite{Agarwal:2017,Kumar:2018,Wang:2021b} and the transformation from FairHMS to multi-objective submodular maximization (MOSM)~\cite{Krause:2008,Anari:2019,DBLP:conf/nips/Udwani18} under a matroid constraint in Section~\ref{subsec:background}. We then present our basic algorithm \AlgBG for FairHMS based on existing methods for MOSM in Section~\ref{subsec:bigreedy}. We further propose an adaptive sampling strategy to improve the practical efficiency of \AlgBG in Section~\ref{subsec:imp}.
Note that the proofs of all lemmas are deferred to Appendix~\ref{app:proofs}.

\subsection{Background: \texorpdfstring{$\delta$}{delta}-Net and Multi-Objective Submodular Maximization}
\label{subsec:background}

The transformation from RMS to a hitting-set or set-cover problem based on the notion of $\delta$-net in~\cite{Agarwal:2017,Kumar:2018,Wang:2021b} can be similarly adapted to HMS. As shown in Section~\ref{sec:problem}, the space of all utility vectors after the $l_2$-normalization can be denoted as the $(d-1)$-dimensional unit sphere $\mathbb{S}^{d-1}_{+} = \{ u \in \mathbb{R}_{+}^{d} : ||u|| = 1\}$.
Given a parameter $\delta \in (0,1)$, a set $\mathcal{N} \subset \mathbb{S}^{d-1}_{+}$ is called a $\delta$-net~\cite{Agarwal:2017} of $\mathbb{S}^{d-1}_{+}$ iff there exists a vector $v \in \mathcal{N}$ with $\langle u, v \rangle \geq \cos\delta$ for any vector $u \in \mathbb{S}^{d-1}_{+}$, where $\langle \cdot, \cdot \rangle$ is the dot product function.
Intuitively, the idea of $\delta$-net is to approximate an infinite number of utility vectors in continuous space with a finite number of representative vectors such that the errors in terms of \emph{angular distances} (quantified by $\cos \delta$) are bounded.
A $\delta$-net of size $m = O(\frac{1}{\delta^{d-1}})$ can be computed by drawing a ``uniform'' grid on $\mathbb{S}^{d-1}_{+}$. A more widely used method~\cite{Saff:1997} to compute a $\delta$-net is to sample a set of $m = O(\frac{1}{\delta^{d-1}}\log{\frac{1}{\delta}})$ vectors uniformly at random on $\mathbb{S}^{d-1}_{+}$, which will be a $\delta$-net with probability at least $1/2$. Note that the success probability of $\delta$-net computation can be arbitrarily high with repeated trials.
Figure~\ref{fig:deltaNet} illustrates a $\frac{\pi}{16}$-net in 2D by drawing $5$ vectors uniformly on $\mathbb{S}^{1}_{+}$.
We adopt the random sampling method for $\delta$-net computation in our implementation. Let us define the minimum happiness ratio (MHR) of a subset $S$ over $\mathcal{D}$ on a $\delta$-net $\mathcal{N}$ of vectors as $mhr(S | \mathcal{N}) = \min_{u \in \mathcal{N}} hr(u, S)$. We have the following lemma to indicate that $mhr(S | \mathcal{N})$ provides a lower bound of $mhr(S)$ with an error of at most $\frac{2 \delta d}{1 + \delta d}$.

\begin{figure}[t]
  \centering
  \includegraphics[width=0.24\textwidth]{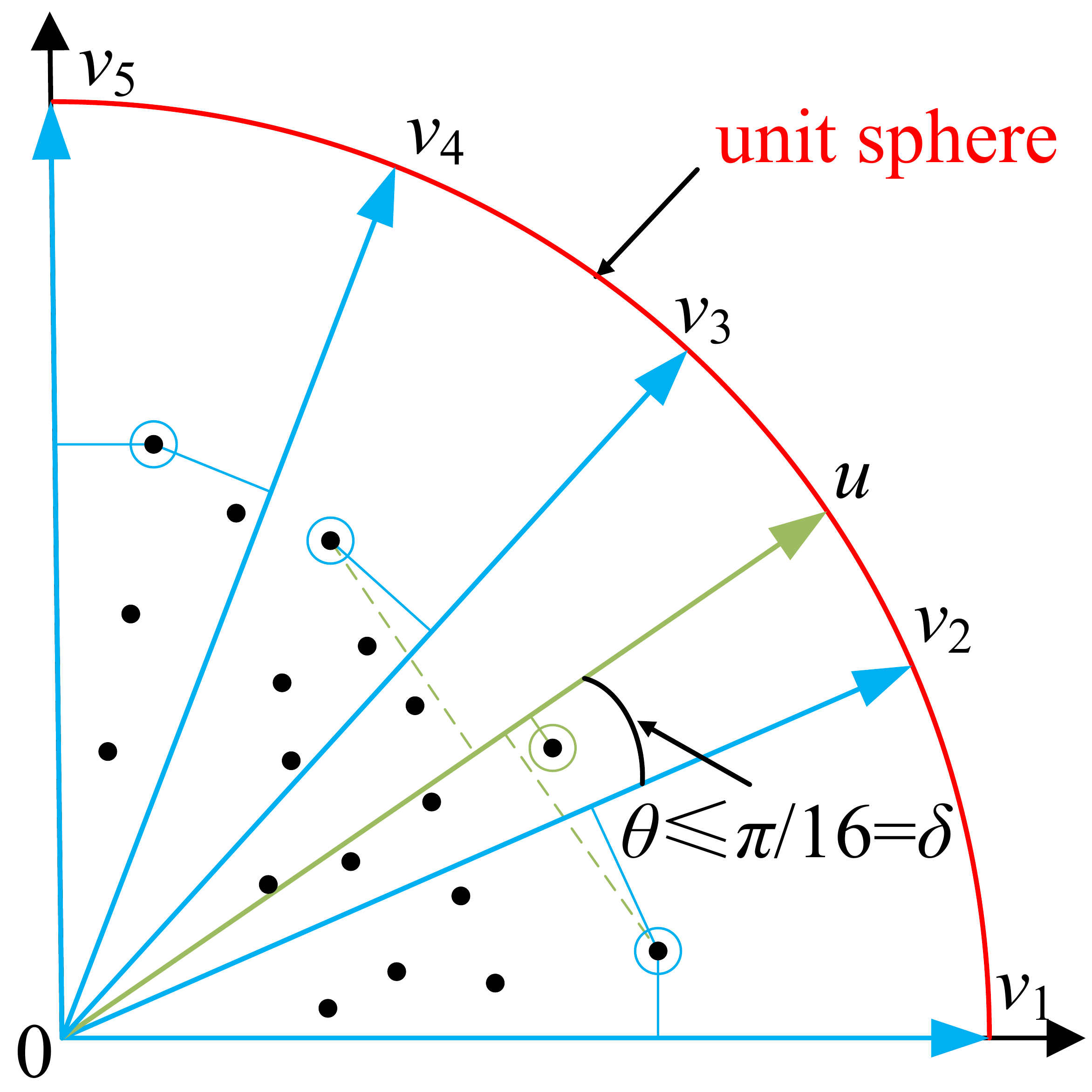}
  \caption{Illustration of a $\frac{\pi}{16}$-net $\mathcal{N}$ comprised of $5$ uniform vectors $v_1,\ldots,v_5$ in 2D. For any vector $u \in \mathbb{S}^{1}_{+}$, there always exists a vector (\eg $v_2$) in $\mathcal{N}$ with $\langle u,v_2 \rangle \geq \cos\frac{\pi}{16}$.}
  \label{fig:deltaNet}
  \Description{Example for delta-net}
\end{figure}

\begin{lemma}
\label{lm:delta:net}
  Given a $\delta$-net $\mathcal{N} \subset \mathbb{S}^{d-1}_{+}$, a database $\mathcal{D}$, and a subset $S \subseteq \mathcal{D}$, it holds that $mhr(S) \leq mhr(S | \mathcal{N}) \leq mhr(S) + \frac{2 \delta d}{1 + \delta d}$.
\end{lemma}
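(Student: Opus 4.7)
The plan is to treat the two inequalities separately. The inequality $mhr(S) \leq mhr(S \mid \mathcal{N})$ is immediate from the definitions: since $\mathcal{N} \subseteq \mathbb{S}^{d-1}_{+}$, restricting the minimum of $hr(u, S)$ to $\mathcal{N}$ can only increase it, so $mhr(S) = \min_{u \in \mathbb{S}^{d-1}_{+}} hr(u,S) \leq \min_{u \in \mathcal{N}} hr(u,S) = mhr(S \mid \mathcal{N})$.

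For the right-hand inequality, I fix a minimizer $u^{*} \in \mathbb{S}^{d-1}_{+}$ with $mhr(S) = hr(u^{*}, S)$ and invoke the $\delta$-net property to pick $v^{*} \in \mathcal{N}$ with $\langle u^{*}, v^{*} \rangle \geq \cos\delta$. Since $mhr(S \mid \mathcal{N}) \leq hr(v^{*}, S)$, it suffices to show
\begin{displaymath}
  hr(v^{*}, S) - hr(u^{*}, S) \leq \frac{2\delta d}{1+\delta d}.
\end{displaymath}
The approach is first to establish a uniform pointwise bound on $|f_{u^{*}}(p) - f_{v^{*}}(p)|$ over $p \in \mathcal{D}$, and then to propagate that deviation through the maxima in the numerator and denominator of the happiness ratio.

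For the pointwise step, $\|u^{*} - v^{*}\|_{2}^{2} = 2 - 2\langle u^{*}, v^{*} \rangle \leq 2(1 - \cos\delta)$, and the elementary estimate $2\sin(\delta/2) \leq \delta$ yields $\|u^{*} - v^{*}\|_{2} \leq \delta$. Then H\"older's inequality with exponents $(\infty, 1)$, together with $\|u^{*} - v^{*}\|_{\infty} \leq \|u^{*} - v^{*}\|_{2} \leq \delta$ and $\|p\|_{1} \leq d$ for $p \in [0,1]^{d}$, gives
\begin{displaymath}
  |f_{u^{*}}(p) - f_{v^{*}}(p)| = |\langle u^{*} - v^{*}, p \rangle| \leq \|u^{*} - v^{*}\|_{\infty} \cdot \|p\|_{1} \leq \delta \cdot d
\end{displaymath}
uniformly over $p \in \mathcal{D}$.

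Denote $M_{S}^{u} := \max_{p \in S} f_{u^{*}}(p)$ and $M^{u} := \max_{p \in \mathcal{D}} f_{u^{*}}(p)$, with $M_{S}^{v}$ and $M^{v}$ defined analogously for $v^{*}$. The pointwise bound propagates to $M_{S}^{v} \leq M_{S}^{u} + \delta d$ and $M^{v} \geq M^{u} - \delta d$. Substituting into $hr(v^{*}, S) - hr(u^{*}, S) = (M_{S}^{v} M^{u} - M_{S}^{u} M^{v}) / (M^{u} M^{v})$ and using $M_{S}^{u} \leq M^{u}$ yields
\begin{displaymath}
  hr(v^{*}, S) - hr(u^{*}, S) \leq \frac{\delta d \, (M_{S}^{u} + M^{u})}{M^{u} (M^{u} - \delta d)} \leq \frac{2\delta d}{M^{u} - \delta d}.
\end{displaymath}
Under the standard HMS normalization that each attribute attains value $1$ somewhere in $\mathcal{D}$ and hence $M^{u} \geq 1$ for every $u \in \mathbb{S}^{d-1}_{+}$, a refined argument that simultaneously exploits both $M^{v} \geq M^{u} - \delta d$ and the symmetric upper bound $M^{v} \leq M^{u} + \delta d$ tightens this expression to the claimed $\frac{2\delta d}{1+\delta d}$. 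I expect the main obstacle to be exactly this last algebraic tightening: extracting a $1 + \delta d$ rather than a $1 - \delta d$ in the denominator requires carefully redistributing the additive $\delta d$ error between the numerator and denominator terms of the two ratios, and is the only step in which a naive accounting of the pointwise bound falls short.
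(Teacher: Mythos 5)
Your left-hand inequality and the overall skeleton (fix the minimizer $u^{*}$, pick its net neighbor $v^{*}$, bound $|f_{u^{*}}(p)-f_{v^{*}}(p)|$ pointwise, and push the error through the ratio) are the same as the paper's, but the proof as written does not reach the claimed bound, and the two places where it falls short are exactly the two places you gloss over. First, the assumption ``$M^{u}\geq 1$ for every $u\in\mathbb{S}^{d-1}_{+}$'' is false under the paper's normalization: attributes scaled to $[0,1]$ only guarantee that each coordinate attains the value $1$ at some point, which for an $l_2$-unit vector $u$ gives $\max_{p\in\mathcal{D}}\langle u,p\rangle \geq \max_i u[i] \geq 1/\sqrt{d}$, not $\geq 1$ (e.g.\ $d=2$, points $(1,0)$ and $(0,1)$, $u=(1/\sqrt{2},1/\sqrt{2})$ gives $M^{u}=1/\sqrt{2}$). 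Second, and relatedly, your H\"older $(\infty,1)$ pointwise bound $\delta d$ is a factor $\sqrt{d}$ lossier than what the constant in the lemma requires: the paper uses Cauchy--Schwarz, $|\langle u^{*}-v^{*},p\rangle|\leq \|u^{*}-v^{*}\|_2\,\|p\|_2\leq \delta\sqrt{d}$, precisely so that after dividing by $\max_{p\in\mathcal{D}}\langle v^{*},p\rangle \geq 1/\sqrt{d}$ the relative error is $\delta d$. With your $\delta d$ pointwise bound and the correct lower bound $1/\sqrt{d}$ on the denominator, you would only obtain an error of order $\delta d^{3/2}$.

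The ``algebraic tightening'' you defer is also not a redistribution trick available from your inequality: your arrangement upper-bounds $hr(v^{*},S)$ using $u^{*}$-quantities, which structurally produces $M^{u}-\delta(\cdot)$ in the denominator and cannot be massaged into $1+\delta d$. The paper bounds in the opposite direction: it lower-bounds $mhr(S)=hr(u^{*},S)$ by $v^{*}$-quantities,
\begin{displaymath}
  mhr(S) \;\geq\; \frac{\max_{p\in S}\langle v^{*},p\rangle - \delta\sqrt{d}}{\max_{p\in\mathcal{D}}\langle v^{*},p\rangle + \delta\sqrt{d}}
  \;\geq\; \frac{mhr(S\mid\mathcal{N}) - \delta d}{1+\delta d},
\end{displaymath}
so the error is \emph{added} to the denominator (yielding the $1+\delta d$), and then uses $mhr(S\mid\mathcal{N})\leq 1$ to conclude $mhr(S)\geq mhr(S\mid\mathcal{N})-\frac{2\delta d}{1+\delta d}$. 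To repair your proof, switch to the Cauchy--Schwarz pointwise bound, replace the false $M^{u}\geq 1$ by $\max_{p\in\mathcal{D}}\langle v^{*},p\rangle\geq 1/\sqrt{d}$, and perform the comparison in the paper's direction; as it stands, your argument proves only a bound of the form $\frac{2\delta d}{M^{u}-\delta d}$ with an unjustified value of $M^{u}$, not the stated $\frac{2\delta d}{1+\delta d}$.
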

According to Lemma~\ref{lm:delta:net}, when $mhr(S)$ is estimated on a $\frac{\delta}{d(2 - \delta)}$-net $\mathcal{N} \subset \mathbb{S}^{d-1}_{+}$, it holds that $mhr(S) \leq mhr(S | \mathcal{N}) \leq mhr(S) + \delta$. In this way, we find an approximate solution to FairHMS by reducing it to the problem of finding a subset to maximize the MHR on a $\frac{\delta}{d(2 - \delta)}$-net $\mathcal{N}$. However, although the reduced FairHMS problem is defined on a finite number of utility functions instead of an infinite utility space, it is still an NP-hard problem even without fairness constraints, as indicated in Lemma~\ref{lm:inapprox}.
\begin{lemma}
\label{lm:inapprox}
There does not exist any polynomial-time algorithm to approximate the reduced FairHMS problem defined on a set $\mathcal{N}$ of vectors in $\mathbb{S}^{d-1}_{+}$ with a factor of $(1-\varepsilon) \cdot \log{m}$, where $m = |\mathcal{N}|$, for any parameter $\varepsilon > 0$ unless P=NP.
\end{lemma}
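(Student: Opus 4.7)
I would prove Lemma~\ref{lm:inapprox} by a gap-preserving reduction from Set Cover. By the Dinur--Steurer theorem, Set Cover cannot be approximated in polynomial time within a factor of $(1-\varepsilon)\ln m$ of the optimum (where $m$ is the universe size) unless P = NP, which is exactly the target factor. Since each happiness-ratio function $hr(u,\cdot)$ is submodular by Lemma~\ref{lem:sub}, the reduced FairHMS problem falls into the max--min (robust) submodular maximization framework, which is known to inherit Set-Cover hardness; I will realize this reduction explicitly inside the FairHMS formulation.

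\textbf{Construction.}
Given a Set Cover instance with universe $U = \{e_1, \ldots, e_m\}$ and family $\mathcal{F} = \{F_1, \ldots, F_n\}$ (assume WLOG every element appears in some $F_i$), I take $d = m$ and let $\mathcal{N} = \{e_1, \ldots, e_m\}$ be the standard basis of $\mathbb{R}^{m}_{+}$; these are unit vectors in $\mathbb{S}^{m-1}_{+}$ and $|\mathcal{N}| = m$. For each $F_i$ I add a point $p_i \in \mathcal{D} \subset [0,1]^{m}$ whose $j$-th coordinate is $1$ if $e_j \in F_i$ and $0$ otherwise. To neutralize the fairness side of the problem, I place all of $\mathcal{D}$ in a single group with $l_1 = 0$ and $h_1 = k$, which is trivially a matroid constraint of rank $k$.

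\textbf{Analysis.}
Under this construction, $f_{u_j}(p_i) = \langle e_j, p_i \rangle$ equals $1$ if $e_j \in F_i$ and $0$ otherwise, so $\max_{p \in \mathcal{D}} f_{u_j}(p) = 1$ and $hr(u_j, T) \in \{0,1\}$ for every $T \subseteq \mathcal{D}$. Consequently $mhr(T \mid \mathcal{N}) = 1$ exactly when $\{F_i : p_i \in T\}$ covers $U$, and the smallest $k$ for which the reduced FairHMS instance admits a solution with MHR value $1$ equals the optimum of the Set Cover instance. A polynomial-time algorithm that, in the bicriteria sense, returns a feasible $T$ of size at most $(1-\varepsilon)\log m \cdot k^{\star}$ achieving optimal MHR would then produce a set cover of the same size, contradicting Dinur--Steurer and hence establishing the claimed hardness.

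\textbf{Main obstacle.}
The delicate point is pinning down the meaning of ``approximation within factor $(1-\varepsilon)\log m$'' for an objective that lives in $[0,1]$: because the construction above forces $mhr$ to take only the values $0$ and $1$, I expect to need either a small padding of the instance (\eg replicating each basis direction by a bundle that spreads out the $mhr$ values), or an explicitly bicriteria statement on $|S|$ so that the ``$\log m$ blow-up'' interpretation is unambiguous. Either refinement preserves the bijection between feasible FairHMS solutions and set covers, and the Dinur--Steurer gap then propagates directly through the reduction.
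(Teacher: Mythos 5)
Your reduction is essentially the paper's own: one basis vector per element of the Set Cover universe, one characteristic-vector point per set, the observation that $mhr(\cdot\,|\,\mathcal{N})=1$ exactly when the chosen sets cover the universe, and the $(1-\varepsilon)\log m$ factor read as a blow-up in solution size for a fixed MHR threshold, which is precisely how the paper resolves the $[0,1]$-valued-objective ambiguity you flag as the main obstacle. The only difference is cosmetic: you invoke Dinur--Steurer (which in fact matches the ``unless P$=$NP'' phrasing more cleanly than the Feige theorem the paper cites), so the argument goes through as you outline it.
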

Due to the inapproximability result of reduced FairHMS, we will focus on proposing approximation algorithms within a nearly best possible factor. In Section~\ref{sec:problem}, we have shown that the happiness ratio function $hr(u, \cdot)$ is monotone and submodular for any vector $u$ and the fairness constraint is a special case of \emph{matroid constraints}. Therefore, the reduced FairHMS problem is an instance of maximizing the minimum of $m$ monotone submodular functions, where $m = |\mathcal{N}|$, known as the \emph{multi-objective submodular maximization} (MOSM) in the literature~\cite{Krause:2008,Anari:2019,DBLP:conf/nips/Udwani18,Torrico:2021}, under a matroid constraint. Although MOSM is generally inapproximable~\cite{Krause:2008}, several bicriteria approximation algorithms have been proposed for MOSM under a cardinality or matroid constraint~\cite{Krause:2008, Anari:2019, Torrico:2021}. Here, an algorithm is called $(\alpha, \beta)$-bicriteria approximate for $\alpha,\beta > 0$ if it provides a solution $S$ of size $\alpha k$ such that $f(S) \geq \beta f(S^*)$, where $f$ is the minimum of $m$ monotone submodular functions and $S^*$ is the optimal size-$k$ solution for maximizing $f$. Next, we will generalize an existing bicriteria approximation algorithm for MOSM under a matroid constraint to FairHMS and analyze it theoretically.

The basic idea of existing algorithms is to transform the non-submodular objective function into a monotone submodular function by introducing a capped value $\tau > 0$~\cite{Fujito:2000, Krause:2008, Anari:2019}. Specifically, in our FairHMS problem the function $mhr(\cdot | \mathcal{N})$ is not submodular. Nevertheless, it can be transformed into a monotone submodular function as follows: First, we define a truncated happiness ratio function $hr_{\tau}(u,S) = \min\{hr(u, S), \tau\}$ for some capped value $\tau \in (0, 1)$. As proven by \citet{Krause:2008}, for any monotone submodular function, its truncated function is monotone and submodular. Second, we define a truncated MHR function as:
\begin{equation}
\label{eq:mhr}
  mhr_{\tau}(S|\mathcal{N}) = \frac{1}{m} \sum_{u \in \mathcal{N}} hr_{\tau}(u,S)
\end{equation}
where $m = |\mathcal{N}|$.
Lemma~\ref{lem:mhrsub} proves the soundness of truncation.
\begin{lemma}
\label{lem:mhrsub}
  $mhr_{\tau}(S|\mathcal{N})$ is a monotone submodular function for any capped value $\tau \in (0, 1)$.
\end{lemma}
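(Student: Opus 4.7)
The plan is to obtain the lemma by composing three standard facts: (i) $hr(u,\cdot)$ is monotone submodular for every $u$ (Lemma~\ref{lem:sub}); (ii) truncation by a constant preserves monotone submodularity; (iii) nonnegative linear combinations preserve monotone submodularity. Since $mhr_{\tau}(S|\mathcal{N}) = \frac{1}{m}\sum_{u\in\mathcal{N}} hr_{\tau}(u,S)$ is by construction the uniform (hence nonnegative) average over $u\in\mathcal{N}$ of the truncations $hr_{\tau}(u,S)=\min\{hr(u,S),\tau\}$, the result will follow immediately once (ii) and (iii) are established for each summand.

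For step (ii), fix $u\in\mathcal{N}$ and write $g(S)=hr(u,S)$, $g_{\tau}(S)=\min\{g(S),\tau\}$. Monotonicity of $g_{\tau}$ is direct: if $S_1\subseteq S_2$, then $g(S_1)\le g(S_2)$, so $\min\{g(S_1),\tau\}\le\min\{g(S_2),\tau\}$. For submodularity, I would verify the diminishing-returns inequality $g_{\tau}(S_1\cup\{p\})-g_{\tau}(S_1)\ge g_{\tau}(S_2\cup\{p\})-g_{\tau}(S_2)$ for every $S_1\subseteq S_2\subseteq\mathcal{D}$ and $p\in\mathcal{D}\setminus S_2$. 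Because $g$ is monotone, both marginal gains are nonnegative, and by monotonicity the values $g(S_1)\le g(S_1\cup\{p\})\le g(S_2\cup\{p\})$ and $g(S_1)\le g(S_2)\le g(S_2\cup\{p\})$ are ordered. A short casework on whether $g(S_2)$ and $g(S_2\cup\{p\})$ lie below, straddle, or lie above $\tau$ shows that the truncated marginal on the larger set is never larger than on the smaller set, which is exactly the classical observation used in \cite{Krause:2008}: capping a monotone submodular function at $\tau$ yields another monotone submodular function.

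For step (iii), the class of monotone submodular functions on $2^{\mathcal{D}}$ is closed under nonnegative linear combinations, since monotonicity and the diminishing-returns inequality are both linear in the function. Applying this with the uniform weights $1/m$ to the functions $\{hr_{\tau}(u,\cdot)\}_{u\in\mathcal{N}}$, each of which is monotone submodular by step (ii), yields that $mhr_{\tau}(\,\cdot\,|\mathcal{N})$ is monotone submodular, completing the proof.

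The only nontrivial step is (ii), and even there the argument is really a three-line case analysis; the main ``obstacle'' is simply bookkeeping the cases $g(S_2\cup\{p\})\le\tau$, $g(S_2)<\tau<g(S_2\cup\{p\})$, and $g(S_2)\ge\tau$, together with the parallel subcases on $S_1$, and checking that the submodularity of the untruncated $g$ still forces the desired inequality on $g_{\tau}$. Because this is precisely the truncation trick attributed to \cite{Krause:2008} and Fujito~\cite{Fujito:2000}, I would cite it for brevity rather than redo the calculation, so the displayed proof remains very short.
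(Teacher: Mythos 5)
Your proposal is correct and follows essentially the same route as the paper: the paper also obtains the lemma by noting that each truncated function $hr_{\tau}(u,\cdot)$ is monotone submodular (citing \cite{Krause:2008}) and that $mhr_{\tau}(\cdot|\mathcal{N})$ is a nonnegative linear combination of these, which preserves monotone submodularity. Your optional case analysis for the truncation step is a valid elaboration of the cited fact, but no new ideas beyond the paper's argument are involved.
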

The proof of Lemma~\ref{lem:mhrsub} is trivial since $mhr_{\tau}(\cdot|\mathcal{N})$ is a nonnegative linear combination of $m$ monotone submodular functions~\cite{Krause:2014}.
Then, the reduced FairHMS problem can be regarded as a submodular maximization problem under a matroid constraint when the truncated MHR function is considered the objective function. Nevertheless, we still need to show the relationship between the truncated and original MHR functions in Lemma~\ref{lem:capped}.
\begin{lemma}
  \label{lem:capped}
  $mhr(S|\mathcal{N}) \geq \tau$ if and only if $mhr_{\tau}(S|\mathcal{N}) = \tau$.
\end{lemma}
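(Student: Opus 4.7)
The plan is to prove both directions directly from the definitions, since the lemma reduces to an elementary observation about when the average of capped quantities attains its ceiling. I will exploit the fact that each summand $hr_{\tau}(u,S) = \min\{hr(u,S), \tau\}$ is bounded above by $\tau$, so the arithmetic mean in Equation~\ref{eq:mhr} equals $\tau$ exactly when every summand does.

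For the ``only if'' direction, I would assume $mhr(S|\mathcal{N}) \geq \tau$, which by definition means $hr(u,S) \geq \tau$ for every $u \in \mathcal{N}$. Plugging this into the cap gives $hr_{\tau}(u,S) = \min\{hr(u,S),\tau\} = \tau$ for each $u$. Substituting into Equation~\ref{eq:mhr} yields $mhr_{\tau}(S|\mathcal{N}) = \frac{1}{m}\sum_{u \in \mathcal{N}} \tau = \tau$, as required.

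For the ``if'' direction, I would assume $mhr_{\tau}(S|\mathcal{N}) = \tau$. Since $hr_{\tau}(u,S) \leq \tau$ holds for every $u$ by construction, the equality $\frac{1}{m}\sum_{u \in \mathcal{N}} hr_{\tau}(u,S) = \tau$ forces $hr_{\tau}(u,S) = \tau$ for \emph{every} $u \in \mathcal{N}$ (otherwise the average would be strictly less than $\tau$). Unpacking the definition of the cap, $\min\{hr(u,S),\tau\} = \tau$ implies $hr(u,S) \geq \tau$ for every $u \in \mathcal{N}$, so $mhr(S|\mathcal{N}) = \min_{u \in \mathcal{N}} hr(u,S) \geq \tau$.

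There is essentially no obstacle here; the lemma is a direct consequence of the arithmetic-mean observation combined with the definitions of $hr_{\tau}$, $mhr$, and $mhr_{\tau}$. The only point worth stating carefully is the ``forcing'' step in the backward direction, which I would justify in one line by noting that if any $hr_{\tau}(u,S) < \tau$ while the others are at most $\tau$, then the average is strictly below $\tau$, contradicting the assumption.
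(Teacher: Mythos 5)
Your proposal is correct and follows essentially the same argument as the paper: the forward direction is a direct substitution showing every capped term equals $\tau$, and the backward direction uses the same averaging/contradiction observation that any summand strictly below $\tau$ would pull the mean below $\tau$. No gaps to report.
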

Lemma~\ref{lem:capped} indicates that the truncated and original MHR functions are equivalent only when the capped value $\tau$ is achieved. And the remaining problem becomes how to find the largest $\tau$ for which there is a feasible solution $S^*$ under the fairness constraint with $mhr_{\tau}(S^*|\mathcal{N}) = \tau$. In our algorithm, we attempt different values in the range $[0, 1]$ so that at least one of them is near optimal with bounded error. Furthermore, to find a fair solution $S$, we adopt the greedy algorithm~\cite{Fisher:1978} for submodular maximization under a matroid constraint. However, the greedy algorithm is $\frac{1}{2}$-approximate, it only guarantees to find a feasible solution $S$ with $mhr_{\tau^*}(S|\mathcal{N}) \geq \frac{\tau^*}{2}$ for the optimal $\tau^*$ and cannot achieve any bound on $mhr(S|\mathcal{N})$. A solution to this problem is to run the greedy algorithm in $\gamma$ rounds and to take the union of the solutions $S_1, \ldots, S_{\gamma}$ in all $\gamma$ rounds as the final solution $S = \bigcup_{i=1}^{\gamma} S_i$. It will lead to a solution $S$ of size $\gamma k$ that satisfies the loosened fairness constraint where all lower and upper bounds are scaled by $\gamma$ following the result in Lemma~\ref{lem:greedy}.
\begin{lemma} 
\label{lem:greedy}
  Let $\tau^*$ be the largest $\tau$ for which a feasible solution $S^*$ with $mhr_{\tau}(S^*|\mathcal{N}) = \tau$ exists and $S_i$ be the solution returned by the greedy algorithm at the $i$-th round on $\mathcal{D} \setminus \bigcup_{j=1}^{i-1} S_j$ for function $mhr_{\tau^*}(\cdot|\mathcal{N})$. If $\gamma \geq \lceil \log_{2}\frac{m}{\varepsilon} \rceil$, then, for $S = \bigcup_{i=1}^{\gamma} S_i$, it holds that $mhr(S|\mathcal{N}) \geq (1-\varepsilon) \cdot \tau^*$.
\end{lemma}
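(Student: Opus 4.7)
The plan is to track the potential $\Phi_i = \tau^* - mhr_{\tau^*}(T_i \mid \mathcal{N})$, where $T_i = \bigcup_{j=1}^{i} S_j$, and show it contracts by a factor of two at every round; after $\gamma \geq \lceil \log_2(m/\varepsilon)\rceil$ rounds this will drive $mhr_{\tau^*}(T_\gamma \mid \mathcal{N})$ to within $(\varepsilon/m)\,\tau^*$ of its maximum $\tau^*$, and a short pigeonhole step then converts that average-type bound into the per-objective guarantee required for $mhr(S \mid \mathcal{N})$. Lemma~\ref{lem:mhrsub} makes $mhr_{\tau^*}(\cdot \mid \mathcal{N})$ monotone submodular, so the residual function $g_i(S) = mhr_{\tau^*}(T_{i-1}\cup S \mid \mathcal{N}) - mhr_{\tau^*}(T_{i-1}\mid \mathcal{N})$ on the restricted ground set $\mathcal{D}\setminus T_{i-1}$ is also monotone submodular. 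Interpreting the $i$-th greedy call as growing the already-committed prefix $T_{i-1}$ (so that its marginals with respect to $mhr_{\tau^*}$ coincide with greedy marginals on $g_i$ started from $\emptyset$), the classical analysis of greedy for monotone submodular maximization under a matroid constraint~\cite{Fisher:1978} yields a $1/2$-approximation for $g_i$.

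To convert this per-round $1/2$-approximation into geometric contraction of $\Phi_i$, I will use $S^*\setminus T_{i-1}$ as the reference feasible witness. By the hereditary matroid axiom, $S^*\setminus T_{i-1}$ remains independent in $\mathcal{M}$ and sits inside $\mathcal{D}\setminus T_{i-1}$; monotonicity of $mhr_{\tau^*}$ together with $mhr_{\tau^*}(S^*\mid\mathcal{N}) = \tau^*$ (the defining property of $\tau^*$) gives $g_i(S^*\setminus T_{i-1}) \geq mhr_{\tau^*}(S^*\mid\mathcal{N}) - mhr_{\tau^*}(T_{i-1}\mid\mathcal{N}) = \Phi_{i-1}$. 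The $1/2$-approximation therefore yields $g_i(S_i) \geq \Phi_{i-1}/2$, which is exactly $\Phi_i \leq \Phi_{i-1}/2$. Iterating from $\Phi_0 = \tau^*$ gives $\Phi_\gamma \leq 2^{-\gamma}\tau^* \leq (\varepsilon/m)\tau^*$ whenever $\gamma \geq \lceil \log_2(m/\varepsilon)\rceil$.

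The final step is a deterministic average-to-minimum transfer. Every term $hr_{\tau^*}(u,T_\gamma)$ is bounded above by $\tau^*$, and their mean $mhr_{\tau^*}(T_\gamma\mid\mathcal{N})$ is at least $(1-\varepsilon/m)\tau^*$; if some $u_0$ violated $hr_{\tau^*}(u_0,T_\gamma) \geq (1-\varepsilon)\tau^*$, the remaining $m-1$ terms, each capped at $\tau^*$, could not make up the missing mass in the sum, contradicting the average bound. Because $hr(u,T_\gamma) \geq hr_{\tau^*}(u,T_\gamma)$ by definition of the truncation, this lifts to $hr(u,T_\gamma) \geq (1-\varepsilon)\tau^*$ for every $u\in\mathcal{N}$, and hence $mhr(S\mid\mathcal{N}) \geq (1-\varepsilon)\tau^*$. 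The main obstacle---really the only non-mechanical step---is justifying that the $i$-th greedy call, although phrased in the lemma as operating on $mhr_{\tau^*}(\cdot\mid\mathcal{N})$, genuinely approximates the residual $g_i$ on the matroid restricted to $\mathcal{D}\setminus T_{i-1}$, and that $S^*\setminus T_{i-1}$ is an admissible witness for $g_i$; both facts follow cleanly from the invariance of greedy under additive offsets and from the matroid hereditary property.
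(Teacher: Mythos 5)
Your overall skeleton matches the paper's: first establish $mhr_{\tau^*}(S|\mathcal{N}) \geq (1-2^{-\gamma})\,\tau^* \geq (1-\frac{\varepsilon}{m})\,\tau^*$ for the union of the $\gamma$ rounds, then use exactly the same averaging/pigeonhole step (each $hr_{\tau^*}(u,S) \leq \tau^*$, so a single vector below $(1-\varepsilon)\tau^*$ would drag the mean below $(1-\frac{\varepsilon}{m})\tau^*$), and finally drop the truncation. The only difference is that the paper obtains the first inequality by citing Theorem~3 of Anari et al.\ (generalizing Fisher et al.), whereas you re-derive it via the gap-halving potential $\Phi_i = \tau^* - mhr_{\tau^*}(T_i|\mathcal{N})$ with the witness $S^*\setminus T_{i-1}$; that derivation is the standard proof of the cited bound, and your uses of heredity, monotonicity of the contraction $g_i$, and the $\tfrac12$-approximation of matroid greedy are all fine (the cosmetic point $\Phi_0 \leq \tau^*$ rather than $=\tau^*$ is harmless).

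The one genuine gap is precisely the step you flag and then dismiss: the claim that the $i$-th greedy call's marginals ``coincide with greedy marginals on $g_i$ started from $\emptyset$'' by ``invariance of greedy under additive offsets.'' The residual $g_i(A) = mhr_{\tau^*}(T_{i-1}\cup A|\mathcal{N}) - mhr_{\tau^*}(T_{i-1}|\mathcal{N})$ is a \emph{contraction} of $mhr_{\tau^*}$ by $T_{i-1}$, not an additive shift of $mhr_{\tau^*}$ restricted to $\mathcal{D}\setminus T_{i-1}$: its marginal of $p$ at $S_i$ is $mhr_{\tau^*}(T_{i-1}\cup S_i\cup\{p\}|\mathcal{N}) - mhr_{\tau^*}(T_{i-1}\cup S_i|\mathcal{N})$, which in general differs from the from-scratch marginal $mhr_{\tau^*}(S_i\cup\{p\}|\mathcal{N}) - mhr_{\tau^*}(S_i|\mathcal{N})$ that Algorithm~\ref{alg:bicriteria} (and the lemma statement, read literally) computes. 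Your $\tfrac12$-approximation argument, and hence the contraction $\Phi_i \leq \Phi_{i-1}/2$, is valid only when the round-$i$ greedy measures gains relative to $T_{i-1}\cup S_i$; with from-scratch gains a later round can spend its matroid budget on elements whose standalone value is high but which are redundant with $T_{i-1}$ (e.g., near-duplicates of earlier picks under a partition matroid), and then $g_i(S_i)$ can be far below $\Phi_{i-1}/2$, so the $(1-2^{-\gamma})$ guarantee does not follow from the per-round $\tfrac12$-approximation alone. To close the proof you should either analyze the greedy with residual marginals (which is what the bound cited by the paper actually covers) or supply a separate argument for the from-scratch variant; as written, the decisive step is asserted rather than proved.
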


\subsection{The BiGreedy Algorithm}
\label{subsec:bigreedy}

\begin{algorithm}[tb]
  \small
  \caption{\textsc{BiGreedy}}
  \label{alg:bicriteria}
  \KwIn{Dataset $\mathcal{D} \subseteq \mathbb{R}^{d}_{+}$ with $\mathcal{D} = \bigcup_{c=1}^{C} \mathcal{D}_c$; solution size $k \in \mathbb{Z}^{+}$; matroid $\mathcal{M} = (\mathcal{D}, \mathcal{I})$; parameters $\delta, \varepsilon \in (0, 1)$}
  \KwOut{A feasible set $S' \subseteq \mathcal{D}$ for FairHMS}
  Sample a $\frac{\delta}{d(2 - \delta)}$-net $\mathcal{N}$ from $\mathbb{S}^{d-1}_{+}$\;\label{ln:bg:delta}
  Set $\gamma \gets \lceil \log_2{\frac{2m}{\varepsilon}} \rceil$ where $m = |\mathcal{N}|$\;\label{ln:bg:gamma}
  Initialize $\tau \gets 1$ and $ \mathcal{S} \gets \emptyset $\;\label{ln:bg:tau:s}
  \While{$\tau \geq \frac{1}{m}$}
  {
	$S \gets$ \textsc{MRGreedy}$(\tau, \gamma, \mathcal{N}, \mathcal{D})$\;
	\If{$S \neq \emptyset$}
	{
	  $\mathcal{S} \gets \mathcal{S} \cup \{S\}$\;
	}
	$\tau \gets (1 - \frac{\varepsilon}{2}) \cdot \tau$\;\label{ln:bg:tau:t}
  }
  \Return{$S' \gets \argmax_{S \in \mathcal{S}} mhr(S)$}\;\label{ln:bg:sol}
  \tcc{Multi-Round Greedy Algorithm}
  \Fn{\textnormal{\textsc{MRGreedy}($\tau, \gamma, \mathcal{N}, \mathcal{D}$)}\label{ln:bg:mrg:s}}
  {
    Initialize $S \gets \emptyset$ and $\mathcal{D}_0 \gets \mathcal{D}$\;
    \For{$i \gets 1, \ldots, \gamma$}
    {
      Initialize $S_i \gets \emptyset$ and $\mathcal{D}_i \gets \mathcal{D}_{i - 1}$\;
      \While{there exists $p \in \mathcal{D}_i$ s.t.~$S_i \cup \{p\} \in \mathcal{I}$\label{ln:mrg:greedy:s}}
      {
        $p^* \gets \arg\max_{p \in \mathcal{D}_i \,:\, S_i \cup \{p\} \in \mathcal{I}} \Delta(p, mhr_{\tau}(S_i|\mathcal{N}))$\;
        $S_i \gets S_i \cup \{p^*\}$\;\label{ln:mrg:greedy:t}
      }
      $S \gets S \cup S_i$ and $\mathcal{D}_i \gets \mathcal{D}_i \setminus S_i$\;\label{ln:mrg:update:sol}
      \If{$mhr_{\tau}(S|\mathcal{N}) \geq (1 - \frac{\varepsilon}{2m}) \cdot \tau$}
      {
        \textbf{break}\;\label{ln:mrg:early}
      }
    }
    \If{$mhr_{\tau}(S|\mathcal{N}) < (1 - \frac{\varepsilon}{2m}) \cdot \tau$}
    {
      \Return{$\emptyset$}\;\label{ln:mrg:empty}
    }
    \Return{$S$}\;\label{ln:bg:mrg:t}
  }
\end{algorithm}

In Lemmas~\ref{lm:delta:net}--\ref{lem:greedy}, we show that the FairHMS problem can be solved as a submodular maximization problem under a matroid constraint with a bicriteria approximation guarantee, for which a multi-round greedy procedure can be used for solution computation.
Accordingly, based on the above theoretical results, we propose the \textsc{BiGreedy} algorithm for FairHMS in Algorithm~\ref{alg:bicriteria} algorithm.
The \textsc{BiGreedy} algorithm uses two parameters $\delta, \varepsilon \in (0, 1)$ as input to control the errors led by sampling utility vectors and searching appropriate capped values, respectively.
Initially, we sample a $\frac{\delta}{d(2 - \delta)}$-net $\mathcal{N}$ of $m$ utility vectors from $\mathbb{S}^{d-1}_{+}$ so that the error in the MHR estimation is bounded by $\delta$ (Line~\ref{ln:bg:delta}), as indicated by Lemma~\ref{lm:delta:net}.
Then, we determine that the maximum number $\gamma$ of rounds is $\lceil \log_2{\frac{2m}{\varepsilon}} \rceil$ to ensure that the error in the capped value is at most $\frac{\varepsilon}{2}$ (Line~\ref{ln:bg:gamma}), as shown in Lemma~\ref{lem:greedy}.
Next, we attempt different capped values $\tau$ in the range $[\frac{1}{m}, 1]$ so that at least one of them is within $[(1-\frac{\varepsilon}{2})\cdot \tau^*, \tau^*]$ (Lines~\ref{ln:bg:tau:s}--\ref{ln:bg:tau:t}), where $\tau^*$ is the optimal capped value.
For each capped value $\tau$, we run the multi-round greedy (\textsc{MRGreedy}) algorithm in Lines~\ref{ln:bg:mrg:s}--\ref{ln:bg:mrg:t} to find a solution $S$ for FairHMS.
The \textsc{MRGreedy} algorithm has at most $\gamma$ rounds and uses the greedy algorithm for monotone submodular maximization under a matroid constraint~\cite{Fisher:1978} as a subroutine to compute a solution $S_i$ in the $i$-th round. At each iteration of the greedy algorithm, the point $p^*$ that achieves the largest marginal gain $\Delta(p, mhr_{\tau}(S_i|\mathcal{N})) = mhr_{\tau}(S_i \cup \{p\}|\mathcal{N}) - mhr_{\tau}(S_i|\mathcal{N}) $ w.r.t.~the truncated MHR function and guarantees the fulfillment of the matroid constraint for group fairness in Section~\ref{sec:problem}, \ie $S_i \cup \{p\} \in \mathcal{I}$, is added to $S_i$ in the $i$-th round (Lines~\ref{ln:mrg:greedy:s}--\ref{ln:mrg:greedy:t}). After round $i$, $S_i$ is added to the solution $S$ of \textsc{MRGreedy} (Line~\ref{ln:mrg:update:sol}). Then, we check whether $mhr_{\tau}(S|\mathcal{N}) \geq (1 - \frac{\varepsilon}{2m}) \cdot \tau$: if yes, we will return $S$ immediately without further rounds (Line~\ref{ln:mrg:early}); otherwise, we will continue to perform the next round. If the condition is still not satisfied after all $\gamma$ rounds, an empty set will be returned by \textsc{MRGreedy} (Line~\ref{ln:mrg:empty}). Finally, we keep all nonempty solutions returned by \textsc{MRGreedy} for different capped values in $\mathcal{S}$ and pick the best one with the highest $mhr(S)$ among them as the final solution $S'$ to FairHMS (Line~\ref{ln:bg:sol}).

Next, we analyze the approximation ratio and time complexity of the \textsc{BiGreedy} algorithm in Theorem~\ref{thm:bicriteria}.
\begin{theorem}
\label{thm:bicriteria}
  The \textnormal{\textsc{BiGreedy}} algorithm is an $\big(O(d \log{\frac{1}{\delta\varepsilon}}), 1 - \varepsilon - \frac{\delta}{\mathtt{OPT}}\big)$-bicriteria approximation algorithm for the FairHMS problem, where $\mathtt{OPT}$ is the minimum happiness ratio of the optimal solution to FairHMS, running in $O\big(n k \varepsilon^{-2} \delta^{-d} \log^2{\frac{1}{\delta}}\big)$ time.
\end{theorem}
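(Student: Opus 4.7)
Plan: The theorem makes three separate claims — a bicriteria size factor, an approximation guarantee, and a running-time bound. The plan is to prove each by a short calculation that chains the lemmas already established in the paper (Lemmas~\ref{lm:delta:net}, \ref{lem:capped}, and \ref{lem:greedy}).

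\textbf{Size bound.} Since every round of \textsc{MRGreedy} produces an independent set of the matroid $\mathcal{M}$ (whose rank is $k$), and \textsc{MRGreedy} runs at most $\gamma=\lceil\log_{2}(2m/\varepsilon)\rceil$ rounds, the returned $S'$ has size at most $\gamma k$. Substituting $m=O\!\bigl(\delta^{-(d-1)}\log(1/\delta)\bigr)$ from the $\delta$-net construction and expanding the logarithm, I would obtain $\gamma=O\!\bigl(d\log\frac{1}{\delta\varepsilon}\bigr)$, which is the claimed size factor.

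\textbf{Approximation ratio.} The plan is to chain three approximation losses. Let $\tau^{*}=\max_{T\in\mathcal{I},\,|T|=k}mhr(T|\mathcal{N})$; by Lemma~\ref{lm:delta:net} applied to the FairHMS optimum $S^{*}$, one has $\tau^{*}\geq mhr(S^{*}|\mathcal{N})\geq mhr(S^{*})=\mathtt{OPT}$. Since the outer loop multiplies $\tau$ by $(1-\varepsilon/2)$ at each step starting from $1$, some visited grid value $\tau_{0}$ lies in $[(1-\varepsilon/2)\tau^{*},\tau^{*}]$; Lemma~\ref{lem:capped} then certifies that $\tau_{0}$ is itself the ``largest feasible capped value'' appearing in the hypothesis of Lemma~\ref{lem:greedy} under capping $\tau_{0}$. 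Because the algorithm sets $\gamma=\lceil\log_{2}(2m/\varepsilon)\rceil$, Lemma~\ref{lem:greedy} (applied with parameter $\varepsilon/2$) guarantees $mhr(S|\mathcal{N})\geq(1-\varepsilon/2)\tau_{0}\geq(1-\varepsilon/2)^{2}\tau^{*}\geq(1-\varepsilon)\mathtt{OPT}$ for the set $S$ produced by \textsc{MRGreedy}$(\tau_{0},\ldots)$. Finally I would lift from $\mathcal{N}$ back to the sphere via Lemma~\ref{lm:delta:net}: $mhr(S')\geq mhr(S)\geq mhr(S|\mathcal{N})-\delta\geq(1-\varepsilon)\mathtt{OPT}-\delta=\bigl(1-\varepsilon-\frac{\delta}{\mathtt{OPT}}\bigr)\cdot\mathtt{OPT}$, which is the claimed quality bound.

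\textbf{Running time.} The outer loop performs $O\!\bigl(\log(m)/\varepsilon\bigr)$ iterations; each \textsc{MRGreedy} call runs $\gamma$ rounds; each round of the matroid-greedy adds at most $k$ elements; and evaluating the marginal gain of a candidate on $mhr_{\tau}(\cdot|\mathcal{N})$ costs $O(m)$ time if the running maxima $\max_{q\in S}f_{u}(q)$ are maintained incrementally for every $u\in\mathcal{N}$. Multiplying these factors and substituting $m=O(\delta^{-(d-1)}\log(1/\delta))$ and $\gamma=O(d\log\frac{1}{\delta\varepsilon})$, then absorbing the sub-dominant $d$- and log-factors into the leading $\delta^{-d}$ term, delivers the stated $O\!\bigl(nk\varepsilon^{-2}\delta^{-d}\log^{2}(1/\delta)\bigr)$ bound.

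The trickiest step will be the middle one: one has to argue that the discrete grid value $\tau_{0}$ actually visited by the loop can stand in for the idealized ``largest feasible capped value $\tau^{*}$'' in the hypothesis of Lemma~\ref{lem:greedy}. This needs Lemma~\ref{lem:capped} to translate between the capped-sum surrogate $mhr_{\tau}$ and the worst-case $mhr$, together with a careful accounting of the two compounding $(1-\varepsilon/2)$ losses — one from the $\tau$-grid and one from the multi-round greedy — so that they telescope cleanly into a single $(1-\varepsilon)$ loss without leaving any residual dependence on $m$ or $d$.
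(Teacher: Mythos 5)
Your proposal is correct and follows essentially the same route as the paper's own proof: bounding $|S|$ by $\gamma k$ with $\gamma=\lceil\log_2\frac{2m}{\varepsilon}\rceil$, locating a grid value $\tau_0$ within a $(1-\frac{\varepsilon}{2})$ factor of the optimal capped value via Lemma~\ref{lm:delta:net} and Lemma~\ref{lem:capped}, invoking Lemma~\ref{lem:greedy} (with parameter $\varepsilon/2$) for the multi-round greedy, and chaining the losses to get $(1-\varepsilon-\frac{\delta}{\mathtt{OPT}})\cdot\mathtt{OPT}$, with the same $O\big(\frac{\log m}{\varepsilon}\cdot nmk\log\frac{m}{\varepsilon}\big)$ time accounting. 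Your explicit handling of why the visited grid value $\tau_0$ can play the role of the largest feasible capped value is, if anything, slightly more careful than the paper's informal application of Lemma~\ref{lem:greedy}.
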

\begin{proof}
  First of all, we have $|S| \leq \lceil \log_2{\frac{2m}{\varepsilon}} \rceil \cdot k$ and $\gamma' \cdot l_c \leq |S \cap \mathcal{D}_c| \leq \gamma' \cdot h_c$ for each $c \in [C]$ because $|S_i| = k$ and $l_c \leq |S_i \cap \mathcal{D}_c| \leq h_c$ for each $S_i$ in the $i$-th round and $\gamma = \lceil \log_2{\frac{2m}{\varepsilon}} \rceil$, where $\gamma' \leq \gamma$ is the practical number of rounds for the greedy algorithm. Then, let $S^*$ and $\mathtt{OPT}$ be the optimal solution to FairHMS and its minimum happiness ratio, respectively. According to Lemma~\ref{lm:delta:net}, since $\mathcal{N}$ is a $\frac{\delta}{d(2 - \delta)}$-net, we have $\mathtt{OPT} \leq mhr(S^* | \mathcal{N}) \leq \mathtt{OPT} + \delta$. No matter what is $mhr(S^* | \mathcal{N})$, there always exists a capped value $\tau$ in Algorithm~\ref{alg:bicriteria} such that $\tau \in [(1-\frac{\varepsilon}{2})\cdot mhr(S^* | \mathcal{N}), mhr(S^* | \mathcal{N})]$. For such a capped value $\tau$, its corresponding solution $S$ returned by \textsc{MRGreedy} satisfies that $mhr(S|\mathcal{N}) \geq (1-\frac{\varepsilon}{2}) \cdot \tau$ by Lemma~\ref{lem:greedy}. Therefore, we have:
  \begin{align*}
    mhr(S) & \geq mhr(S|\mathcal{N}) - \delta \geq (1-\varepsilon) \cdot mhr(S^* | \mathcal{N}) - \delta \\
           & \geq (1-\varepsilon) \cdot \mathtt{OPT} - \delta = (1 - \varepsilon - \frac{\delta}{\mathtt{OPT}}) \cdot \mathtt{OPT}
  \end{align*}
  Since $m = O(\delta^{-d})$, we prove that the \textsc{BiGreedy} algorithm is an $\big(O(d \log{\frac{1}{\delta\varepsilon}}), 1 - \varepsilon - \frac{\delta}{\mathtt{OPT}}\big)$-bicriteria approximation algorithm for the FairHMS problem.
  
  In terms of time complexity, there are at most $O(\frac{\log m}{\varepsilon})$ different values of $\tau$. For each value of $\tau$, \textsc{MRGreedy} runs in $O(n m k \log{\frac{m}{\varepsilon}})$ time because it takes $O(nm)$ time to find $p^*$ per iteration and has at most $k$ iterations in $O(\log{\frac{m}{\varepsilon}})$ rounds. Therefore, the overall time complexity of \textsc{BiGreedy} is $O\big(n k \varepsilon^{-2} \delta^{-d} \log^2{\frac{1}{\delta}}\big)$.
\end{proof}
Theorem~\ref{thm:bicriteria} indicates that the solution of \textsc{BiGreedy} is near-optimal for FairHMS when $\varepsilon$ and $\delta$ are arbitrarily small. Compared with the best possible approximation ratio, i.e., $\log{m} = O(d\log{\frac{1}{\delta}})$, the approximation ratio of \textsc{BiGreedy} is lowered by a factor of $O(\log{\frac{1}{\varepsilon}})$.
Finally, to ensure that $S'$ returned by \textsc{BiGreedy} is feasible for FairHMS (\ie $|S'| = k$ and $S' \in \mathcal{I}$), we replace $k$ with $k' = \frac{k}{\lceil \log_2{\frac{2m}{\varepsilon}} \rceil}$ in Algorithm~\ref{alg:bicriteria}. If the rounding errors are ignored, \textsc{BiGreedy} with input $k'$ returns a feasible solution to FairHMS whose MHR is close to $\mathtt{OPT}'$, where $\mathtt{OPT}'$ is the optimal MHR for solution size $k'$.

\begin{algorithm}[tb]
  \small
  \caption{\textsc{BiGreedy+}}
  \label{alg:impl:bicriteria}
  \KwIn{Dataset $\mathcal{D} \subseteq \mathbb{R}^{d}_{+}$ with $\mathcal{D} = \bigcup_{c=1}^{C} \mathcal{D}_c$; solution size $k \in \mathbb{Z}^{+}$; matroid $\mathcal{M} = (\mathcal{D}, \mathcal{I})$; parameters $\lambda, \varepsilon \in (0, 1)$}
  \KwOut{A feasible set $S' \subseteq \mathcal{D}$ for FairHMS}
  Draw a set $\mathcal{N}_0$ of $m_0$ vectors from $\mathbb{S}^{d-1}_{+}$\;\label{ln:bgp:init:1}
  Run \textsc{BiGreedy} on $\mathcal{N}_0$ to get $S'_0$ with capped value $\tau_0$\;\label{ln:bgp:init:2}
  \For{$i \gets 1, \ldots, \lceil \log_2{\frac{M}{m_0}} \rceil$}
  {
	Draw a set $\mathcal{N}_i$ of $m_i = 2 m_{i-1}$ vectors from $\mathbb{S}^{d-1}_{+}$\;\label{ln:bgp:iter:1}
	Run \textsc{BiGreedy} on $\mathcal{N}_i$ to get $S'_i$ with capped value $\tau_i$\;\label{ln:bgp:iter:2}
	\If{$\tau_{i - 1} - \tau_{i} < \lambda$\label{ln:bgp:cond:s}}
	{
	  \textbf{break}\;\label{ln:bgp:cond:t}
	}
  }
  \Return{$S' \gets \argmax_{i} mhr(S'_i)$}\;\label{ln:bgp:sol}
\end{algorithm}

\subsection{Adaptive Sampling}
\label{subsec:imp}

In this subsection, we consider an adaptive sampling strategy for $\delta$-net computation to improve the performance of the \textsc{BiGreedy} algorithm in practice and propose the \textsc{BiGreedy+} algorithm in Algorithm~\ref{alg:impl:bicriteria} accordingly.
The main reason why \textsc{BiGreedy} is not efficient in practice is the huge number $m = O(\delta^{-d})$ of vectors to form a $\frac{\delta}{d(2 - \delta)}$-net $\mathcal{N}$, particularly so when $d$ is high.
However, such a large sample size is often not necessary in practice. Thus, we propose an adaptive method to reduce the sample size. The basic idea is to initialize with a small size $m_0$ (Line~\ref{ln:bgp:init:1}) and run \textsc{BiGreedy} on a set $\mathcal{N}_0$ of vectors ($m_0 = |\mathcal{N}_0|$) to get a solution $S'_0$ with capped value $\tau_0$ (Line~\ref{ln:bgp:init:2}). Then, we double the sample size $m_i = 2m_{i-1}$ and run \textsc{BiGreedy} again to get $S'_i$ with capped value $\tau_i$ (Lines~\ref{ln:bgp:iter:1}--\ref{ln:bgp:iter:2}). If the capped values $\tau_i$ and $\tau_{i - 1}$ in two consecutive rounds are close to each other (\ie $\tau_{i - 1} - \tau_i < \lambda$), the sample size will be regarded as large enough; otherwise, we will further increase $m$ and run \textsc{BiGreedy} again until $m$ exceeds a predefined limit $M$ (Lines~\ref{ln:bgp:cond:s}--\ref{ln:bgp:cond:t}), \eg $M = O(\delta^{-d})$. Finally, the solution with the largest minimum happiness ratio among the ones computed in different rounds is returned as the final solution $S'$ (Line~\ref{ln:bgp:sol}). The worst-case running time of \textsc{BiGreedy+} is the same as \textsc{BiGreedy} when $M = O(\delta^{-d})$. But its practical efficiency is significantly higher than \textsc{BiGreedy} because it often terminates with much smaller $m$. At the same time, its solution quality is close to that of \textsc{BiGreedy} in most cases, as will be shown in the experiments.

\section{Experiments}
\label{sec:experiments}

In this section, we evaluate the performance of our proposed algorithms on synthetic and real-world datasets. We first introduce the experimental setup in Section~\ref{ssec:exp:seutp}. Then, the experimental results are reported in Section~\ref{ssec:exp:results}.

\begin{figure*}[t]
    \centering
    \includegraphics[height=0.12in]{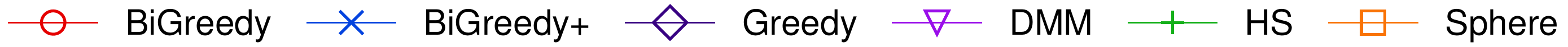}
    \\
    \begin{subfigure}[b]{0.19\textwidth}
        \centering
        \includegraphics[width=\textwidth]{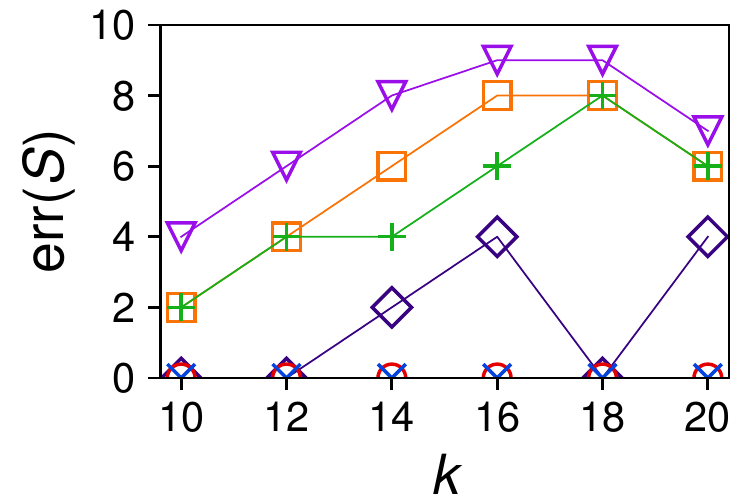}
        \caption{Adult (Gender)}
    \end{subfigure}
    \hfill
    \begin{subfigure}[b]{0.19\textwidth}
        \centering
        \includegraphics[width=\textwidth]{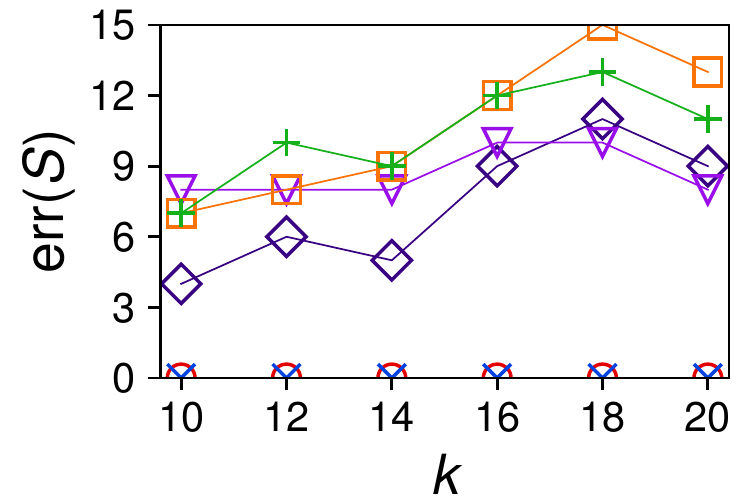}
        \caption{Adult (Race)}
    \end{subfigure}
    \hfill
    \begin{subfigure}[b]{0.19\textwidth}
        \centering
        \includegraphics[width=\textwidth]{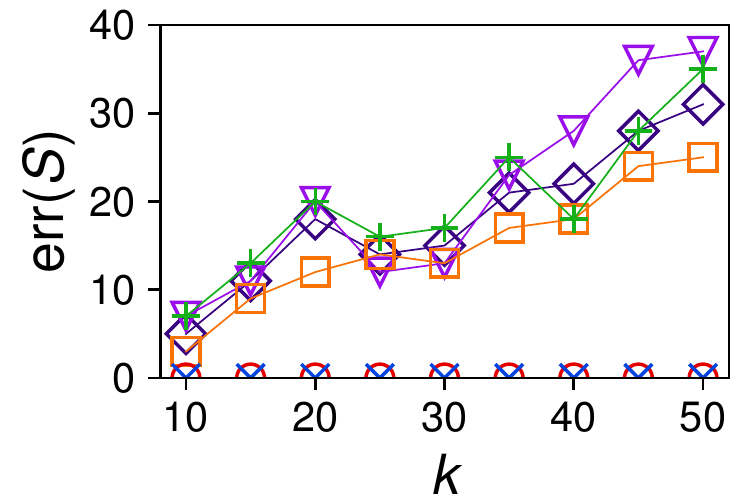}
        \caption{AntiCor\_6D}
    \end{subfigure}
    \hfill
    \begin{subfigure}[b]{0.19\textwidth}
        \centering
        \includegraphics[width=\textwidth]{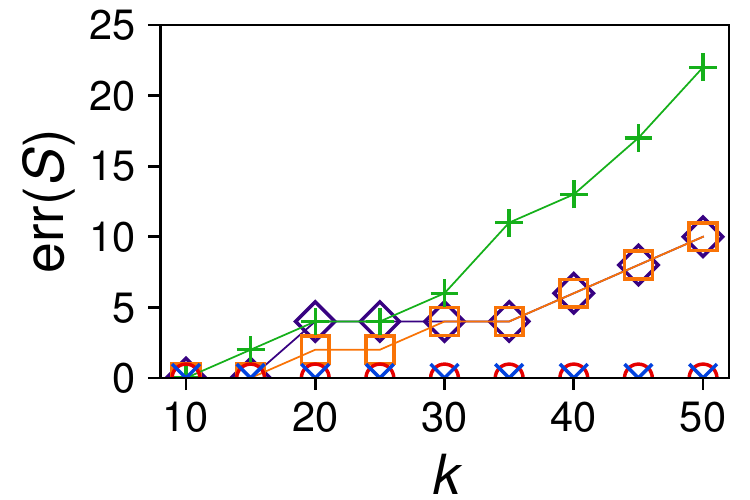}
        \caption{Compas (Gender)}
    \end{subfigure}
    \hfill
    \begin{subfigure}[b]{0.19\textwidth}
        \centering
        \includegraphics[width=\textwidth]{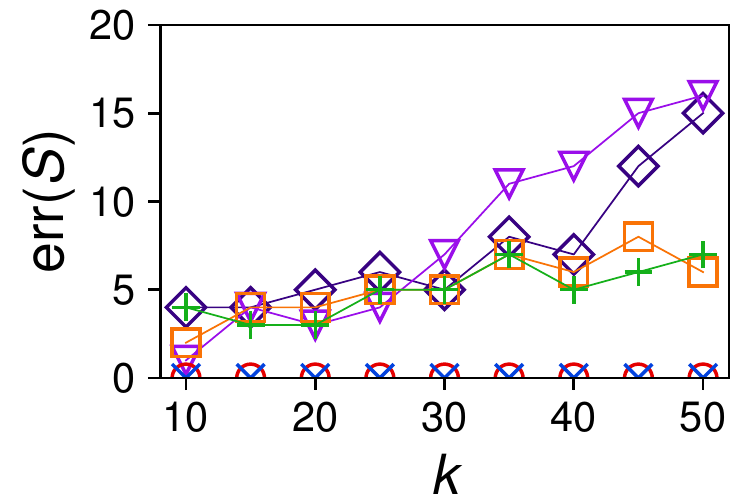}
        \caption{Credit (Job)}
    \end{subfigure}
    \caption{Numbers of fairness violations of different algorithms. Their original implementations without considering group fairness constraints are used for \AlgDMM, \AlgNWF, HS, and \AlgSPH.}
    \label{fig:fair}
    \Description{experimental results}
\end{figure*}

\subsection{Experimental Setup}
\label{ssec:exp:seutp}
All algorithms were implemented in C++, and all experiments were conducted on a PC running Ubuntu 18.04 LTS with a 3.00GHz processor and 32GB memory. We used the CPU time of each algorithm and the minimum happiness ratios (MHRs) of their solutions as the efficiency and effectiveness measures. Furthermore, we used the number of fairness violations defined in~\cite{DBLP:conf/nips/HalabiMNTT20}, \ie for a given $S$,
\begin{equation}\label{eq:fair}
err(S) = \sum_{c \in [C]} \max\{|S \cap \mathcal{D}_c| - h_c, l_c - |S \cap \mathcal{D}_c|, 0\}
\end{equation}
to measure how unfair a solution is compared to the one that satisfies the group fairness constraint.

\smallskip\noindent\textbf{Datasets:}
The experiments are conducted on one synthetic and four real-world datasets as follows.
\begin{itemize}
  \item \textbf{Anti-Correlated} are synthetic datasets with $d \in \{2,\ldots,8\}$ and $n \in \{10^2,\ldots,10^6\}$ created by the generator in~\cite{Borzsony:2001}. Each dataset is divided into $C \in \{2,\ldots,10\}$ groups as follows: we sort the points by the sums of their attributes and divide them into $C$ equal-sized groups accordingly. By default, we use the dataset with $d = 2$ or $6$, $n =$ 10,000 and $C = 3$.
  \item \textbf{Lawschs\footnote{\url{http://www.seaphe.org/databases.php}}} includes 65,494 students' information from 25 law schools from 2005 to 2007. We use two numerical attributes, namely \emph{LSAT} and \emph{GPA}, and two demographic attributes, namely \emph{gender} and \emph{race}, in our experiments.
  \item \textbf{Adult\footnote{\url{https://archive.ics.uci.edu/ml/datasets/adult}}} is a 5$d$ dataset with 32,561 tuples, each of which contains an individual's \emph{education years}, \emph{capital gain}, \emph{capital lose}, \emph{work hours per week} and \emph{overall weight}. The dataset is divided into groups by \emph{gender} and \emph{race}.
  \item \textbf{Compas\footnote{\url{https://github.com/propublica/compas-analysis}}} is a 9$d$ dataset of size 4,743. It consists of customer data from an insurance company. Each tuple denotes an applicant's \emph{insurance number of days}, \emph{count of priority}, \emph{comprehensive score}, and so on. The dataset is partitioned by \emph{gender} and \emph{whether the insured is recidivous}.
  \item \textbf{Credit\footnote{\url{https://archive.ics.uci.edu/ml/datasets/statlog+(german+credit+data)}}} is a 7$d$ dataset of 1,000 tuples about German credit information. The dataset is divided according to \emph{job}, \emph{housing condition}, and \emph{number of employment years}.
\end{itemize}
The statistics of all the above datasets are summarized in Table~\ref{tbl:datasets}.
All the tuples in each dataset are normalized, \ie each numerical attribute is scaled to $[0, 1]$.
For each dataset, the skylines are precomputed as input for each algorithm.
Except for the original groups defined by one categorical attribute, we also combine several of them to form new groups. For example, the combination of \emph{gender} and \emph{race} forms 10 new groups in the \emph{Adult} dataset.

\begin{table}[t]
    \centering
    \footnotesize
    \caption{Statistics of datasets in the experiments. Here, \#skylines is the sum of the numbers of skylines in all groups of each dataset extracted for solution computation.}
    \label{tbl:datasets}
    \begin{tabular}{cccccc}
    \toprule
    \textbf{Dataset} & \textbf{Group} & $d$ & $n$ & $C$ & \textbf{\#skylines} \\
    \midrule
    Anti-Correlated & - & $2$--$16$ & $10^2$--$10^6$ & $2$--$5$ & $0.9 n$--$n$   \\
    \midrule
    \multirow{2}{*}{Lawschs} & Gender & \multirow{2}{*}{2} & \multirow{2}{*}{65,494} & 2 & 19 \\
    & Race & & & 5 & 42 \\
    \midrule
    \multirow{3}{*}{Adult} & Gender & \multirow{3}{*}{5} & \multirow{3}{*}{32,561} & 2 & 130 \\
    & Race & & & 5 & 206 \\
    & G+R & & & 10 & 339 \\
    \midrule
    \multirow{3}{*}{Compas} & Gender & \multirow{3}{*}{9} & \multirow{3}{*}{4,743} & 2 & 195 \\
    & isRecid & & & 2 & 229 \\
    & G+iR & & & 4 & 296 \\
    \midrule
    \multirow{3}{*}{Credit} & Housing & \multirow{3}{*}{7} & \multirow{3}{*}{1,000} & 3 & 120 \\
    & Job & & & 4 & 126 \\
    & Working Years & & & 5 & 185 \\
    \bottomrule
    \end{tabular}
\end{table}

\begin{figure*}[t]
    \centering
    \includegraphics[height=0.12in]{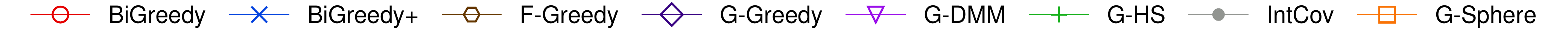}
    \\
    \begin{subfigure}[b]{0.19\textwidth}
        \centering
        \includegraphics[width=\textwidth]{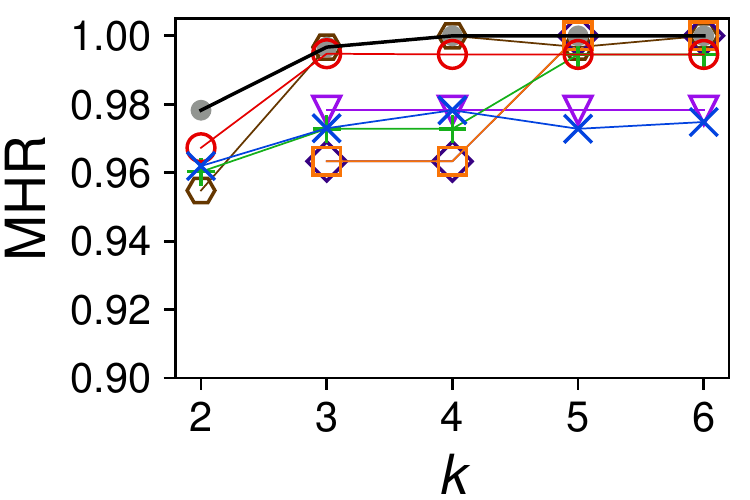}
        \caption{Lawschs (Gender)}
    \end{subfigure}
    \hfill
    \begin{subfigure}[b]{0.19\textwidth}
        \centering
        \includegraphics[width=\textwidth]{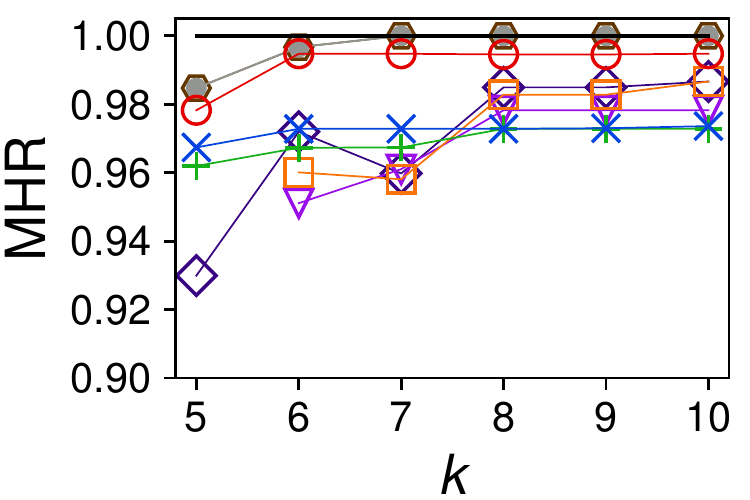}
        \caption{Lawschs (Race)}
    \end{subfigure}
    \hfill
    \begin{subfigure}[b]{0.19\textwidth}
        \centering
        \includegraphics[width=\textwidth]{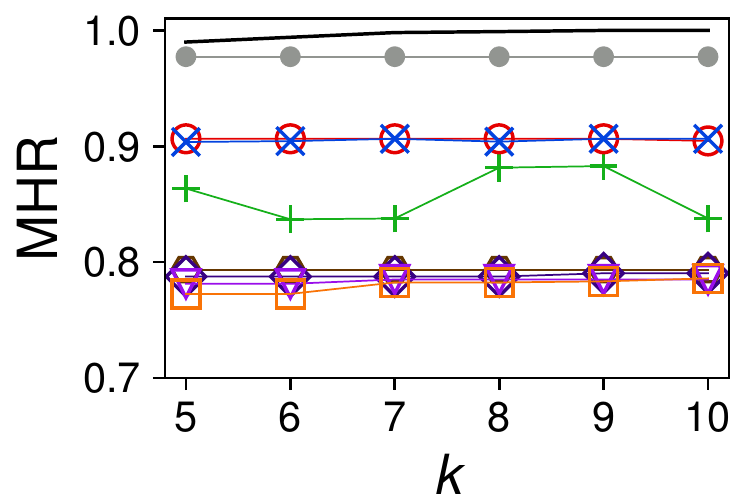}
        \caption{AntiCor\_2D}
    \end{subfigure}
    \hfill
    \begin{subfigure}[b]{0.19\textwidth}
        \centering
        \includegraphics[width=\textwidth]{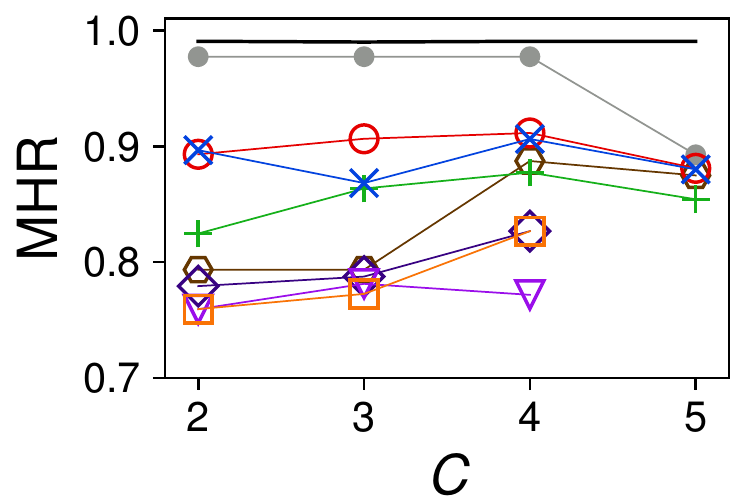}
        \caption{AntiCor\_2D ($k=5$)}
    \end{subfigure}
    \hfill
    \begin{subfigure}[b]{0.19\textwidth}
        \centering
        \includegraphics[width=\textwidth]{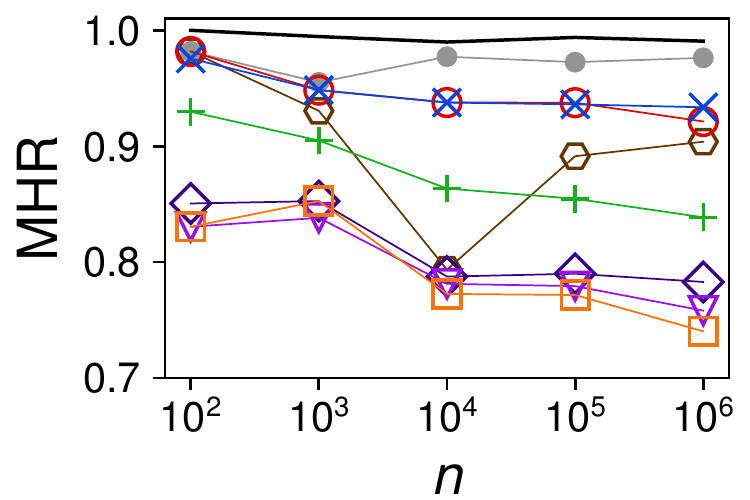}
        \caption{AntiCor\_2D ($k=5$)}
    \end{subfigure}
    \begin{subfigure}[b]{0.19\textwidth}
        \centering
        \includegraphics[width=\textwidth]{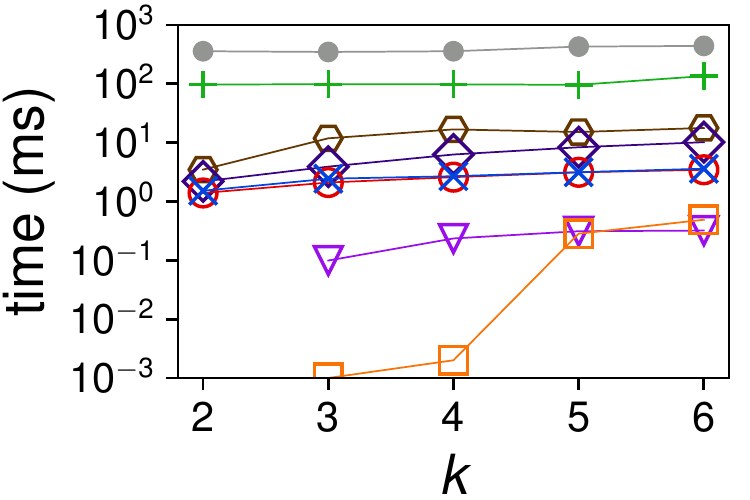}
        \caption{Lawschs (Gender)}
    \end{subfigure}
    \hfill
    \begin{subfigure}[b]{0.19\textwidth}
        \centering
        \includegraphics[width=\textwidth]{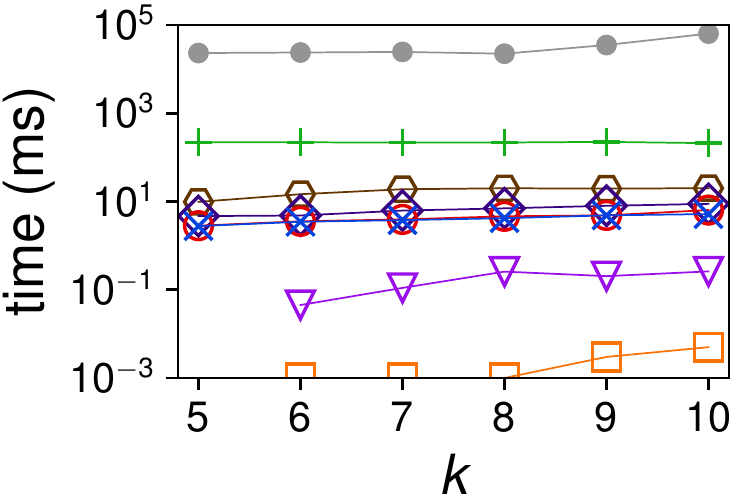}
        \caption{Lawschs (Race)}
    \end{subfigure}
    \hfill
    \begin{subfigure}[b]{0.19\textwidth}
        \centering
        \includegraphics[width=\textwidth]{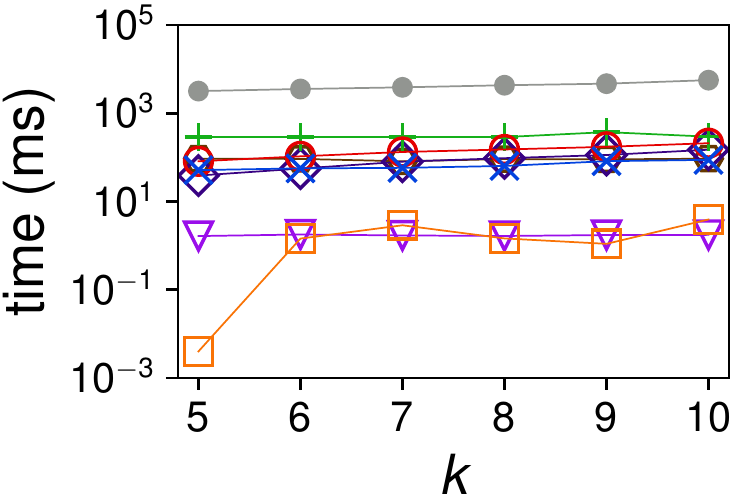}
        \caption{AntiCor\_2D}
    \end{subfigure}
    \hfill
    \begin{subfigure}[b]{0.19\textwidth}
        \centering
        \includegraphics[width=\textwidth]{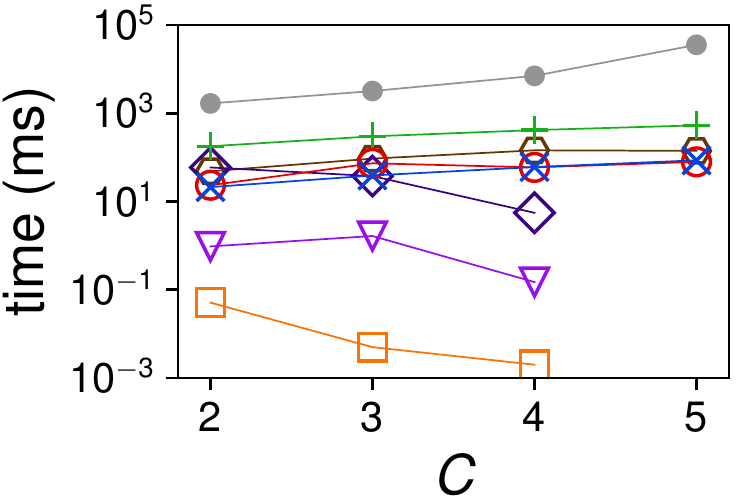}
        \caption{AntiCor\_2D ($k=5$)}
    \end{subfigure}
    \hfill
    \begin{subfigure}[b]{0.19\textwidth}
        \centering
        \includegraphics[width=\textwidth]{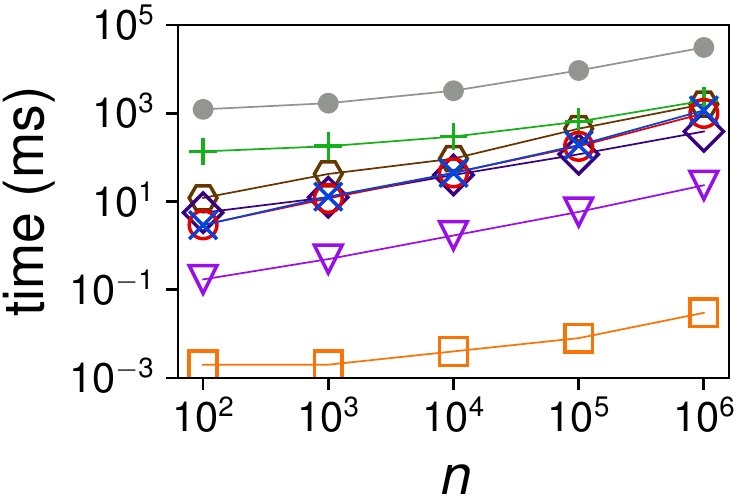}
        \caption{AntiCor\_2D ($k=5$)}
    \end{subfigure}
    \caption{Performance of different algorithms on two-dimensional datasets. In Figures (a)-(e), the MHRs of the optimal solutions without fairness constraints are plotted as black lines to illustrate the ``price of fairness''.}
    \Description{experimental results}
    \label{fig:2d}
\end{figure*}

\smallskip\noindent\textbf{Algorithms:}
We compare the following algorithms for regret-minimizing (RMS) and happiness maximizing set (HMS) problems.
\begin{itemize}
  \item \AlgTwoD is our exact algorithm for FairHMS on 2$d$ datasets proposed in Section~\ref{sec:2d}.
  \item \AlgBG is our bicriteria approximation algorithm for FairHMS in Section~\ref{subsec:bigreedy}.
  \item \AlgIBG is the improved version of \AlgBG by introducing adaptive sampling in Section~\ref{subsec:imp}.
  \item \AlgNWF is an RDP-Greedy algorithm for RMS in~\cite{Nanongkai:2010}.
  \item \AlgDMM is a set-cover-based RMS algorithm in~\cite{Asudeh:2017}.
  \item \AlgSPH is an $\varepsilon$-kernel-based algorithm for RMS in~\cite{Xie:2018}.
  \item HS is a hitting-set-based RMS algorithm in~\cite{Agarwal:2017, Kumar:2018}.
\end{itemize}
We follow the original papers to set the parameters in \AlgDMM, \AlgSPH, and HS.
For \AlgBG and \AlgIBG, we perform some preliminary experiments, which are included in Appendix~\ref{app:exp}, to evaluate the effects of the values of $\delta$, $\varepsilon$, and $\lambda$.
In practice, we use an increasing value of $\delta$ with $d$ to restrict the sample size $m = O(kd)$ so that \AlgBG fits in memory. The values of $m_0$ and $M$ in \AlgIBG are set to $0.05m$ and $m$ accordingly.
Moreover, we fix $\varepsilon = 0.02, \lambda=0.04$ since using smaller $\varepsilon, \lambda$ cannot further improve the quality of solutions.

Note that all the above algorithms, except \AlgTwoD, \AlgBG, and \AlgIBG, are not designed for FairHMS.
To obtain fair solutions, we run those algorithms on each group separately and take the union of the group-specific solutions as the final solution. Such an adapted version of each algorithm is referred to with a prefix `G-' before its name, \eg G-\AlgDMM for the adapted \AlgDMM. For \AlgNWF, an alternative adaptation scheme for fairness constraints is to use the \emph{matroid greedy} algorithm in~\cite{DBLP:conf/nips/HalabiMNTT20}, which adds an item that maximally increases the MHR w.r.t.~the set of selected items while ensuring the satisfaction of fairness constraint at each iteration until $k$ items are included. We refer to it as F-\AlgNWF in the experiments. Nevertheless, to our knowledge, other algorithms cannot be adapted for fairness constraints similarly to \AlgNWF. Also, note that none of the adapted algorithms can achieve any theoretical guarantee for FairHMS.

To verify whether these algorithms can produce fair solutions without explicitly imposing fairness constraints, we run them on the original dataset and compute the number $err(\cdot)$ of fairness violations. We use the proportional representation~\cite{DBLP:conf/nips/HalabiMNTT20}, where the proportion of each group in the solution is approximately equal to that of the original dataset, as the fairness constraint. Specifically, we set $l_c$ to $\lfloor (1 - \alpha) k \cdot \frac{|\mathcal{D}_c|}{|\mathcal{D}|} \rfloor$ or at least $1$ and $h_c$ to $\lceil (1 + \alpha) k \cdot \frac{|\mathcal{D}_c|}{|\mathcal{D}|} \rceil$ or at most $k-C+1$, where $\alpha = 0.1$ following~\cite{DBLP:conf/nips/HalabiMNTT20}.

\subsection{Experimental Results}
\label{ssec:exp:results}

\begin{figure*}[t]
    \centering
    \includegraphics[height=0.12in]{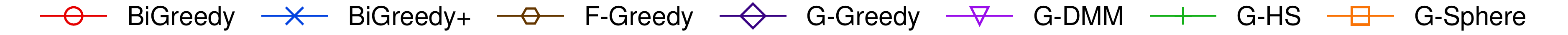}
    \\
    \begin{subfigure}[b]{0.19\textwidth}
        \centering
        \includegraphics[width=\textwidth]{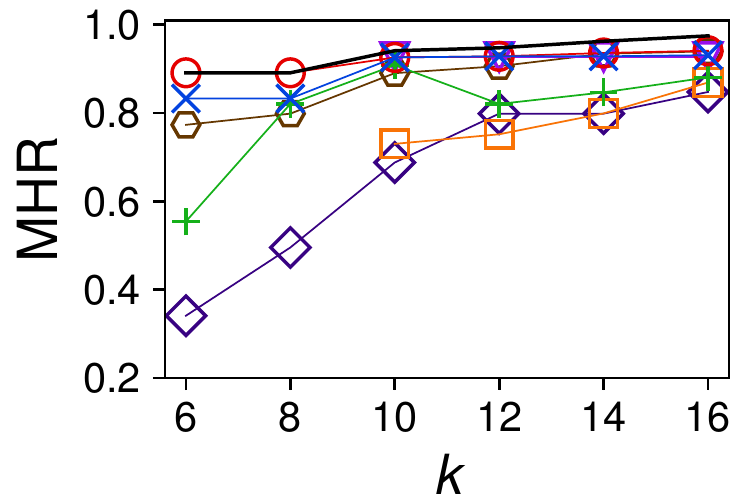}
        \caption{Adult (Gender)}
    \end{subfigure}
    \hfill
    \begin{subfigure}[b]{0.19\textwidth}
        \centering
        \includegraphics[width=\textwidth]{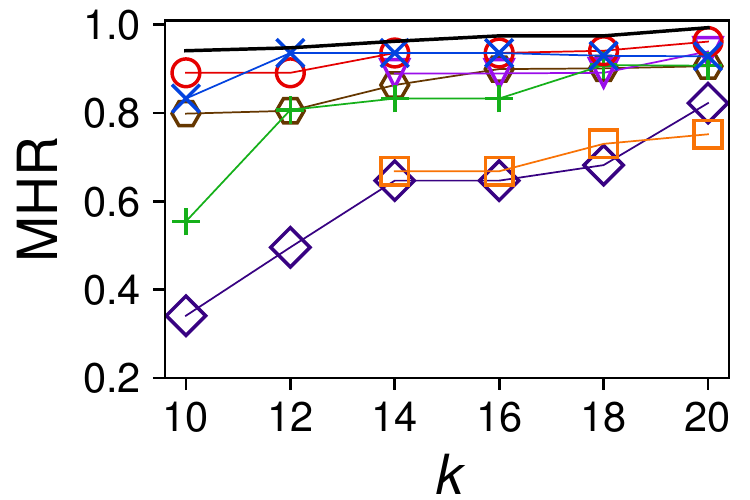}
        \caption{Adult (Race)}
    \end{subfigure}
    \hfill
    \begin{subfigure}[b]{0.19\textwidth}
        \centering
        \includegraphics[width=\textwidth]{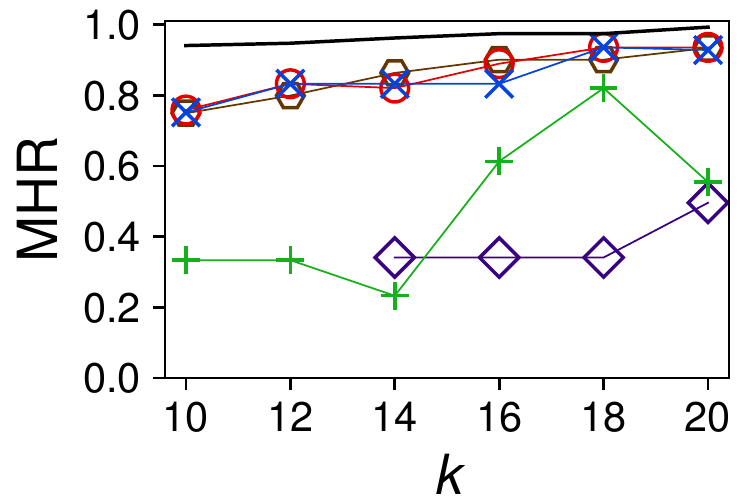}
        \caption{Adult (G+R)}
    \end{subfigure}
    \hfill
    \begin{subfigure}[b]{0.19\textwidth}
        \centering
        \includegraphics[width=\textwidth]{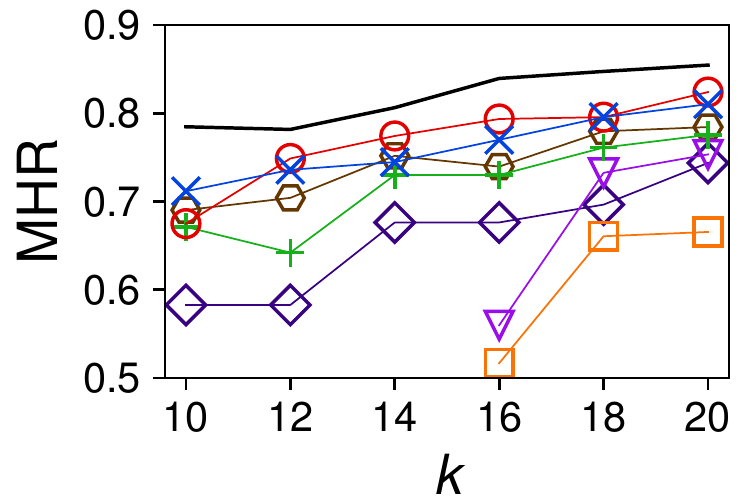}
        \caption{AntiCor\_6D}
    \end{subfigure}
    \hfill
    \begin{subfigure}[b]{0.19\textwidth}
        \centering
        \includegraphics[width=\textwidth]{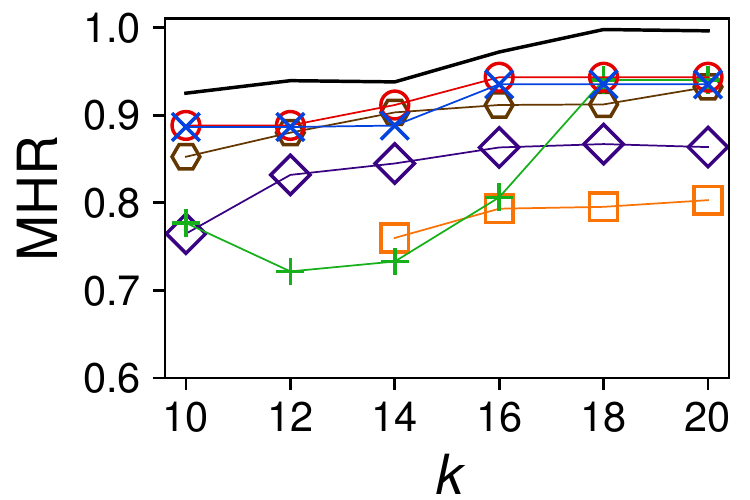}
        \caption{Compas (Gender)}
    \end{subfigure}
    \begin{subfigure}[b]{0.19\textwidth}
        \centering
        \includegraphics[width=\textwidth]{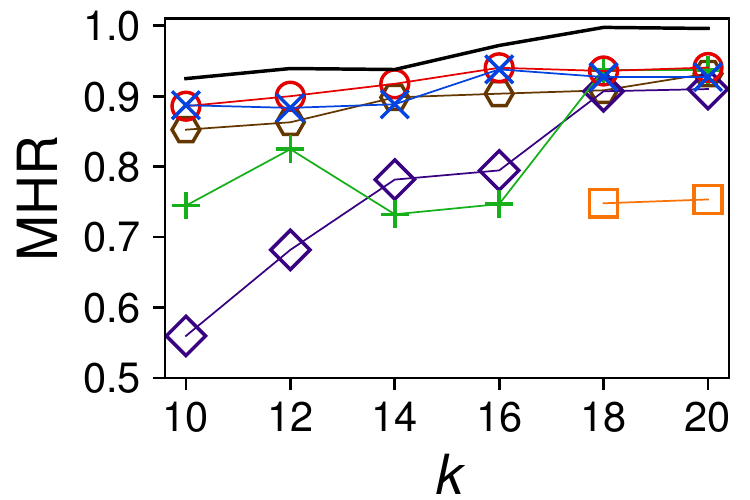}
        \caption{Compas (isRecid)}
    \end{subfigure}
    \hfill
    \begin{subfigure}[b]{0.19\textwidth}
        \centering
        \includegraphics[width=\textwidth]{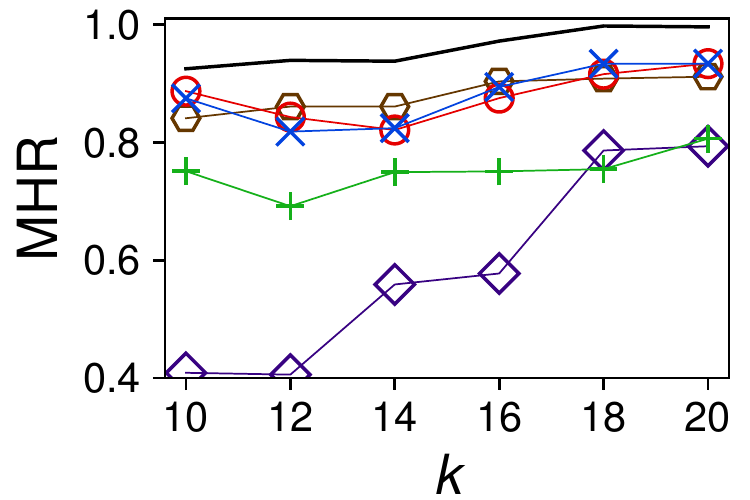}
        \caption{Compas (G+iR)}
    \end{subfigure}
    \hfill
    \begin{subfigure}[b]{0.19\textwidth}
        \centering
        \includegraphics[width=\textwidth]{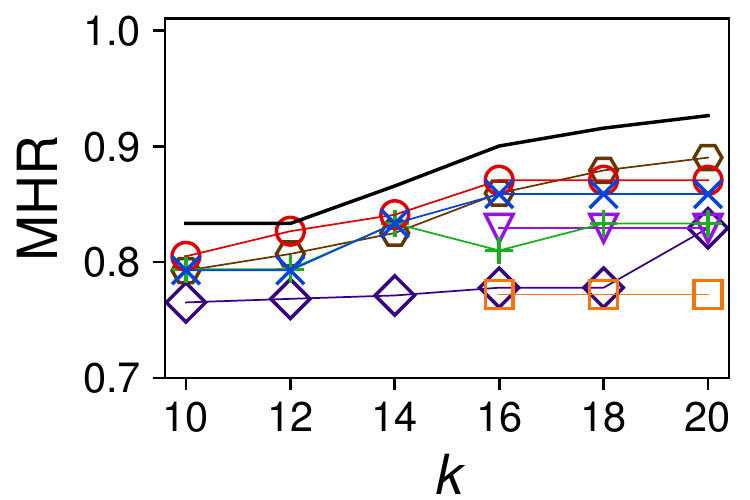}
        \caption{Credit (Job)}
    \end{subfigure}
    \hfill
    \begin{subfigure}[b]{0.19\textwidth}
        \centering
        \includegraphics[width=\textwidth]{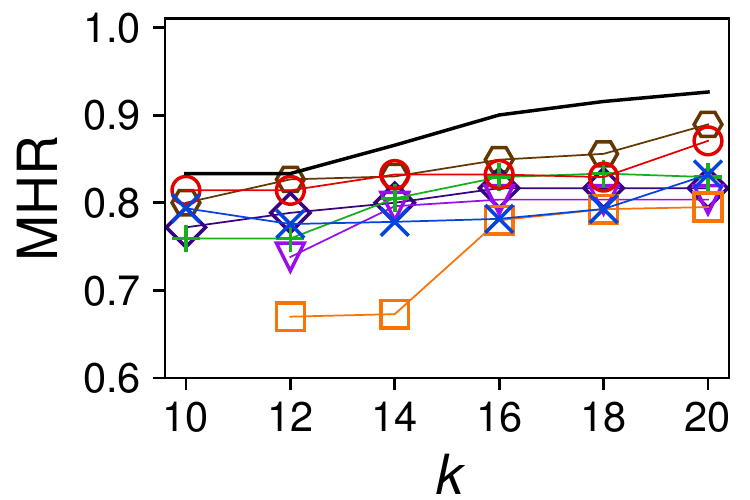}
        \caption{Credit (Housing)}
    \end{subfigure}
    \hfill
    \begin{subfigure}[b]{0.19\textwidth}
        \centering
        \includegraphics[width=\textwidth]{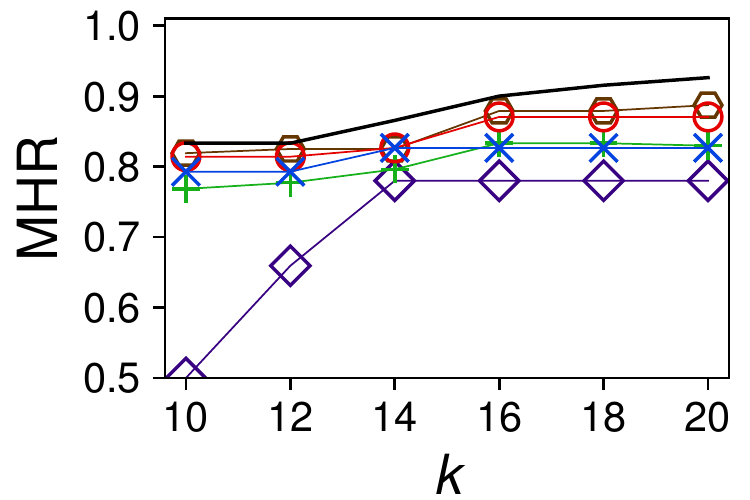}
        \caption{Credit (WY)}
    \end{subfigure}
    \caption{Results for the MHRs of different algorithms on multi-dimensional datasets by varying solution size $k$. In each figure, the MHRs of the best solutions without fairness constraints are plotted as black lines to illustrate the ``price of fairness''.}
    \Description{experimental results}
    \label{fig:md:mhr}
\end{figure*}

\begin{figure*}[t]
    \centering
    \includegraphics[height=0.12in]{exp/legend-md.pdf}
    \\
    \begin{subfigure}[b]{0.19\textwidth}
        \centering
        \includegraphics[width=\textwidth]{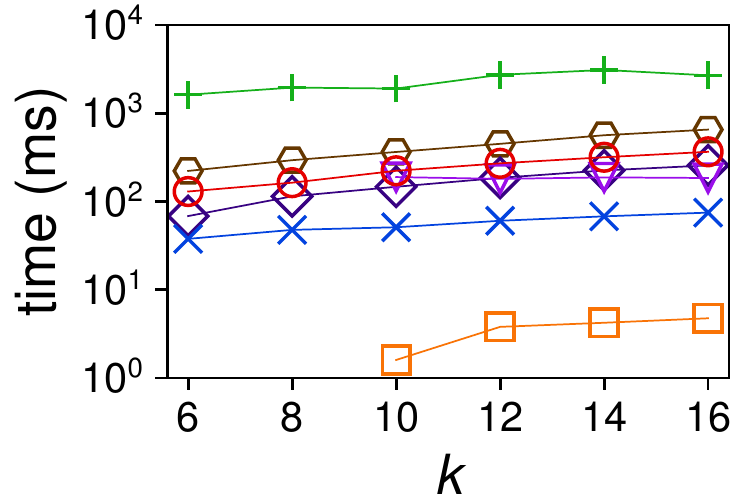}
        \caption{Adult (Gender)}
    \end{subfigure}
    \hfill
    \begin{subfigure}[b]{0.19\textwidth}
        \centering
        \includegraphics[width=\textwidth]{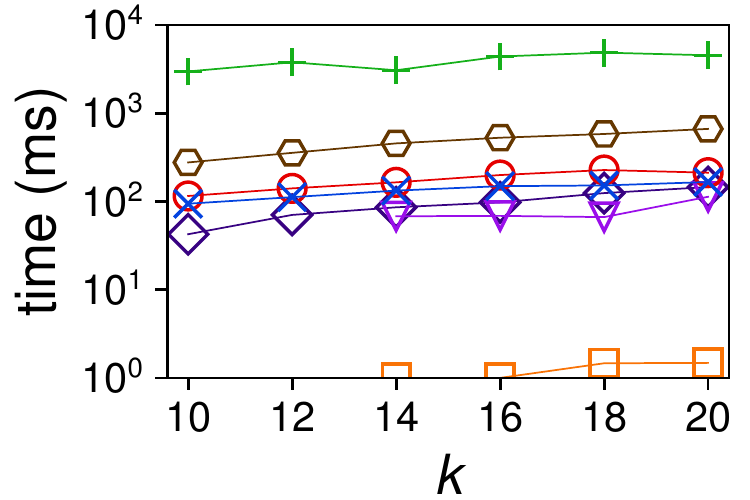}
        \caption{Adult (Race)}
    \end{subfigure}
    \hfill
    \begin{subfigure}[b]{0.19\textwidth}
        \centering
        \includegraphics[width=\textwidth]{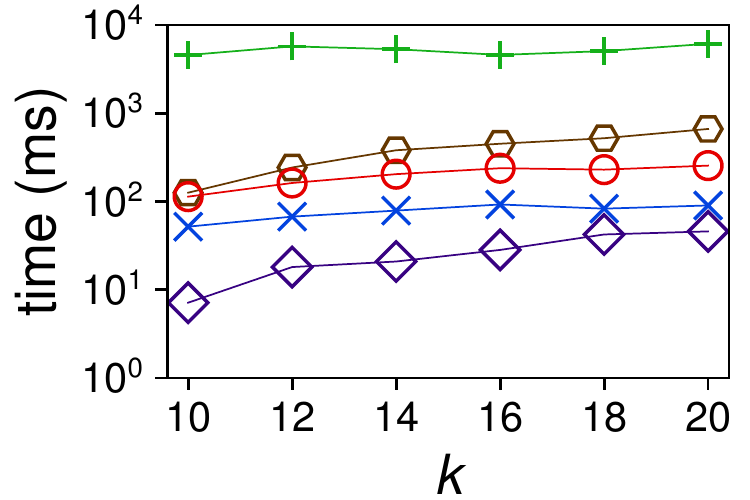}
        \caption{Adult (G+R)}
    \end{subfigure}
    \hfill
    \begin{subfigure}[b]{0.19\textwidth}
        \centering
        \includegraphics[width=\textwidth]{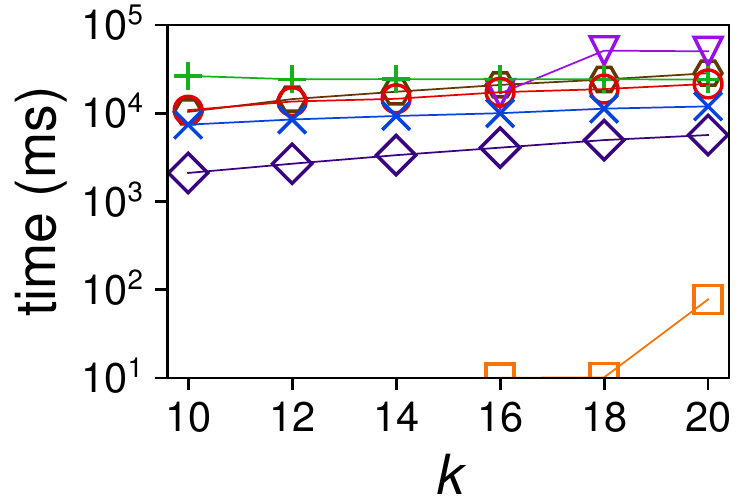}
        \caption{AntiCor\_6D}
    \end{subfigure}
    \hfill
    \begin{subfigure}[b]{0.19\textwidth}
        \centering
        \includegraphics[width=\textwidth]{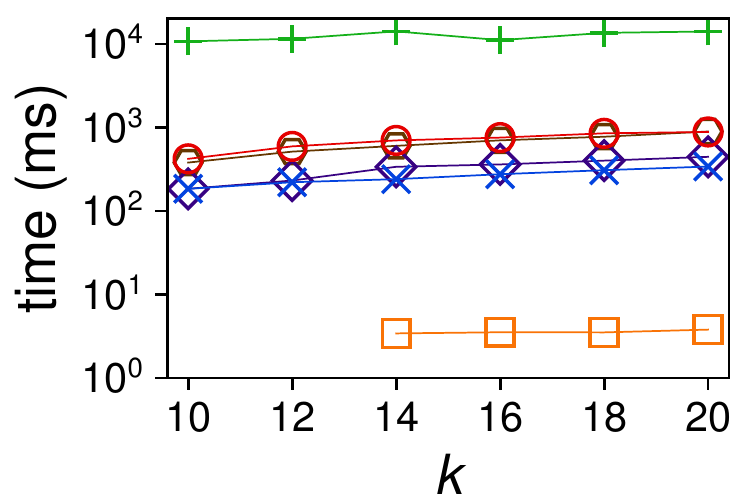}
        \caption{Compas (Gender)}
    \end{subfigure}
    \begin{subfigure}[b]{0.19\textwidth}
        \centering
        \includegraphics[width=\textwidth]{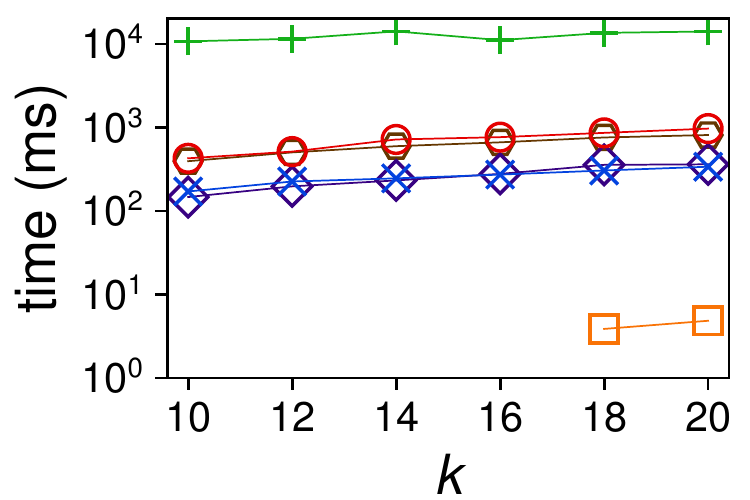}
        \caption{Compas (isRecid)}
    \end{subfigure}
    \hfill
    \begin{subfigure}[b]{0.19\textwidth}
        \centering
        \includegraphics[width=\textwidth]{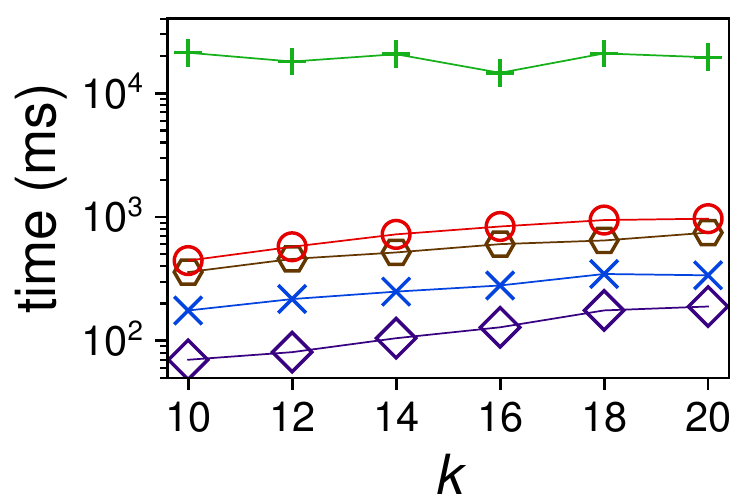}
        \caption{Compas (G+iR)}
    \end{subfigure}
    \hfill
    \begin{subfigure}[b]{0.19\textwidth}
        \centering
        \includegraphics[width=\textwidth]{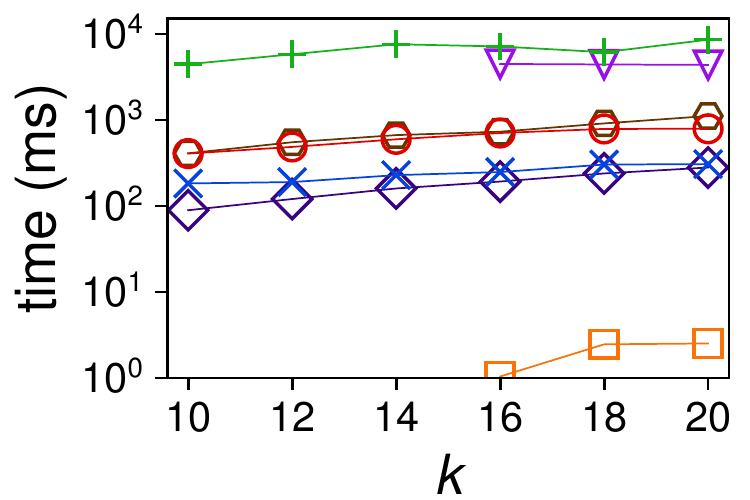}
        \caption{Credit (Job)}
    \end{subfigure}
    \hfill
    \begin{subfigure}[b]{0.19\textwidth}
        \centering
        \includegraphics[width=\textwidth]{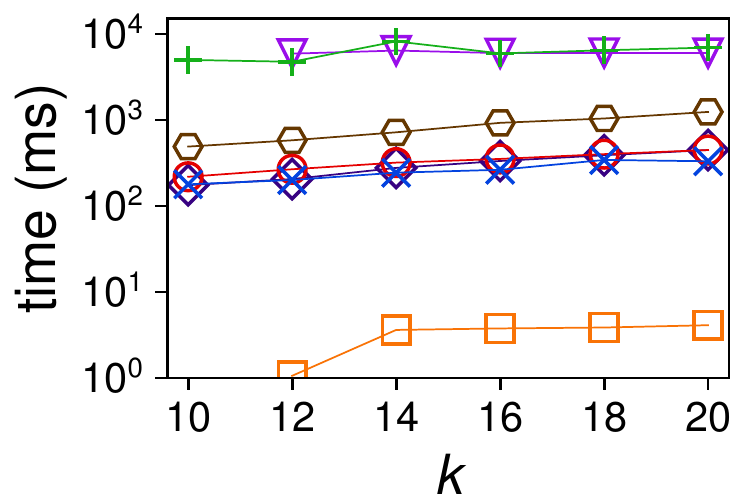}
        \caption{Credit (Housing)}
    \end{subfigure}
    \hfill
    \begin{subfigure}[b]{0.19\textwidth}
        \centering
        \includegraphics[width=\textwidth]{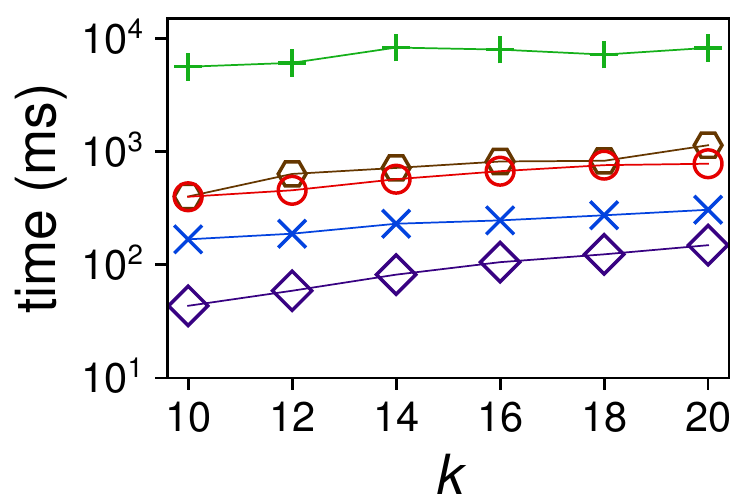}
        \caption{Credit (WY)}
    \end{subfigure}
    \caption{Results for the running time of different algorithms on multi-dimensional datasets by varying solution size $k$.}
    \Description{experimental results}
    \label{fig:md:time}
\end{figure*}

\noindent\textbf{Fairness Violations:}
In Figure~\ref{fig:fair}, we present the number of fairness violations computed by $err(S)$ in Equation~\ref{eq:fair} for the solution $S$ returned by each algorithm with varying the solution size $k$. Here, $err(S) = 0$ means that $S$ satisfies the group fairness constraint, while the larger $err(S)$ is, the further away $S$ is from a fair solution.
All the baseline algorithms violate the group fairness constraints in almost all cases.
This is expected because their original implementations do not consider fairness; thus, their solutions show a bias towards the advantaged groups while under-representing the disadvantaged ones.
Our proposed algorithms always obtain a solution with $err(S)=0$ because they strictly follow the group fairness constraint.
Therefore, we will use the fair adaptations described in Section~\ref{ssec:exp:seutp} instead of the original (unfair) implementations of the baseline algorithms in the remaining experiments.

\smallskip\noindent\textbf{Results on Two-Dimensional Datasets:}
In Figure~\ref{fig:2d}, we compare the performance of our exact \AlgTwoD algorithm and approximation \AlgBG and \AlgIBG algorithms with the (adapted) baseline algorithms on two-dimensional datasets, \ie \emph{Lawschs} and \emph{AntiCor\_2D}.
Note that the results of G-\AlgDMM and G-\AlgSPH are ignored when $k$ is small or $C$ is large because they require that $k \geq d$ for computation and cannot provide any solution if there exists any $c \in [C]$ with $h_c < d$.
In the figures for MHRs, we plot the MHRs of the optimal solutions for unconstrained HMS as black lines, from which we observe that the ``price of fairness'', \ie the decrease in MHR led by fairness constraints, is low in most cases, as the differences in MHRs between the unconstrained and fair solutions are mostly within $0.02$.
Among the algorithms we compare, \AlgTwoD always obtains the highest MHRs due to its optimality. Meanwhile, it is also the slowest because of the time-consuming dynamic programming procedure.
All the remaining algorithms return near-optimal solutions (MHRs $>0.9$) on \emph{Lawschs} because the sizes of skylines (see Table~\ref{tbl:datasets}) are very small.
On \emph{AntiCor\_2D}, where the size of skylines is larger, \AlgBG and \AlgIBG are better than the baseline algorithms in terms of the quality of solutions.
In terms of time efficiency, all of the algorithms except \AlgTwoD finish within one second on all two-dimensional datasets.

In summary, \AlgTwoD provides exact solutions for FairHMS in reasonable time while \AlgBG and \AlgIBG return solutions of better quality than the baselines with high efficiency for FairHMS on two-dimensional datasets.

\smallskip\noindent\textbf{Results on Multi-Dimensional Datasets:}
In Figures~\ref{fig:md:mhr} and~\ref{fig:md:time}, we present the MHRs and running time of different algorithms on multi-dimensional datasets, \ie \emph{Adult}, \emph{AntiCor\_6D}, \emph{Compas}, and \emph{Credit}, with different group partitions by varying the solution size $k$.
We note that some results are omitted since the corresponding algorithms cannot obtain any solution. For example, \AlgDMM cannot finish when $d > 7$ due to the huge memory consumption. Thus, the results of G-\AlgDMM are ignored on \emph{Compas}. Moreover, as is the case for two-dimensional datasets, G-\AlgDMM and G-\AlgSPH still cannot provide any solution if there exists any $c \in [C]$ with $h_c < d$.

Regarding the quality of solutions, the MHRs of all algorithms generally increase with $k$.
In Figure~\ref{fig:md:mhr}, we plot the highest MHRs among the solutions returned by all the baseline algorithms for unconstrained HMS as black lines to illustrate the ``price of fairness''.
Compared with the results in 2$d$, the gaps in MHRs between unconstrained and fair solutions become larger in some cases, which may be due to two reasons. First, unconstrained solutions are merely picked from a few groups because of data skewness and thus substantially different from fair solutions. Second, unlike \AlgTwoD, the solutions of \AlgBG and \AlgIBG are suboptimal.
Furthermore, the MHRs of \AlgBG are the same as or slightly higher than those of \AlgIBG, both of which are greater than those of adapted baseline algorithms in most cases.
These results confirm the effectiveness of \AlgBG and \AlgIBG for FairHMS.
We also note that in a few cases, such as Figure~\ref{fig:md:mhr} (h)--(j), the MHRs of \AlgBG and \AlgIBG are slightly lower than those of F-\AlgNWF when $k$ is large. This is because the estimations of MHRs based on $\delta$-nets are inaccurate, as $\delta$ is too large for high dimensionality and the distributions of values for different attributes are highly skewed. In such cases, the MHR estimation using linear programs in F-\AlgNWF is more accurate.

In terms of time efficiency, we observe that \AlgIBG runs up to 5 times faster than \AlgBG as excepted because of smaller sample sizes for $\delta$-nets.
Although several baseline algorithms, e.g., G-\AlgSPH, G-\AlgNWF, and G-\AlgDMM (when $d \leq 6$), run faster than \AlgBG and \AlgIBG, their solution quality is inferior to \AlgBG and \AlgIBG in almost all cases. Since the adaptation of these algorithms for ensuring fairness constraints is merely to run one instance of the algorithm on the skylines of each group $\mathcal{D}_c$ with a smaller solution size $k_c \in [l_c, h_c]$, these algorithms have little losses in efficiency due to adaptation. But their solutions may include highly redundant tuples from different groups because the selection procedures for different groups are independent of each other. Hence, they suffer from significant decreases in MHRs.
Since the first step of \AlgSPH is to select the ``extreme'' points with the largest attribute values in each dimension, its solution mainly consists of these extreme points when $k$ is close to $d$. For this reason, G-\AlgSPH runs the fastest but fails to provide good solutions in almost all cases.
F-\AlgNWF runs slower than \AlgBG and \AlgIBG in most cases. In F-\AlgNWF, a costly linear program is executed on each item in the skyline at every iteration to find the items to add. Therefore, its running time is almost decided by the number of linear programs to execute, which is close to the original \AlgNWF in~\cite{Nanongkai:2010} and much longer than that of G-\AlgNWF.

To sum up, \AlgBG and \AlgIBG outperform the baseline algorithms in the quality of solutions on multi-dimensional datasets.
Compared with \AlgBG, \AlgIBG further improves the time efficiency at the expense of a bit of loss in effectiveness.
Furthermore, they still only take one or several seconds for solution computation on multi-dimensional datasets.

\begin{figure}[t]
    \centering
    \includegraphics[height=0.24in]{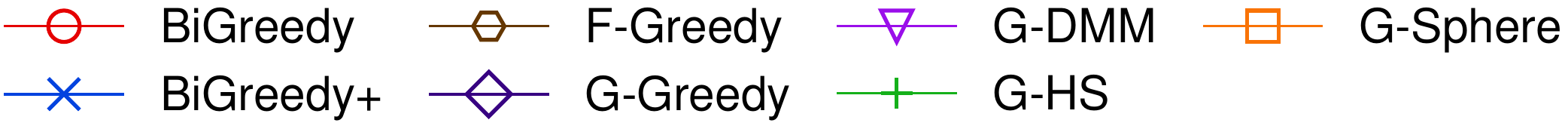}
    \\
    \begin{subfigure}[b]{0.4\textwidth}
        \centering
        \includegraphics[width=0.48\textwidth]{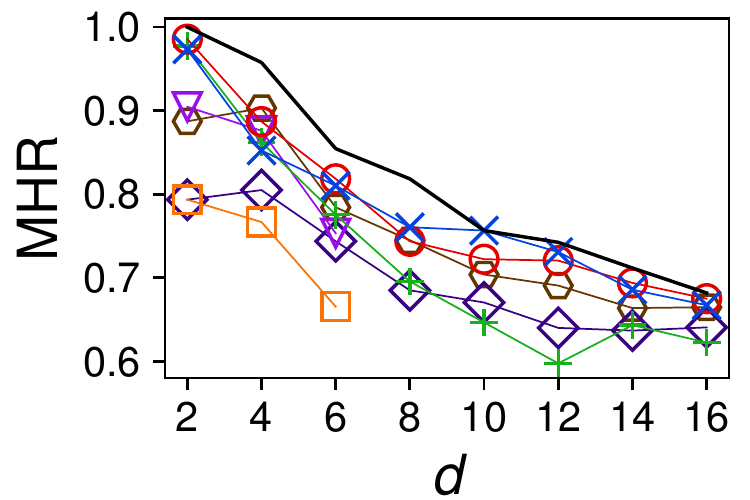}
        \hfill
        \includegraphics[width=0.48\textwidth]{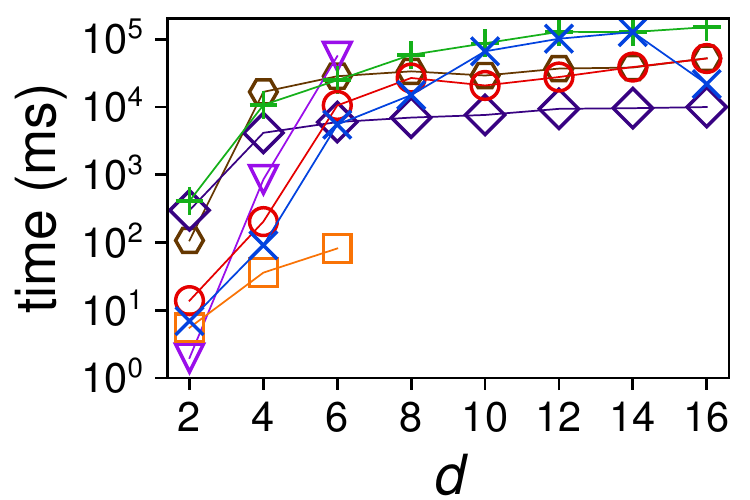}
        \caption{AntiCor (Varying $d$)}
    \end{subfigure}
    \\
    \begin{subfigure}[b]{0.4\textwidth}
        \centering
        \includegraphics[width=0.48\textwidth]{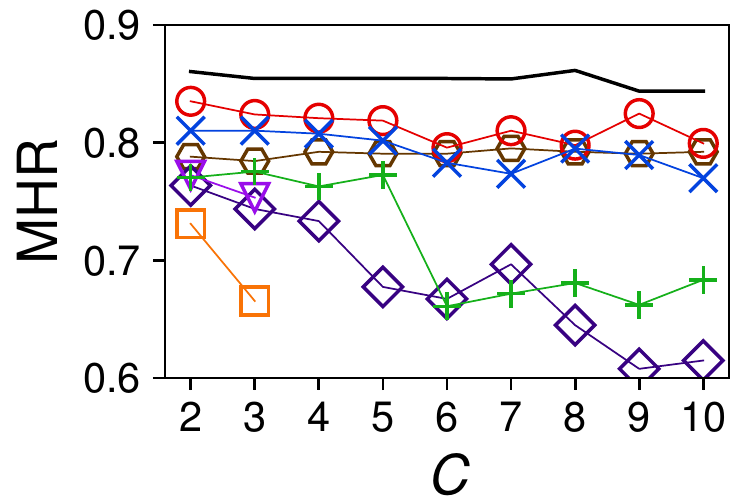}
        \hfill
        \includegraphics[width=0.48\textwidth]{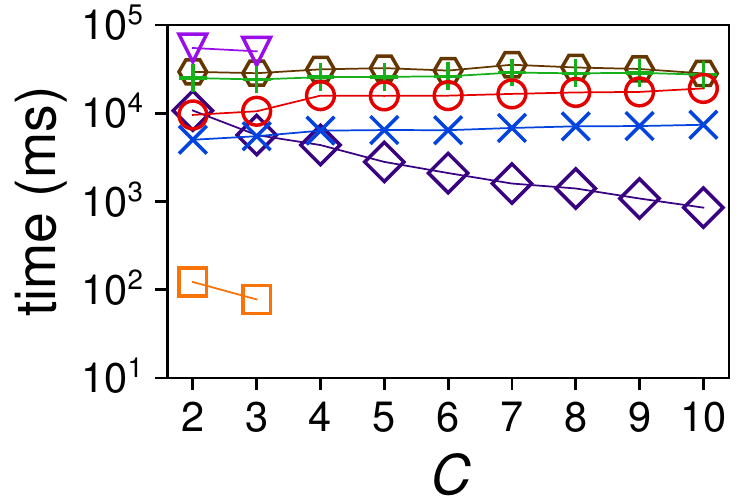}
        \caption{AntiCor\_6D (Varying $C$)}
    \end{subfigure}
    \\
    \begin{subfigure}[b]{0.4\textwidth}
        \centering
        \includegraphics[width=0.48\textwidth]{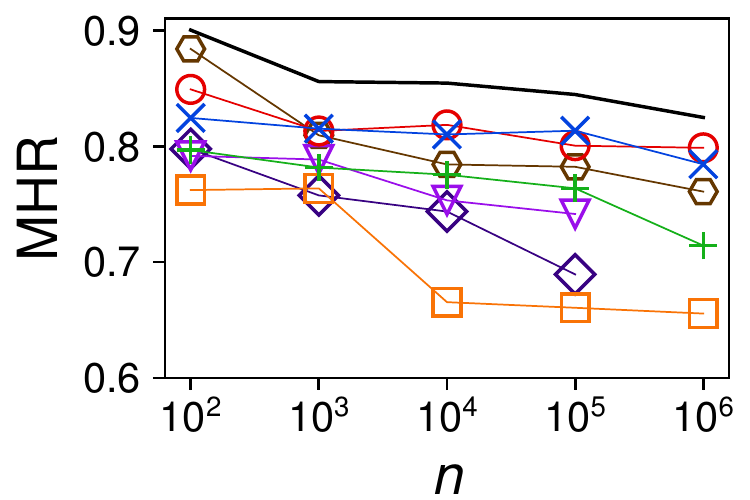}
        \hfill
        \includegraphics[width=0.48\textwidth]{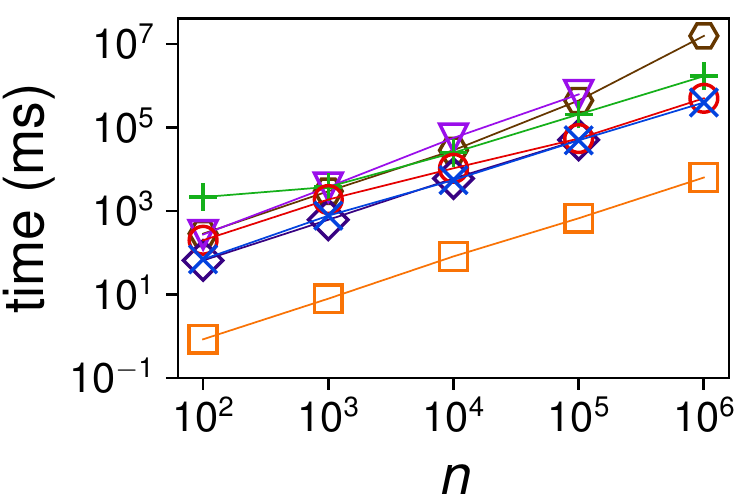}
        \caption{AntiCor\_6D (Varying $n$)}
    \end{subfigure}
    \caption{Performance of different algorithms on the anti-correlated datasets for solution size $k=20$ by varying dimensionality $d$, number of groups $C$, and dataset size $n$.}
    \Description{experimental results}
    \label{fig:scalability}
\end{figure}

\smallskip\noindent\textbf{Scalability:}
To evaluate the scalability of different algorithms with respect to \emph{dimensionality} $d$, \emph{number of groups} $C$ and \emph{dataset size} $n$, we perform extensive experiments on the generated anti-correlated datasets with $d$ ranging from $2$ to $8$ (with $n = 10^4$ and $C=3$), $C$ ranging from $2$ to $10$ (with $d=2$ or $6$ and $n=10^4$), and $n$ ranging from $10^2$ to $10^6$ (with $d=2$ or $6$ and $C=3$).
We fix the solution size $k$ to $5$ in the experiments on two-dimensional datasets only and $20$ in all remaining
experiments.

The results by varying $C$ and $n$ on 2$d$ anti-correlated datasets are presented in Figure~\ref{fig:2d} (d)--(e) and (i)--(j).
Since the time complexity of \AlgTwoD is exponential with respect to $C$, it cannot terminate within the time limit (\ie 1,000 seconds) when $C \geq 6$. Therefore, in the experiments, we vary $C$ from $2$ to $5$.
We observe that the MHRs of different algorithms drop when $C$ increases because the fairness constraint becomes more restricted, \eg it requires precisely one point from each group when $k=5$ and $C=5$. Nevertheless, the advantages of \AlgTwoD, \AlgBG, and \AlgIBG remain for different values of $C$.
In terms of time efficiency, unlike \AlgTwoD, \AlgBG and \AlgIBG only run slightly slower when $C$ is larger.
The running time of other algorithms drops with $C$ due to a smaller solution size for each group.
Furthermore, the MHRs of all the algorithms decreases with an increasing $n$. Meanwhile, their running time increases nearly linearly with $n$.
The above trends can be explained by the fact that the sizes of skylines increase proportionally with $n$ on the anti-correlated datasets.
Nevertheless, \AlgTwoD, \AlgBG, and \AlgIBG exhibit significant advantages in solution quality for different values of $n$ while the running time is still within 100 seconds for \AlgTwoD or 1 second for \AlgBG and \AlgIBG.

The results by varying $d$, $C$, and $n$ on the anti-correlated datasets of higher dimensionalities are shown in Figure~\ref{fig:scalability}.
When $d$ increases, the MHR decreases while each algorithm's running time grows drastically. For \AlgBG and \AlgIBG, the sampling sizes of $\delta$-nets grow exponentially with $d$. To achieve practical efficiency, we use larger values of $\delta$ with an increasing $d$, which inevitably leads to losses in solution quality. Other algorithms also face scalability problems due to ``the curse of dimensionality''.
Furthermore, when $C$ increases, the MHR decreases while the running time increases for all algorithms.
The reasons for such trends are similar to those for two-dimensional data.
A new observation is that the advantages of \AlgBG and \AlgIBG over the baselines become more apparent when $C$ is larger, mainly because the solutions of baselines computed from more groups are more highly redundant.
Moreover, we also observe similar trends for MHRs and running time with respect to $n$.
The advantages of \AlgBG and \AlgIBG over the baselines also become larger with an increasing $n$.
Finally, the running time of all algorithms increases nearly linearly with $n$.

To sum up, \AlgBG and \AlgIBG show better scalability than the baselines with regard to $C$ and $n$: they achieve bigger advantages in solution quality while the time efficiencies are still comparable. Moreover, they can provide solutions better than baselines in a reasonable time by adjusting the input parameters for high dimensionality $d$.

\section{Related Work}
\label{sec:literature}

\noindent\textbf{RMS, HMS \& Their Variants:}
There have been many studies on \emph{regret minimizing set} (RMS)~\cite{Asudeh:2017,Nanongkai:2010,Peng:2014,Shetiya:2020,Xie:2018}, \emph{happiness maximizing set} (HMS)~\cite{Qiu:2018,Xie:2020}, and different variants of them~\cite{Agarwal:2017,Asudeh:2019,Cao:2017,Chester:2014,Dong:2019,Faulkner:2015,Kumar:2018,Luenam:2021,Nanongkai:2012,Qi:2018,Soma:2017,Storandt:2019,Wang:2021b,Xiao:2021,Xie:2019,Zeighami:2019} (see~\cite{Xie:2020VLDBJ} for an extensive survey). The RMS problem was first proposed by \citet{Nanongkai:2010} to alleviate the deficiencies of top-$k$ and skyline queries. Due to the NP-hardness of RMS~\cite{Agarwal:2017,Cao:2017,Chester:2014} in three and higher dimensions, most of the literature focuses on approximation or heuristic algorithms for RMS. \citet{Nanongkai:2010} first proposed a greedy heuristic for RMS. \citet{Peng:2014} devised a geometric method to improve the efficiency of the greedy heuristic. \citet{Asudeh:2017} proposed an approximation algorithm for RMS by transforming it into a set cover problem. \citet{Xie:2018} proposed an asymptotically optimal approximation algorithm called \textsc{Sphere} for RMS based on the notion of $\varepsilon$-kernels~\cite{DBLP:journals/jacm/AgarwalHV04}. \citet{Shetiya:2020} proposed a \textsc{k-medoid} based heuristic algorithm for RMS. In addition, exact algorithms for RMS in $\mathbb{R}^2$ by transforming it into a shortest-path/cycle problem in a graph were proposed in~\cite{Asudeh:2017,DBLP:conf/pods/WangM0T21}. Since minimizing the \emph{regret ratio} and maximizing the \emph{happiness ratio} are essentially equivalent by definition, the HMS problem was considered recently~\cite{Qiu:2018,Xie:2020} in place of RMS for ease of theoretical analysis. \citet{Qiu:2018} extended and improved the greedy heuristic for HMS. \citet{Xie:2020} further proposed a \textsc{Cone-Greedy} algorithm based on the geometric interpretation of HMS.

Different variants of RMS/HMS were also extensively studied in the literature. \citet{Chester:2014} extended RMS to $k$RMS, where the notion of \emph{regret ratio} was relaxed to \emph{$k$-regret ratio} measuring the relative loss of the largest utility in the subset w.r.t.~the $k$-th largest utility in the whole database. There have been many algorithms proposed for $k$RMS, including greedy algorithms~\cite{Chester:2014,Dong:2019}, $\varepsilon$-kernel based algorithms~\cite{Agarwal:2017,Cao:2017}, hitting-set based algorithms~\cite{Agarwal:2017,Kumar:2018}, and fully-dynamic algorithms~\cite{Wang:2021b,9756312}. Data reduction-based algorithms for the happiness maximization version of $k$RMS ($k$HMS) were proposed in~\cite{Luenam:2021}. The average RMS problem that minimizes the average instead of the maximum of regret ratios was considered in~\cite{Zeighami:2019,Storandt:2019,Shetiya:2020,Shetiya:2020}. The rank-regret representative (RRR) problem, where the regret was defined by \emph{ranking} other than \emph{utilities}, was studied in~\cite{Asudeh:2019,Xiao:2021}. The RMS problems defined on non-linear utility functions were considered in~\cite{Faulkner:2015,Qi:2018}. Interactive RMS problems~\cite{Nanongkai:2012,Xie:2019,DBLP:conf/apweb/0001C20} were proposed to enhance RMS with user interactions.

However, despite the rich literature on RMS/HMS, none of the above studies have taken \emph{fairness} into consideration. They could provide solutions that under-represent some protected groups, as already shown in our experiments, and might further lead to bias and discrimination in algorithmic decisions.

\smallskip\noindent\textbf{Fairness in Subset Selection and Ranking:}
Another line of work related to ours is \emph{subset selection and ranking under fairness constraints}.
The relationship between this work and existing fair subset selection problems is that the notions of \emph{fairness} are the same or similar.
Specifically, the group fairness constraint in this work and its special case when the upper and lower bounds are equal to each other have been used in many real-world problems, including DPP-based data summarization~\cite{Celis:2018}, $k$-center clustering~\cite{DBLP:conf/icml/KleindessnerAM19}, top-$k$ selection~\cite{Stoyanovich:2018,DBLP:conf/fat/MehrotraC21}, submodular maximization~\cite{Wang:2021a,DBLP:conf/nips/HalabiMNTT20,DBLP:conf/ijcai/CelisHV18}, diversity maximization~\cite{Moumoulidou:2021}.
However, since the objective function of HMS is not submodular in its original form, existing algorithms for fair submodular maximization cannot be directly used for FairHMS. Moreover, as they only consider a single utility function, the fair top-$k$ selection algorithms are also not applicable to FairHMS. Finally, the inner product-based objective of HMS essentially differs from the distance-based objectives in summarization, clustering, and diversification, and thus the design of algorithms for the fair variants of these problems are very different from that for FairHMS.

Ranking problems under fairness constraints are also attracting much attention recently (see~\cite{Pitoura:2021,Zehlike:2021} for extensive surveys). \citet{DBLP:conf/cikm/ZehlikeB0HMB17} and \citet{DBLP:conf/icalp/CelisSV18} generalized the fairness constraints for top-$k$ selection to top-$k$ ranking by further restricting that every prefix of the ranking should also be proportionally fair. \citet{DBLP:conf/kdd/SinghJ18} defined fairness constraints on rankings in terms of exposure allocation based on the notion of \emph{discounted cumulative gain} (DCG). \citet{DBLP:conf/kdd/Garcia-SorianoB21} designed a maximin-fair ranking framework to achieve individual fairness under group-fairness constraints.
\citet{DBLP:conf/sigmod/AsudehJS019} proposed a scheme to modify a ranking function towards satisfying the desired fairness constraints.
\citet{DBLP:journals/pvldb/KuhlmanR20} considered how to aggregate multiple rankings under fairness constraints.
Unlike subset selection, the concepts of \emph{fairness} in ranking problems consider not only whether a tuple is picked but also its position in the result.
Since HMS is not ordered, the fairness constraints designed for ranking problems are not applicable for FairHMS.

\section{Conclusion}
\label{sec:conclusion}

In this paper, we studied the problem of happiness maximizing sets under group fairness constraints (FairHMS). We first formally defined the FairHMS problem and showed its NP-hardness in three and higher dimensions. Then, we proposed an exact algorithm for FairHMS on two-dimensional data. Moreover, we proposed a bicriteria approximation algorithm for FairHMS in any constant dimension based on the notion of $\delta$-nets and multi-objective submodular maximization. We further introduced an adaptive sampling strategy to improve its practical efficiency. Finally, extensive experiments on real-world and synthetic datasets confirmed the efficacy, efficiency, and scalability of our proposed algorithms.

\bibliographystyle{ACM-Reference-Format}
\bibliography{ref}


\begin{thebibliography}{62}


\ifx \showCODEN    \undefined \def \showCODEN     #1{\unskip}     \fi
\ifx \showDOI      \undefined \def \showDOI       #1{#1}\fi
\ifx \showISBNx    \undefined \def \showISBNx     #1{\unskip}     \fi
\ifx \showISBNxiii \undefined \def \showISBNxiii  #1{\unskip}     \fi
\ifx \showISSN     \undefined \def \showISSN      #1{\unskip}     \fi
\ifx \showLCCN     \undefined \def \showLCCN      #1{\unskip}     \fi
\ifx \shownote     \undefined \def \shownote      #1{#1}          \fi
\ifx \showarticletitle \undefined \def \showarticletitle #1{#1}   \fi
\ifx \showURL      \undefined \def \showURL       {\relax}        \fi
\providecommand\bibfield[2]{#2}
\providecommand\bibinfo[2]{#2}
\providecommand\natexlab[1]{#1}
\providecommand\showeprint[2][]{arXiv:#2}

\bibitem[\protect\citeauthoryear{Agarwal, Har{-}Peled, and Varadarajan}{Agarwal
  et~al\mbox{.}}{2004}]%
        {DBLP:journals/jacm/AgarwalHV04}
\bibfield{author}{\bibinfo{person}{Pankaj~K. Agarwal}, \bibinfo{person}{Sariel
  Har{-}Peled}, {and} \bibinfo{person}{Kasturi~R. Varadarajan}.}
  \bibinfo{year}{2004}\natexlab{}.
\newblock \showarticletitle{Approximating extent measures of points}.
\newblock \bibinfo{journal}{\emph{J. {ACM}}} \bibinfo{volume}{51},
  \bibinfo{number}{4} (\bibinfo{year}{2004}), \bibinfo{pages}{606--635}.
\newblock


\bibitem[\protect\citeauthoryear{Agarwal, Kumar, Sintos, and Suri}{Agarwal
  et~al\mbox{.}}{2017}]%
        {Agarwal:2017}
\bibfield{author}{\bibinfo{person}{Pankaj~K. Agarwal}, \bibinfo{person}{Nirman
  Kumar}, \bibinfo{person}{Stavros Sintos}, {and} \bibinfo{person}{Subhash
  Suri}.} \bibinfo{year}{2017}\natexlab{}.
\newblock \showarticletitle{Efficient Algorithms for k-Regret Minimizing Sets}.
  In \bibinfo{booktitle}{\emph{{SEA}}}. \bibinfo{pages}{7:1--7:23}.
\newblock


\bibitem[\protect\citeauthoryear{Anari, Haghtalab, Naor, Pokutta, Singh, and
  Torrico}{Anari et~al\mbox{.}}{2019}]%
        {Anari:2019}
\bibfield{author}{\bibinfo{person}{Nima Anari}, \bibinfo{person}{Nika
  Haghtalab}, \bibinfo{person}{Seffi Naor}, \bibinfo{person}{Sebastian
  Pokutta}, \bibinfo{person}{Mohit Singh}, {and} \bibinfo{person}{Alfredo
  Torrico}.} \bibinfo{year}{2019}\natexlab{}.
\newblock \showarticletitle{Structured Robust Submodular Maximization: Offline
  and Online Algorithms}. In \bibinfo{booktitle}{\emph{{AISTATS}}}.
  \bibinfo{pages}{3128--3137}.
\newblock


\bibitem[\protect\citeauthoryear{Asudeh, Jagadish, Stoyanovich, and Das}{Asudeh
  et~al\mbox{.}}{2019a}]%
        {DBLP:conf/sigmod/AsudehJS019}
\bibfield{author}{\bibinfo{person}{Abolfazl Asudeh}, \bibinfo{person}{H.~V.
  Jagadish}, \bibinfo{person}{Julia Stoyanovich}, {and} \bibinfo{person}{Gautam
  Das}.} \bibinfo{year}{2019}\natexlab{a}.
\newblock \showarticletitle{Designing Fair Ranking Schemes}. In
  \bibinfo{booktitle}{\emph{{SIGMOD}}}. \bibinfo{pages}{1259--1276}.
\newblock


\bibitem[\protect\citeauthoryear{Asudeh, Nazi, Zhang, and Das}{Asudeh
  et~al\mbox{.}}{2017}]%
        {Asudeh:2017}
\bibfield{author}{\bibinfo{person}{Abolfazl Asudeh}, \bibinfo{person}{Azade
  Nazi}, \bibinfo{person}{Nan Zhang}, {and} \bibinfo{person}{Gautam Das}.}
  \bibinfo{year}{2017}\natexlab{}.
\newblock \showarticletitle{Efficient Computation of Regret-ratio Minimizing
  Set: {A} Compact Maxima Representative}. In
  \bibinfo{booktitle}{\emph{{SIGMOD}}}. \bibinfo{pages}{821--834}.
\newblock


\bibitem[\protect\citeauthoryear{Asudeh, Nazi, Zhang, Das, and Jagadish}{Asudeh
  et~al\mbox{.}}{2019b}]%
        {Asudeh:2019}
\bibfield{author}{\bibinfo{person}{Abolfazl Asudeh}, \bibinfo{person}{Azade
  Nazi}, \bibinfo{person}{Nan Zhang}, \bibinfo{person}{Gautam Das}, {and}
  \bibinfo{person}{H.~V. Jagadish}.} \bibinfo{year}{2019}\natexlab{b}.
\newblock \showarticletitle{{RRR:} Rank-Regret Representative}. In
  \bibinfo{booktitle}{\emph{{SIGMOD}}}. \bibinfo{pages}{263--280}.
\newblock


\bibitem[\protect\citeauthoryear{B{\"{o}}rzs{\"{o}}nyi, Kossmann, and
  Stocker}{B{\"{o}}rzs{\"{o}}nyi et~al\mbox{.}}{2001}]%
        {Borzsony:2001}
\bibfield{author}{\bibinfo{person}{Stephan B{\"{o}}rzs{\"{o}}nyi},
  \bibinfo{person}{Donald Kossmann}, {and} \bibinfo{person}{Konrad Stocker}.}
  \bibinfo{year}{2001}\natexlab{}.
\newblock \showarticletitle{The Skyline Operator}. In
  \bibinfo{booktitle}{\emph{{ICDE}}}. \bibinfo{pages}{421--430}.
\newblock


\bibitem[\protect\citeauthoryear{Cao, Li, Wang, Wang, Wang, Wong, and Zhan}{Cao
  et~al\mbox{.}}{2017}]%
        {Cao:2017}
\bibfield{author}{\bibinfo{person}{Wei Cao}, \bibinfo{person}{Jian Li},
  \bibinfo{person}{Haitao Wang}, \bibinfo{person}{Kangning Wang},
  \bibinfo{person}{Ruosong Wang}, \bibinfo{person}{Raymond~Chi{-}Wing Wong},
  {and} \bibinfo{person}{Wei Zhan}.} \bibinfo{year}{2017}\natexlab{}.
\newblock \showarticletitle{k-Regret Minimizing Set: Efficient Algorithms and
  Hardness}. In \bibinfo{booktitle}{\emph{{ICDT}}}.
  \bibinfo{pages}{11:1--11:19}.
\newblock


\bibitem[\protect\citeauthoryear{Celis, Huang, and Vishnoi}{Celis
  et~al\mbox{.}}{2018a}]%
        {DBLP:conf/ijcai/CelisHV18}
\bibfield{author}{\bibinfo{person}{L.~Elisa Celis}, \bibinfo{person}{Lingxiao
  Huang}, {and} \bibinfo{person}{Nisheeth~K. Vishnoi}.}
  \bibinfo{year}{2018}\natexlab{a}.
\newblock \showarticletitle{Multiwinner Voting with Fairness Constraints}. In
  \bibinfo{booktitle}{\emph{{IJCAI}}}. \bibinfo{pages}{144--151}.
\newblock


\bibitem[\protect\citeauthoryear{Celis, Keswani, Straszak, Deshpande, Kathuria,
  and Vishnoi}{Celis et~al\mbox{.}}{2018b}]%
        {Celis:2018}
\bibfield{author}{\bibinfo{person}{L.~Elisa Celis}, \bibinfo{person}{Vijay
  Keswani}, \bibinfo{person}{Damian Straszak}, \bibinfo{person}{Amit
  Deshpande}, \bibinfo{person}{Tarun Kathuria}, {and}
  \bibinfo{person}{Nisheeth~K. Vishnoi}.} \bibinfo{year}{2018}\natexlab{b}.
\newblock \showarticletitle{Fair and Diverse DPP-Based Data Summarization}. In
  \bibinfo{booktitle}{\emph{{ICML}}}. \bibinfo{pages}{715--724}.
\newblock


\bibitem[\protect\citeauthoryear{Celis, Straszak, and Vishnoi}{Celis
  et~al\mbox{.}}{2018c}]%
        {DBLP:conf/icalp/CelisSV18}
\bibfield{author}{\bibinfo{person}{L.~Elisa Celis}, \bibinfo{person}{Damian
  Straszak}, {and} \bibinfo{person}{Nisheeth~K. Vishnoi}.}
  \bibinfo{year}{2018}\natexlab{c}.
\newblock \showarticletitle{Ranking with Fairness Constraints}. In
  \bibinfo{booktitle}{\emph{{ICALP}}}. \bibinfo{pages}{28:1--28:15}.
\newblock


\bibitem[\protect\citeauthoryear{Chester, Thomo, Venkatesh, and
  Whitesides}{Chester et~al\mbox{.}}{2014}]%
        {Chester:2014}
\bibfield{author}{\bibinfo{person}{Sean Chester}, \bibinfo{person}{Alex Thomo},
  \bibinfo{person}{S. Venkatesh}, {and} \bibinfo{person}{Sue Whitesides}.}
  \bibinfo{year}{2014}\natexlab{}.
\newblock \showarticletitle{Computing k-Regret Minimizing Sets}.
\newblock \bibinfo{journal}{\emph{Proc. {VLDB} Endow.}} \bibinfo{volume}{7},
  \bibinfo{number}{5} (\bibinfo{year}{2014}), \bibinfo{pages}{389--400}.
\newblock


\bibitem[\protect\citeauthoryear{Chouldechova and Roth}{Chouldechova and
  Roth}{2020}]%
        {Chouldechova:2020}
\bibfield{author}{\bibinfo{person}{Alexandra Chouldechova} {and}
  \bibinfo{person}{Aaron Roth}.} \bibinfo{year}{2020}\natexlab{}.
\newblock \showarticletitle{A snapshot of the frontiers of fairness in machine
  learning}.
\newblock \bibinfo{journal}{\emph{Commun. {ACM}}} \bibinfo{volume}{63},
  \bibinfo{number}{5} (\bibinfo{year}{2020}), \bibinfo{pages}{82--89}.
\newblock


\bibitem[\protect\citeauthoryear{Dong and Zheng}{Dong and Zheng}{2019}]%
        {Dong:2019}
\bibfield{author}{\bibinfo{person}{Qi Dong} {and} \bibinfo{person}{Jiping
  Zheng}.} \bibinfo{year}{2019}\natexlab{}.
\newblock \showarticletitle{Faster Algorithms for \emph{k}-Regret Minimizing
  Sets via Monotonicity and Sampling}. In \bibinfo{booktitle}{\emph{{CIKM}}}.
  \bibinfo{pages}{2213--2216}.
\newblock


\bibitem[\protect\citeauthoryear{Faulkner, Brackenbury, and Lall}{Faulkner
  et~al\mbox{.}}{2015}]%
        {Faulkner:2015}
\bibfield{author}{\bibinfo{person}{Taylor~Kessler Faulkner},
  \bibinfo{person}{Will Brackenbury}, {and} \bibinfo{person}{Ashwin Lall}.}
  \bibinfo{year}{2015}\natexlab{}.
\newblock \showarticletitle{k-Regret Queries with Nonlinear Utilities}.
\newblock \bibinfo{journal}{\emph{Proc. {VLDB} Endow.}} \bibinfo{volume}{8},
  \bibinfo{number}{13} (\bibinfo{year}{2015}), \bibinfo{pages}{2098--2109}.
\newblock


\bibitem[\protect\citeauthoryear{Feige}{Feige}{1998}]%
        {Feige:1998}
\bibfield{author}{\bibinfo{person}{Uriel Feige}.}
  \bibinfo{year}{1998}\natexlab{}.
\newblock \showarticletitle{A Threshold of ln \emph{n} for Approximating Set
  Cover}.
\newblock \bibinfo{journal}{\emph{J. {ACM}}} \bibinfo{volume}{45},
  \bibinfo{number}{4} (\bibinfo{year}{1998}), \bibinfo{pages}{634--652}.
\newblock


\bibitem[\protect\citeauthoryear{Fisher, Nemhauser, and Wolsey}{Fisher
  et~al\mbox{.}}{1978}]%
        {Fisher:1978}
\bibfield{author}{\bibinfo{person}{M.~L. Fisher}, \bibinfo{person}{G.~L.
  Nemhauser}, {and} \bibinfo{person}{L.~A. Wolsey}.}
  \bibinfo{year}{1978}\natexlab{}.
\newblock \showarticletitle{An analysis of approximations for maximizing
  submodular set functions---II}. In \bibinfo{booktitle}{\emph{Polyhedral
  Combinatorics}}. \bibinfo{pages}{73--87}.
\newblock


\bibitem[\protect\citeauthoryear{Friedman and Nissenbaum}{Friedman and
  Nissenbaum}{1996}]%
        {Friedman:1996}
\bibfield{author}{\bibinfo{person}{Batya Friedman} {and} \bibinfo{person}{Helen
  Nissenbaum}.} \bibinfo{year}{1996}\natexlab{}.
\newblock \showarticletitle{Bias in computer systems}.
\newblock \bibinfo{journal}{\emph{ACM Trans. Inf. Sys.}} \bibinfo{volume}{14},
  \bibinfo{number}{3} (\bibinfo{year}{1996}), \bibinfo{pages}{330--347}.
\newblock


\bibitem[\protect\citeauthoryear{Fujito}{Fujito}{2000}]%
        {Fujito:2000}
\bibfield{author}{\bibinfo{person}{Toshihiro Fujito}.}
  \bibinfo{year}{2000}\natexlab{}.
\newblock \showarticletitle{Approximation algorithms for submodular set cover
  with applications}.
\newblock \bibinfo{journal}{\emph{{IEICE} Trans. Inf. Syst.}}
  \bibinfo{volume}{83} (\bibinfo{year}{2000}), \bibinfo{pages}{480--487}.
\newblock


\bibitem[\protect\citeauthoryear{Garc{\'{\i}}a{-}Soriano and
  Bonchi}{Garc{\'{\i}}a{-}Soriano and Bonchi}{2021}]%
        {DBLP:conf/kdd/Garcia-SorianoB21}
\bibfield{author}{\bibinfo{person}{David Garc{\'{\i}}a{-}Soriano} {and}
  \bibinfo{person}{Francesco Bonchi}.} \bibinfo{year}{2021}\natexlab{}.
\newblock \showarticletitle{Maxmin-Fair Ranking: Individual Fairness under
  Group-Fairness Constraints}. In \bibinfo{booktitle}{\emph{{KDD}}}.
  \bibinfo{pages}{436--446}.
\newblock


\bibitem[\protect\citeauthoryear{Halabi, Mitrovic, Norouzi{-}Fard, Tardos, and
  Tarnawski}{Halabi et~al\mbox{.}}{2020}]%
        {DBLP:conf/nips/HalabiMNTT20}
\bibfield{author}{\bibinfo{person}{Marwa~El Halabi}, \bibinfo{person}{Slobodan
  Mitrovic}, \bibinfo{person}{Ashkan Norouzi{-}Fard}, \bibinfo{person}{Jakab
  Tardos}, {and} \bibinfo{person}{Jakub Tarnawski}.}
  \bibinfo{year}{2020}\natexlab{}.
\newblock \showarticletitle{Fairness in Streaming Submodular Maximization:
  Algorithms and Hardness}. In \bibinfo{booktitle}{\emph{NeurIPS}}.
  \bibinfo{pages}{13609--13622}.
\newblock


\bibitem[\protect\citeauthoryear{Ilyas, Beskales, and Soliman}{Ilyas
  et~al\mbox{.}}{2008}]%
        {Ilyas:2008}
\bibfield{author}{\bibinfo{person}{Ihab~F. Ilyas}, \bibinfo{person}{George
  Beskales}, {and} \bibinfo{person}{Mohamed~A. Soliman}.}
  \bibinfo{year}{2008}\natexlab{}.
\newblock \showarticletitle{A survey of top-\emph{k} query processing
  techniques in relational database systems}.
\newblock \bibinfo{journal}{\emph{{ACM} Comput. Surv.}} \bibinfo{volume}{40},
  \bibinfo{number}{4} (\bibinfo{year}{2008}), \bibinfo{pages}{11:1--11:58}.
\newblock


\bibitem[\protect\citeauthoryear{Kleinberg and Tardos}{Kleinberg and
  Tardos}{2006}]%
        {Kleinberg:2006}
\bibfield{author}{\bibinfo{person}{Jon Kleinberg} {and} \bibinfo{person}{\'Eva
  Tardos}.} \bibinfo{year}{2006}\natexlab{}.
\newblock \bibinfo{booktitle}{\emph{Algorithm Design}}.
\newblock \bibinfo{publisher}{Addison Wesley}.
\newblock


\bibitem[\protect\citeauthoryear{Kleindessner, Awasthi, and
  Morgenstern}{Kleindessner et~al\mbox{.}}{2019}]%
        {DBLP:conf/icml/KleindessnerAM19}
\bibfield{author}{\bibinfo{person}{Matth{\"{a}}us Kleindessner},
  \bibinfo{person}{Pranjal Awasthi}, {and} \bibinfo{person}{Jamie
  Morgenstern}.} \bibinfo{year}{2019}\natexlab{}.
\newblock \showarticletitle{Fair k-Center Clustering for Data Summarization}.
  In \bibinfo{booktitle}{\emph{{ICML}}}. \bibinfo{pages}{3448--3457}.
\newblock


\bibitem[\protect\citeauthoryear{Korte and Vygen}{Korte and Vygen}{2012}]%
        {Korte2012}
\bibfield{author}{\bibinfo{person}{Bernhard Korte} {and} \bibinfo{person}{Jens
  Vygen}.} \bibinfo{year}{2012}\natexlab{}.
\newblock \bibinfo{booktitle}{\emph{Combinatorial Optimization: Theory and
  Algorithms}}.
\newblock \bibinfo{publisher}{Springer Berlin Heidelberg},
  \bibinfo{address}{Berlin, Heidelberg}.
\newblock


\bibitem[\protect\citeauthoryear{Krause and Golovin}{Krause and
  Golovin}{2014}]%
        {Krause:2014}
\bibfield{author}{\bibinfo{person}{Andreas Krause} {and}
  \bibinfo{person}{Daniel Golovin}.} \bibinfo{year}{2014}\natexlab{}.
\newblock \showarticletitle{Submodular Function Maximization}.
\newblock In \bibinfo{booktitle}{\emph{Tractability: Practical Approaches to
  Hard Problems}}. \bibinfo{publisher}{Cambridge University Press},
  \bibinfo{pages}{71--104}.
\newblock


\bibitem[\protect\citeauthoryear{Krause, McMahan, Guestrin, and Gupta}{Krause
  et~al\mbox{.}}{2008}]%
        {Krause:2008}
\bibfield{author}{\bibinfo{person}{Andreas Krause}, \bibinfo{person}{H~Brendan
  McMahan}, \bibinfo{person}{Carlos Guestrin}, {and} \bibinfo{person}{Anupam
  Gupta}.} \bibinfo{year}{2008}\natexlab{}.
\newblock \showarticletitle{Robust Submodular Observation Selection}.
\newblock \bibinfo{journal}{\emph{J. Mach. Learn. Res.}}  \bibinfo{volume}{9}
  (\bibinfo{year}{2008}), \bibinfo{pages}{2761--2801}.
\newblock


\bibitem[\protect\citeauthoryear{Kuhlman and Rundensteiner}{Kuhlman and
  Rundensteiner}{2020}]%
        {DBLP:journals/pvldb/KuhlmanR20}
\bibfield{author}{\bibinfo{person}{Caitlin Kuhlman} {and}
  \bibinfo{person}{Elke~A. Rundensteiner}.} \bibinfo{year}{2020}\natexlab{}.
\newblock \showarticletitle{Rank Aggregation Algorithms for Fair Consensus}.
\newblock \bibinfo{journal}{\emph{Proc. {VLDB} Endow.}} \bibinfo{volume}{13},
  \bibinfo{number}{11} (\bibinfo{year}{2020}), \bibinfo{pages}{2706--2719}.
\newblock


\bibitem[\protect\citeauthoryear{Kumar and Sintos}{Kumar and Sintos}{2018}]%
        {Kumar:2018}
\bibfield{author}{\bibinfo{person}{Nirman Kumar} {and} \bibinfo{person}{Stavros
  Sintos}.} \bibinfo{year}{2018}\natexlab{}.
\newblock \showarticletitle{Faster Approximation Algorithm for the
  \emph{k}-Regret Minimizing Set and Related Problems}. In
  \bibinfo{booktitle}{\emph{ALENEX}}. \bibinfo{pages}{62--74}.
\newblock


\bibitem[\protect\citeauthoryear{Luenam, Chen, and Wong}{Luenam
  et~al\mbox{.}}{2021}]%
        {Luenam:2021}
\bibfield{author}{\bibinfo{person}{Phoomraphee Luenam},
  \bibinfo{person}{Yau~Pun Chen}, {and} \bibinfo{person}{Raymond Chi-Wing
  Wong}.} \bibinfo{year}{2021}\natexlab{}.
\newblock \bibinfo{title}{Approximating Happiness Maximizing Set Problems}.
\newblock
\newblock
\showeprint[arxiv]{2102.03578}~[cs.DB]


\bibitem[\protect\citeauthoryear{Ma, Guan, Toomey, and Wu}{Ma
  et~al\mbox{.}}{2022}]%
        {DBLP:conf/wsdm/MaGTW22}
\bibfield{author}{\bibinfo{person}{Hanchao Ma}, \bibinfo{person}{Sheng Guan},
  \bibinfo{person}{Christopher Toomey}, {and} \bibinfo{person}{Yinghui Wu}.}
  \bibinfo{year}{2022}\natexlab{}.
\newblock \showarticletitle{Diversified Subgraph Query Generation with Group
  Fairness}. In \bibinfo{booktitle}{\emph{{WSDM}}}. \bibinfo{pages}{686--694}.
\newblock


\bibitem[\protect\citeauthoryear{Mehrotra and Celis}{Mehrotra and
  Celis}{2021}]%
        {DBLP:conf/fat/MehrotraC21}
\bibfield{author}{\bibinfo{person}{Anay Mehrotra} {and}
  \bibinfo{person}{L.~Elisa Celis}.} \bibinfo{year}{2021}\natexlab{}.
\newblock \showarticletitle{Mitigating Bias in Set Selection with Noisy
  Protected Attributes}. In \bibinfo{booktitle}{\emph{FAccT}}.
  \bibinfo{pages}{237--248}.
\newblock


\bibitem[\protect\citeauthoryear{Moumoulidou, McGregor, and Meliou}{Moumoulidou
  et~al\mbox{.}}{2021}]%
        {Moumoulidou:2021}
\bibfield{author}{\bibinfo{person}{Zafeiria Moumoulidou},
  \bibinfo{person}{Andrew McGregor}, {and} \bibinfo{person}{Alexandra Meliou}.}
  \bibinfo{year}{2021}\natexlab{}.
\newblock \showarticletitle{Diverse Data Selection under Fairness Constraints}.
  In \bibinfo{booktitle}{\emph{{ICDT}}}. \bibinfo{pages}{13:1--13:25}.
\newblock


\bibitem[\protect\citeauthoryear{Nanongkai, Lall, Sarma, and Makino}{Nanongkai
  et~al\mbox{.}}{2012}]%
        {Nanongkai:2012}
\bibfield{author}{\bibinfo{person}{Danupon Nanongkai}, \bibinfo{person}{Ashwin
  Lall}, \bibinfo{person}{Atish~Das Sarma}, {and} \bibinfo{person}{Kazuhisa
  Makino}.} \bibinfo{year}{2012}\natexlab{}.
\newblock \showarticletitle{Interactive regret minimization}. In
  \bibinfo{booktitle}{\emph{{SIGMOD}}}. \bibinfo{pages}{109--120}.
\newblock


\bibitem[\protect\citeauthoryear{Nanongkai, Sarma, Lall, Lipton, and
  Xu}{Nanongkai et~al\mbox{.}}{2010}]%
        {Nanongkai:2010}
\bibfield{author}{\bibinfo{person}{Danupon Nanongkai},
  \bibinfo{person}{Atish~Das Sarma}, \bibinfo{person}{Ashwin Lall},
  \bibinfo{person}{Richard~J. Lipton}, {and} \bibinfo{person}{Jun~(Jim) Xu}.}
  \bibinfo{year}{2010}\natexlab{}.
\newblock \showarticletitle{Regret-Minimizing Representative Databases}.
\newblock \bibinfo{journal}{\emph{Proc. {VLDB} Endow.}} \bibinfo{volume}{3},
  \bibinfo{number}{1} (\bibinfo{year}{2010}), \bibinfo{pages}{1114--1124}.
\newblock


\bibitem[\protect\citeauthoryear{Peng and Wong}{Peng and Wong}{2014}]%
        {Peng:2014}
\bibfield{author}{\bibinfo{person}{Peng Peng} {and}
  \bibinfo{person}{Raymond~Chi{-}Wing Wong}.} \bibinfo{year}{2014}\natexlab{}.
\newblock \showarticletitle{Geometry approach for k-regret query}. In
  \bibinfo{booktitle}{\emph{{ICDE}}}. \bibinfo{pages}{772--783}.
\newblock


\bibitem[\protect\citeauthoryear{Pitoura, Stefanidis, and Koutrika}{Pitoura
  et~al\mbox{.}}{2022}]%
        {Pitoura:2021}
\bibfield{author}{\bibinfo{person}{Evaggelia Pitoura}, \bibinfo{person}{Kostas
  Stefanidis}, {and} \bibinfo{person}{Georgia Koutrika}.}
  \bibinfo{year}{2022}\natexlab{}.
\newblock \showarticletitle{Fairness in rankings and recommendations: an
  overview}.
\newblock \bibinfo{journal}{\emph{{VLDB} J.}} \bibinfo{volume}{31},
  \bibinfo{number}{3} (\bibinfo{year}{2022}), \bibinfo{pages}{431--458}.
\newblock


\bibitem[\protect\citeauthoryear{Qi, Zuo, Samet, and Yao}{Qi
  et~al\mbox{.}}{2018}]%
        {Qi:2018}
\bibfield{author}{\bibinfo{person}{Jianzhong Qi}, \bibinfo{person}{Fei Zuo},
  \bibinfo{person}{Hanan Samet}, {and} \bibinfo{person}{Jia~Cheng Yao}.}
  \bibinfo{year}{2018}\natexlab{}.
\newblock \showarticletitle{K-Regret Queries Using Multiplicative Utility
  Functions}.
\newblock \bibinfo{journal}{\emph{{ACM} Trans. Database Syst.}}
  \bibinfo{volume}{43}, \bibinfo{number}{2} (\bibinfo{year}{2018}),
  \bibinfo{pages}{10:1--10:41}.
\newblock


\bibitem[\protect\citeauthoryear{Qiu, Zheng, Dong, and Huang}{Qiu
  et~al\mbox{.}}{2018}]%
        {Qiu:2018}
\bibfield{author}{\bibinfo{person}{Xianhong Qiu}, \bibinfo{person}{Jiping
  Zheng}, \bibinfo{person}{Qi Dong}, {and} \bibinfo{person}{Xingnan Huang}.}
  \bibinfo{year}{2018}\natexlab{}.
\newblock \showarticletitle{Speed-Up Algorithms for Happiness-Maximizing
  Representative Databases}. In \bibinfo{booktitle}{\emph{APWeb/WAIM
  Workshops}}. \bibinfo{pages}{321--335}.
\newblock


\bibitem[\protect\citeauthoryear{Saff and Kuijlaars}{Saff and
  Kuijlaars}{1997}]%
        {Saff:1997}
\bibfield{author}{\bibinfo{person}{Edward~B. Saff} {and} \bibinfo{person}{Amo
  B.~J. Kuijlaars}.} \bibinfo{year}{1997}\natexlab{}.
\newblock \showarticletitle{Distributing many points on a sphere}.
\newblock \bibinfo{journal}{\emph{Math. Intell.}} \bibinfo{volume}{19},
  \bibinfo{number}{1} (\bibinfo{year}{1997}), \bibinfo{pages}{5--11}.
\newblock


\bibitem[\protect\citeauthoryear{Shetiya, Asudeh, Ahmed, and Das}{Shetiya
  et~al\mbox{.}}{2019}]%
        {Shetiya:2020}
\bibfield{author}{\bibinfo{person}{Suraj Shetiya}, \bibinfo{person}{Abolfazl
  Asudeh}, \bibinfo{person}{Sadia Ahmed}, {and} \bibinfo{person}{Gautam Das}.}
  \bibinfo{year}{2019}\natexlab{}.
\newblock \showarticletitle{A Unified Optimization Algorithm For Solving
  ``Regret-Minimizing Representative'' Problems}.
\newblock \bibinfo{journal}{\emph{Proc. {VLDB} Endow.}} \bibinfo{volume}{13},
  \bibinfo{number}{3} (\bibinfo{year}{2019}), \bibinfo{pages}{239--251}.
\newblock


\bibitem[\protect\citeauthoryear{Singh and Joachims}{Singh and
  Joachims}{2018}]%
        {DBLP:conf/kdd/SinghJ18}
\bibfield{author}{\bibinfo{person}{Ashudeep Singh} {and}
  \bibinfo{person}{Thorsten Joachims}.} \bibinfo{year}{2018}\natexlab{}.
\newblock \showarticletitle{Fairness of Exposure in Rankings}. In
  \bibinfo{booktitle}{\emph{{KDD}}}. \bibinfo{pages}{2219--2228}.
\newblock


\bibitem[\protect\citeauthoryear{Soma and Yoshida}{Soma and Yoshida}{2017}]%
        {Soma:2017}
\bibfield{author}{\bibinfo{person}{Tasuku Soma} {and} \bibinfo{person}{Yuichi
  Yoshida}.} \bibinfo{year}{2017}\natexlab{}.
\newblock \showarticletitle{Regret Ratio Minimization in Multi-Objective
  Submodular Function Maximization}. In \bibinfo{booktitle}{\emph{{AAAI}}}.
  \bibinfo{pages}{905--911}.
\newblock


\bibitem[\protect\citeauthoryear{Storandt and Funke}{Storandt and
  Funke}{2019}]%
        {Storandt:2019}
\bibfield{author}{\bibinfo{person}{Sabine Storandt} {and}
  \bibinfo{person}{Stefan Funke}.} \bibinfo{year}{2019}\natexlab{}.
\newblock \showarticletitle{Algorithms for Average Regret Minimization}. In
  \bibinfo{booktitle}{\emph{{AAAI}}}. \bibinfo{pages}{1600--1607}.
\newblock


\bibitem[\protect\citeauthoryear{Stoyanovich, Howe, and Jagadish}{Stoyanovich
  et~al\mbox{.}}{2020}]%
        {Stoyanovich:2020}
\bibfield{author}{\bibinfo{person}{Julia Stoyanovich}, \bibinfo{person}{Bill
  Howe}, {and} \bibinfo{person}{H.~V. Jagadish}.}
  \bibinfo{year}{2020}\natexlab{}.
\newblock \showarticletitle{Responsible Data Management}.
\newblock \bibinfo{journal}{\emph{Proc. {VLDB} Endow.}} \bibinfo{volume}{13},
  \bibinfo{number}{12} (\bibinfo{year}{2020}), \bibinfo{pages}{3474--3488}.
\newblock


\bibitem[\protect\citeauthoryear{Stoyanovich, Yang, and Jagadish}{Stoyanovich
  et~al\mbox{.}}{2018}]%
        {Stoyanovich:2018}
\bibfield{author}{\bibinfo{person}{Julia Stoyanovich}, \bibinfo{person}{Ke
  Yang}, {and} \bibinfo{person}{H.~V. Jagadish}.}
  \bibinfo{year}{2018}\natexlab{}.
\newblock \showarticletitle{Online Set Selection with Fairness and Diversity
  Constraints}. In \bibinfo{booktitle}{\emph{{EDBT}}}.
  \bibinfo{pages}{241--252}.
\newblock


\bibitem[\protect\citeauthoryear{Torrico, Singh, Pokutta, Haghtalab, Naor, and
  Anari}{Torrico et~al\mbox{.}}{2021}]%
        {Torrico:2021}
\bibfield{author}{\bibinfo{person}{Alfredo Torrico}, \bibinfo{person}{Mohit
  Singh}, \bibinfo{person}{Sebastian Pokutta}, \bibinfo{person}{Nika
  Haghtalab}, \bibinfo{person}{Joseph~(Seffi) Naor}, {and}
  \bibinfo{person}{Nima Anari}.} \bibinfo{year}{2021}\natexlab{}.
\newblock \showarticletitle{Structured Robust Submodular Maximization: Offline
  and Online Algorithms}.
\newblock \bibinfo{journal}{\emph{INFORMS J. Comput.}} \bibinfo{volume}{33},
  \bibinfo{number}{4} (\bibinfo{year}{2021}), \bibinfo{pages}{1590--1607}.
\newblock


\bibitem[\protect\citeauthoryear{Udwani}{Udwani}{2018}]%
        {DBLP:conf/nips/Udwani18}
\bibfield{author}{\bibinfo{person}{Rajan Udwani}.}
  \bibinfo{year}{2018}\natexlab{}.
\newblock \showarticletitle{Multi-objective Maximization of Monotone Submodular
  Functions with Cardinality Constraint}. In
  \bibinfo{booktitle}{\emph{NeurIPS}}. \bibinfo{pages}{9513--9524}.
\newblock


\bibitem[\protect\citeauthoryear{Wang, Fabbri, and Mathioudakis}{Wang
  et~al\mbox{.}}{2021a}]%
        {Wang:2021a}
\bibfield{author}{\bibinfo{person}{Yanhao Wang}, \bibinfo{person}{Francesco
  Fabbri}, {and} \bibinfo{person}{Michael Mathioudakis}.}
  \bibinfo{year}{2021}\natexlab{a}.
\newblock \showarticletitle{Fair and Representative Subset Selection from Data
  Streams}. In \bibinfo{booktitle}{\emph{{WWW}}}. \bibinfo{pages}{1340--1350}.
\newblock


\bibitem[\protect\citeauthoryear{Wang, Li, Wong, and Tan}{Wang
  et~al\mbox{.}}{2021b}]%
        {Wang:2021b}
\bibfield{author}{\bibinfo{person}{Yanhao Wang}, \bibinfo{person}{Yuchen Li},
  \bibinfo{person}{Raymond~Chi{-}Wing Wong}, {and} \bibinfo{person}{Kian{-}Lee
  Tan}.} \bibinfo{year}{2021}\natexlab{b}.
\newblock \showarticletitle{A Fully Dynamic Algorithm for k-Regret Minimizing
  Sets}. In \bibinfo{booktitle}{\emph{{ICDE}}}. \bibinfo{pages}{1631--1642}.
\newblock


\bibitem[\protect\citeauthoryear{Wang, Mathioudakis, Li, and Tan}{Wang
  et~al\mbox{.}}{2021c}]%
        {DBLP:conf/pods/WangM0T21}
\bibfield{author}{\bibinfo{person}{Yanhao Wang}, \bibinfo{person}{Michael
  Mathioudakis}, \bibinfo{person}{Yuchen Li}, {and} \bibinfo{person}{Kian{-}Lee
  Tan}.} \bibinfo{year}{2021}\natexlab{c}.
\newblock \showarticletitle{Minimum Coresets for Maxima Representation of
  Multidimensional Data}. In \bibinfo{booktitle}{\emph{{PODS}}}.
  \bibinfo{pages}{138--152}.
\newblock


\bibitem[\protect\citeauthoryear{Xiao and Li}{Xiao and Li}{2022}]%
        {Xiao:2021}
\bibfield{author}{\bibinfo{person}{Xingxing Xiao} {and}
  \bibinfo{person}{Jianzhong Li}.} \bibinfo{year}{2022}\natexlab{}.
\newblock \showarticletitle{Rank-Regret Minimization}. In
  \bibinfo{booktitle}{\emph{{ICDE}}}. \bibinfo{pages}{1848--1860}.
\newblock


\bibitem[\protect\citeauthoryear{Xie, Wong, and Lall}{Xie
  et~al\mbox{.}}{2019}]%
        {Xie:2019}
\bibfield{author}{\bibinfo{person}{Min Xie},
  \bibinfo{person}{Raymond~Chi{-}Wing Wong}, {and} \bibinfo{person}{Ashwin
  Lall}.} \bibinfo{year}{2019}\natexlab{}.
\newblock \showarticletitle{Strongly Truthful Interactive Regret Minimization}.
  In \bibinfo{booktitle}{\emph{{SIGMOD}}}. \bibinfo{pages}{281--298}.
\newblock


\bibitem[\protect\citeauthoryear{Xie, Wong, and Lall}{Xie
  et~al\mbox{.}}{2020a}]%
        {Xie:2020VLDBJ}
\bibfield{author}{\bibinfo{person}{Min Xie},
  \bibinfo{person}{Raymond~Chi{-}Wing Wong}, {and} \bibinfo{person}{Ashwin
  Lall}.} \bibinfo{year}{2020}\natexlab{a}.
\newblock \showarticletitle{An experimental survey of regret minimization query
  and variants: bridging the best worlds between top-k query and skyline
  query}.
\newblock \bibinfo{journal}{\emph{{VLDB} J.}} \bibinfo{volume}{29},
  \bibinfo{number}{1} (\bibinfo{year}{2020}), \bibinfo{pages}{147--175}.
\newblock


\bibitem[\protect\citeauthoryear{Xie, Wong, Li, Long, and Lall}{Xie
  et~al\mbox{.}}{2018}]%
        {Xie:2018}
\bibfield{author}{\bibinfo{person}{Min Xie},
  \bibinfo{person}{Raymond~Chi{-}Wing Wong}, \bibinfo{person}{Jian Li},
  \bibinfo{person}{Cheng Long}, {and} \bibinfo{person}{Ashwin Lall}.}
  \bibinfo{year}{2018}\natexlab{}.
\newblock \showarticletitle{Efficient k-Regret Query Algorithm with
  Restriction-free Bound for any Dimensionality}. In
  \bibinfo{booktitle}{\emph{{SIGMOD}}}. \bibinfo{pages}{959--974}.
\newblock


\bibitem[\protect\citeauthoryear{Xie, Wong, Peng, and Tsotras}{Xie
  et~al\mbox{.}}{2020b}]%
        {Xie:2020}
\bibfield{author}{\bibinfo{person}{Min Xie},
  \bibinfo{person}{Raymond~Chi{-}Wing Wong}, \bibinfo{person}{Peng Peng}, {and}
  \bibinfo{person}{Vassilis~J. Tsotras}.} \bibinfo{year}{2020}\natexlab{b}.
\newblock \showarticletitle{Being Happy with the Least: Achieving
  {\(\alpha\)}-happiness with Minimum Number of Tuples}. In
  \bibinfo{booktitle}{\emph{{ICDE}}}. \bibinfo{pages}{1009--1020}.
\newblock


\bibitem[\protect\citeauthoryear{Zehlike, Bonchi, Castillo, Hajian, Megahed,
  and Baeza{-}Yates}{Zehlike et~al\mbox{.}}{2017}]%
        {DBLP:conf/cikm/ZehlikeB0HMB17}
\bibfield{author}{\bibinfo{person}{Meike Zehlike}, \bibinfo{person}{Francesco
  Bonchi}, \bibinfo{person}{Carlos Castillo}, \bibinfo{person}{Sara Hajian},
  \bibinfo{person}{Mohamed Megahed}, {and} \bibinfo{person}{Ricardo
  Baeza{-}Yates}.} \bibinfo{year}{2017}\natexlab{}.
\newblock \showarticletitle{FA*IR: {A} Fair Top-k Ranking Algorithm}. In
  \bibinfo{booktitle}{\emph{{CIKM}}}. \bibinfo{pages}{1569--1578}.
\newblock


\bibitem[\protect\citeauthoryear{Zehlike, Yang, and Stoyanovich}{Zehlike
  et~al\mbox{.}}{2021}]%
        {Zehlike:2021}
\bibfield{author}{\bibinfo{person}{Meike Zehlike}, \bibinfo{person}{Ke Yang},
  {and} \bibinfo{person}{Julia Stoyanovich}.} \bibinfo{year}{2021}\natexlab{}.
\newblock \showarticletitle{Fairness in Ranking: A Survey}.
\newblock  (\bibinfo{year}{2021}).
\newblock
\showeprint[arxiv]{2103.14000}~[cs.IR]


\bibitem[\protect\citeauthoryear{Zeighami and Wong}{Zeighami and Wong}{2019}]%
        {Zeighami:2019}
\bibfield{author}{\bibinfo{person}{Sepanta Zeighami} {and}
  \bibinfo{person}{Raymond~Chi{-}Wing Wong}.} \bibinfo{year}{2019}\natexlab{}.
\newblock \showarticletitle{Finding Average Regret Ratio Minimizing Set in
  Database}. In \bibinfo{booktitle}{\emph{{ICDE}}}.
  \bibinfo{pages}{1722--1725}.
\newblock


\bibitem[\protect\citeauthoryear{Zheng and Chen}{Zheng and Chen}{2020}]%
        {DBLP:conf/apweb/0001C20}
\bibfield{author}{\bibinfo{person}{Jiping Zheng} {and} \bibinfo{person}{Chen
  Chen}.} \bibinfo{year}{2020}\natexlab{}.
\newblock \showarticletitle{Sorting-Based Interactive Regret Minimization}. In
  \bibinfo{booktitle}{\emph{APWeb/WAIM}}. \bibinfo{pages}{473--490}.
\newblock


\bibitem[\protect\citeauthoryear{Zheng, Wang, Wang, and Ma}{Zheng
  et~al\mbox{.}}{2022}]%
        {9756312}
\bibfield{author}{\bibinfo{person}{Jiping Zheng}, \bibinfo{person}{Yanhao
  Wang}, \bibinfo{person}{Xiaoyang Wang}, {and} \bibinfo{person}{Wei Ma}.}
  \bibinfo{year}{2022}\natexlab{}.
\newblock \showarticletitle{Continuous k-Regret Minimization Queries: A Dynamic
  Coreset Approach}.
\newblock \bibinfo{journal}{\emph{IEEE Trans. Knowl. Data Eng.}}
  (\bibinfo{year}{2022}).
\newblock
\urldef\tempurl%
\url{https://doi.org/10.1109/TKDE.2022.3166835}
\showDOI{\tempurl}


\bibitem[\protect\citeauthoryear{Zliobaite}{Zliobaite}{2017}]%
        {DBLP:journals/datamine/Zliobaite17}
\bibfield{author}{\bibinfo{person}{Indre Zliobaite}.}
  \bibinfo{year}{2017}\natexlab{}.
\newblock \showarticletitle{Measuring discrimination in algorithmic decision
  making}.
\newblock \bibinfo{journal}{\emph{Data Min. Knowl. Discov.}}
  \bibinfo{volume}{31}, \bibinfo{number}{4} (\bibinfo{year}{2017}),
  \bibinfo{pages}{1060--1089}.
\newblock


\end{thebibliography}

\clearpage
\appendix

\begin{figure*}[t]
    \centering
    \includegraphics[height=0.15in]{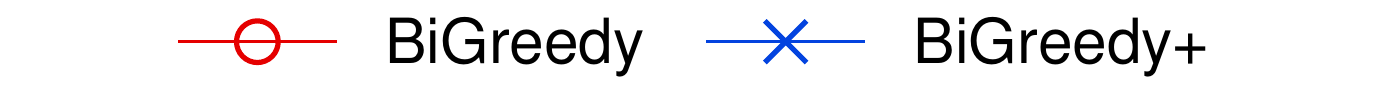}
    \\
    \begin{subfigure}[b]{0.195\textwidth}
        \centering
        \includegraphics[width=\textwidth]{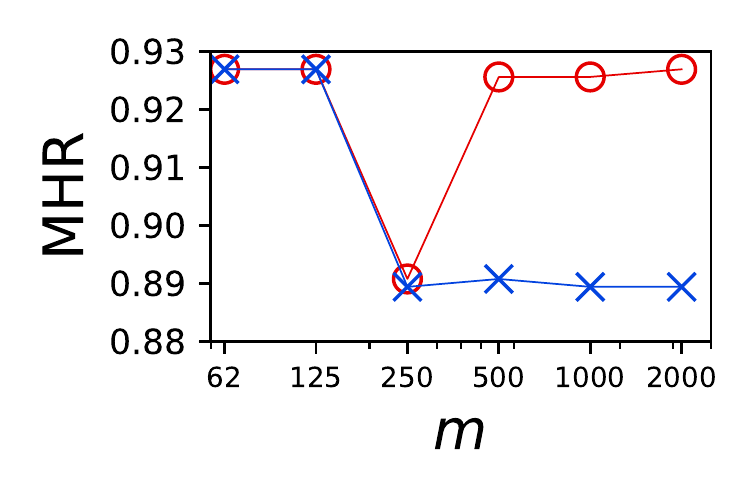}
        \caption{Adult (Gender)}
    \end{subfigure}
    \hfill
    \begin{subfigure}[b]{0.195\textwidth}
        \centering
        \includegraphics[width=\textwidth]{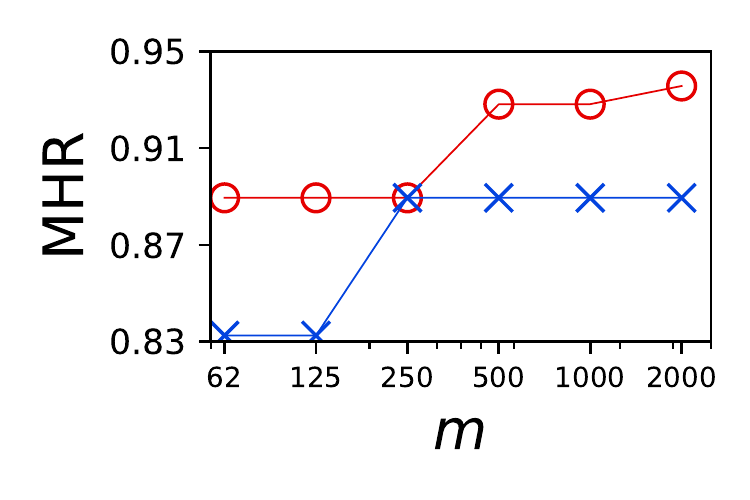}
        \caption{Adult (Race)}
    \end{subfigure}
    \hfill
    \begin{subfigure}[b]{0.195\textwidth}
        \centering
        \includegraphics[width=\textwidth]{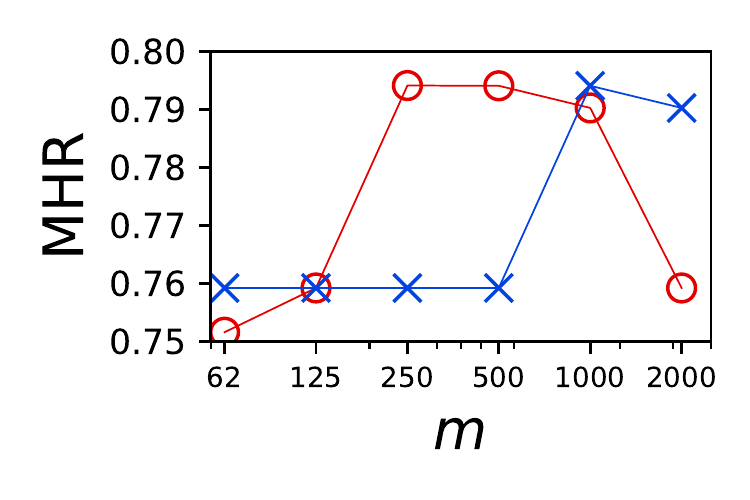}
        \caption{Adult (G+R)}
    \end{subfigure}
    \hfill
    \begin{subfigure}[b]{0.195\textwidth}
        \centering
        \includegraphics[width=\textwidth]{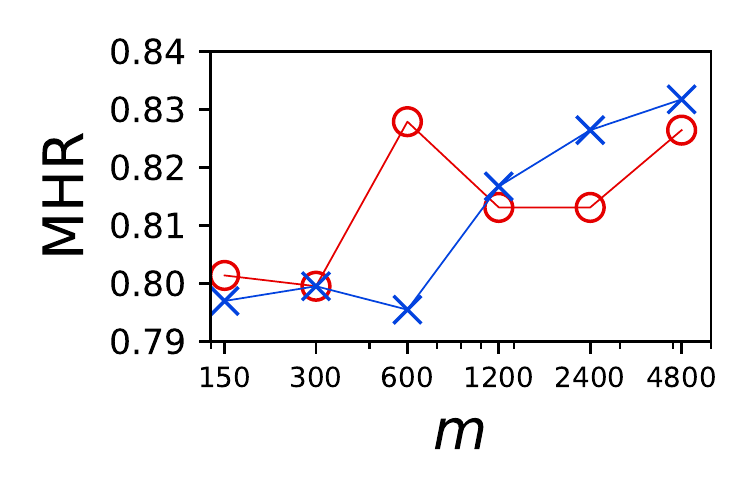}
        \caption{AntiCor\_6D}
    \end{subfigure}
    \hfill
    \begin{subfigure}[b]{0.195\textwidth}
        \centering
        \includegraphics[width=\textwidth]{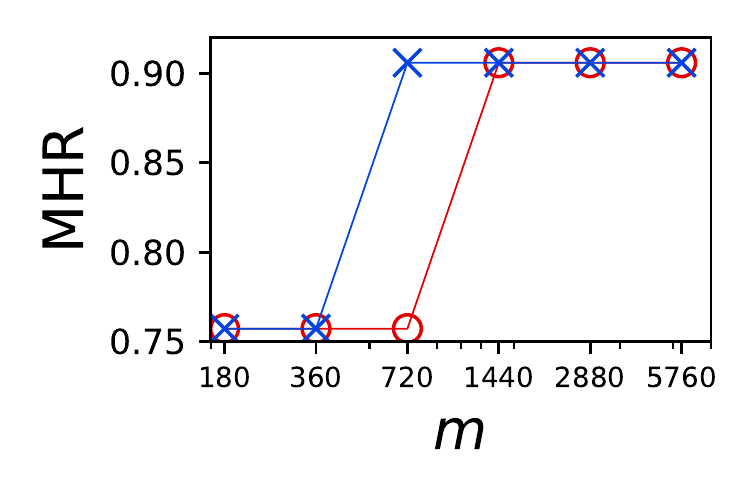}
        \caption{Compas (Gender)}
    \end{subfigure}
    \begin{subfigure}[b]{0.195\textwidth}
        \centering
        \includegraphics[width=\textwidth]{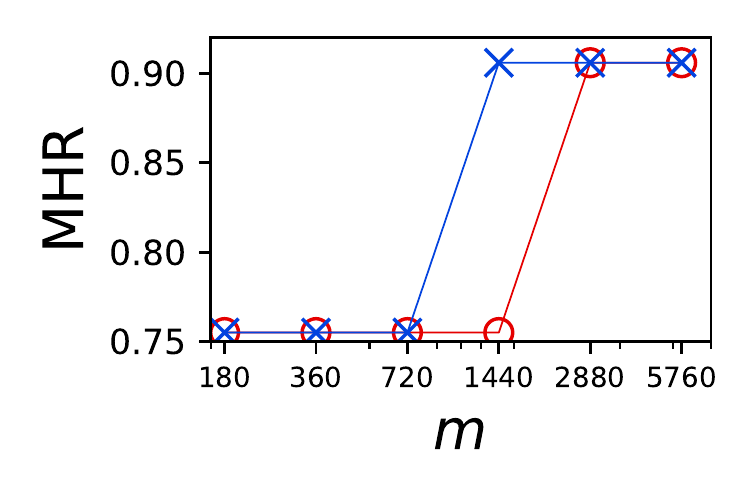}
        \caption{Compas (isRecid)}
    \end{subfigure}
    \hfill
    \begin{subfigure}[b]{0.195\textwidth}
        \centering
        \includegraphics[width=\textwidth]{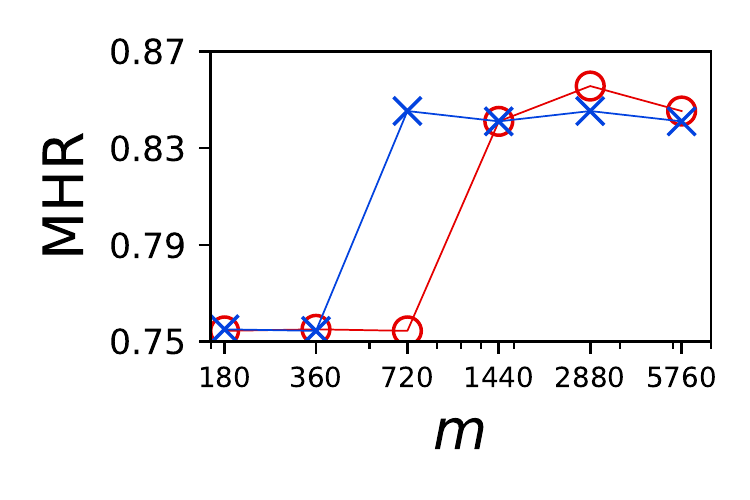}
        \caption{Compas (G+iR)}
    \end{subfigure}
    \hfill
    \begin{subfigure}[b]{0.195\textwidth}
        \centering
        \includegraphics[width=\textwidth]{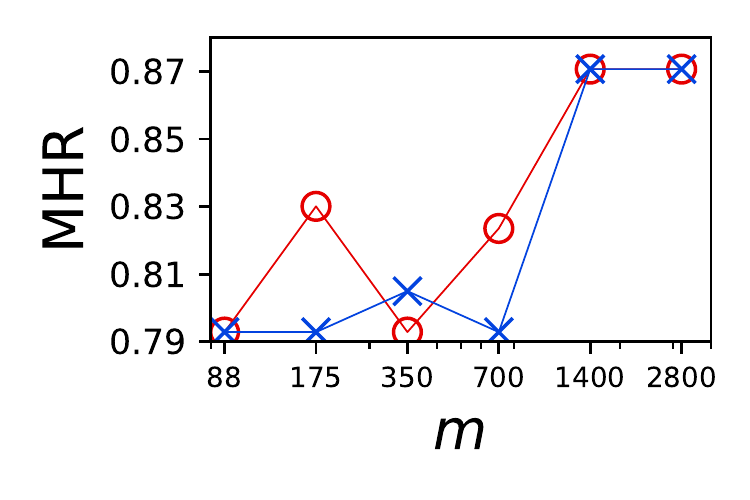}
        \caption{Credit (Job)}
    \end{subfigure}
    \hfill
    \begin{subfigure}[b]{0.195\textwidth}
        \centering
        \includegraphics[width=\textwidth]{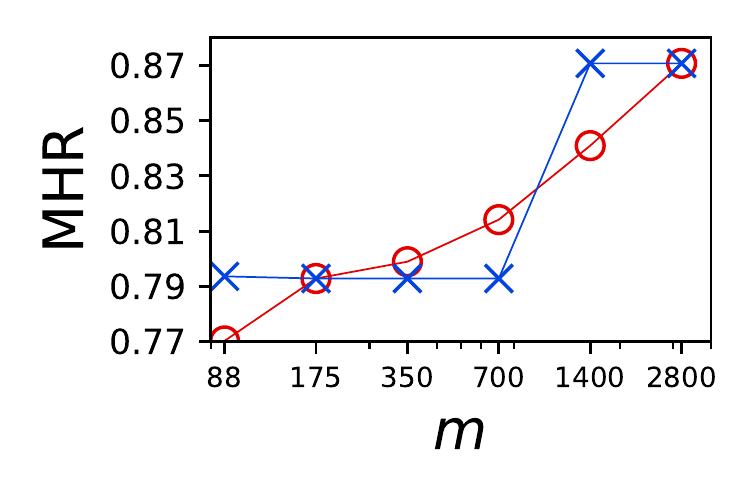}
        \caption{Credit (Housing)}
    \end{subfigure}
    \hfill
    \begin{subfigure}[b]{0.195\textwidth}
        \centering
        \includegraphics[width=\textwidth]{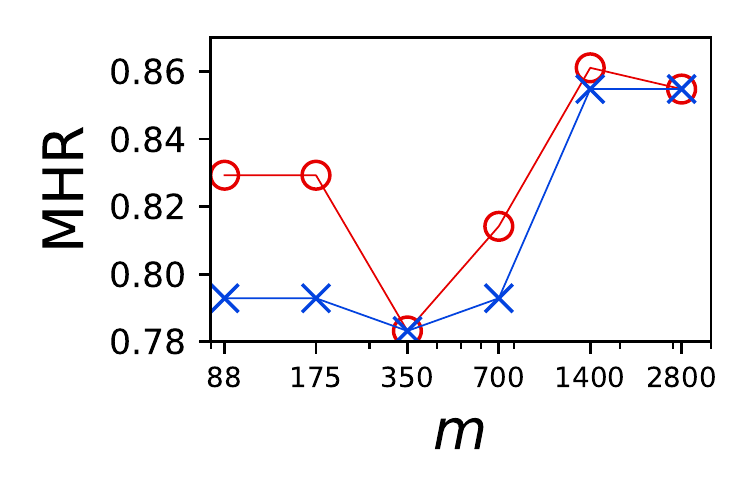}
        \caption{Credit (WY)}
    \end{subfigure}
    \caption{Results for the MHRs of \AlgBG and \AlgIBG by varying sample size $m$ or maximum sample size $M$.}
    \Description{experimental results}
    \label{fig:delta:mhr}
\end{figure*}

\begin{figure*}[t]
    \centering
    \includegraphics[height=0.15in]{figs_new/VaryDelta/legend-md.pdf}
    \\
    \begin{subfigure}[b]{0.195\textwidth}
        \centering
        \includegraphics[width=\textwidth]{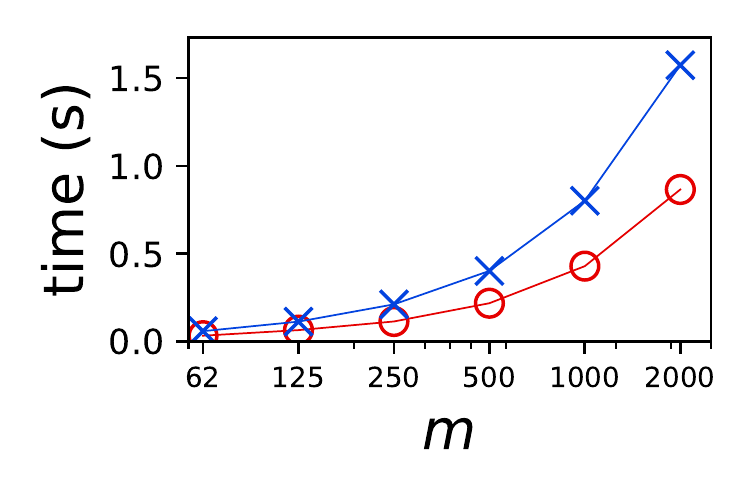}
        \caption{Adult (Gender)}
    \end{subfigure}
    \hfill
    \begin{subfigure}[b]{0.195\textwidth}
        \centering
        \includegraphics[width=\textwidth]{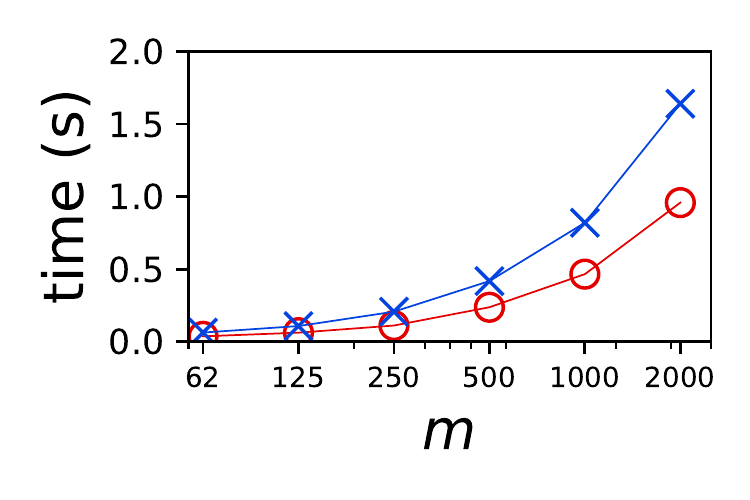}
        \caption{Adult (Race)}
    \end{subfigure}
    \hfill
    \begin{subfigure}[b]{0.195\textwidth}
        \centering
        \includegraphics[width=\textwidth]{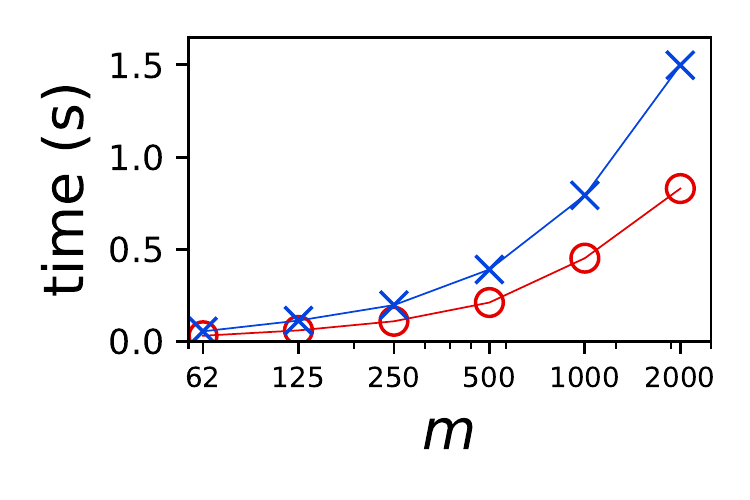}
        \caption{Adult (G+R)}
    \end{subfigure}
    \hfill
    \begin{subfigure}[b]{0.195\textwidth}
        \centering
        \includegraphics[width=\textwidth]{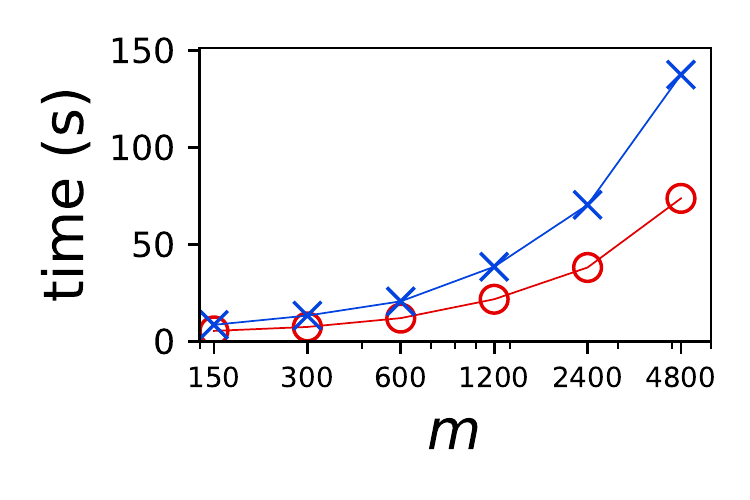}
        \caption{AntiCor\_6D}
    \end{subfigure}
    \hfill
    \begin{subfigure}[b]{0.195\textwidth}
        \centering
        \includegraphics[width=\textwidth]{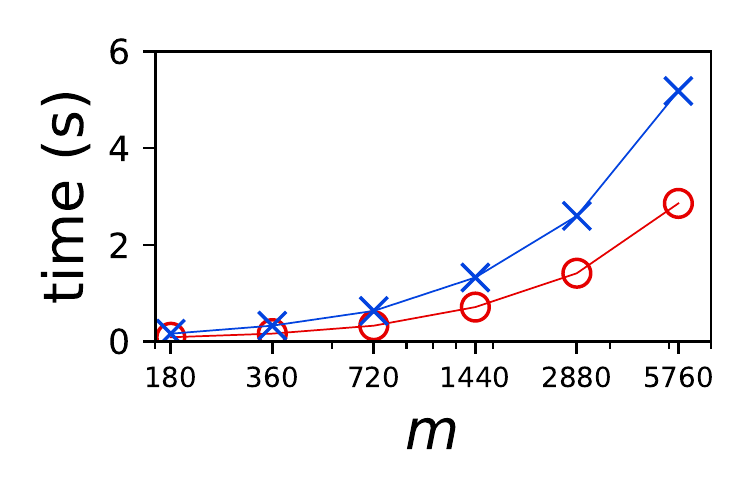}
        \caption{Compas (Gender)}
    \end{subfigure}
    \begin{subfigure}[b]{0.195\textwidth}
        \centering
        \includegraphics[width=\textwidth]{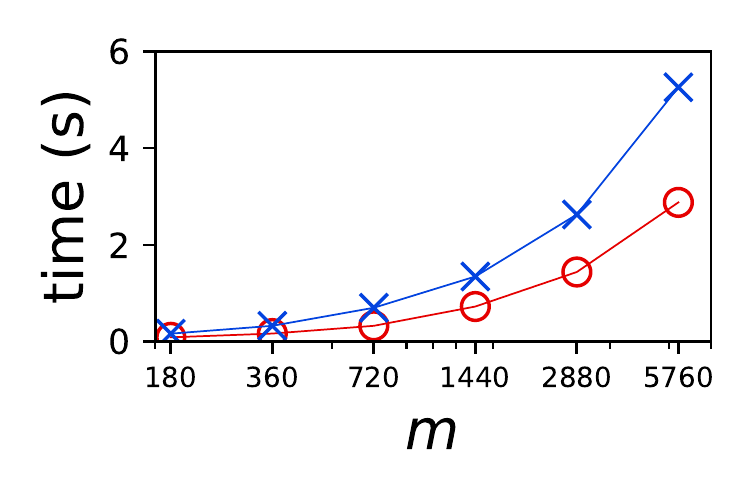}
        \caption{Compas (isRecid)}
    \end{subfigure}
    \hfill
    \begin{subfigure}[b]{0.195\textwidth}
        \centering
        \includegraphics[width=\textwidth]{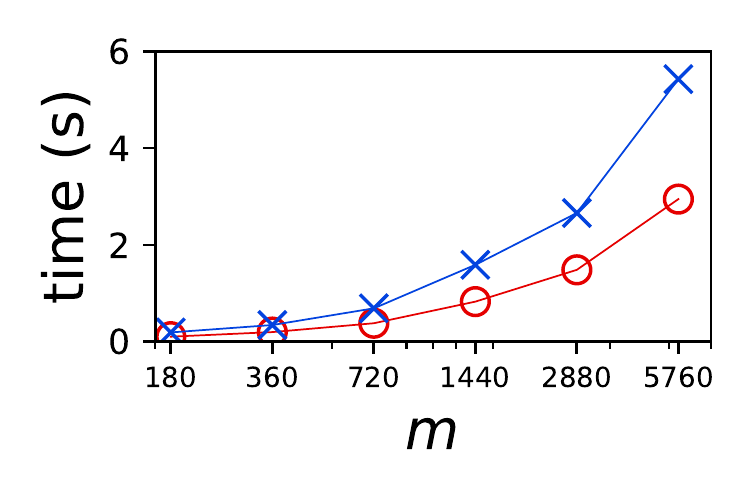}
        \caption{Compas (G+iR)}
    \end{subfigure}
    \hfill
    \begin{subfigure}[b]{0.195\textwidth}
        \centering
        \includegraphics[width=\textwidth]{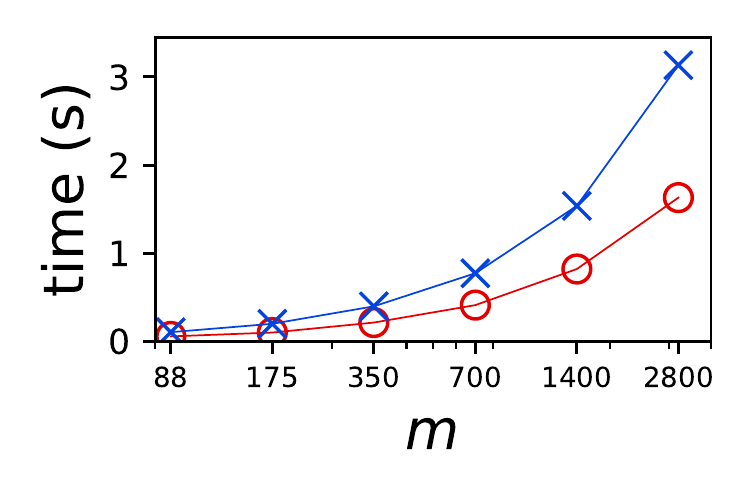}
        \caption{Credit (Job)}
    \end{subfigure}
    \hfill
    \begin{subfigure}[b]{0.195\textwidth}
        \centering
        \includegraphics[width=\textwidth]{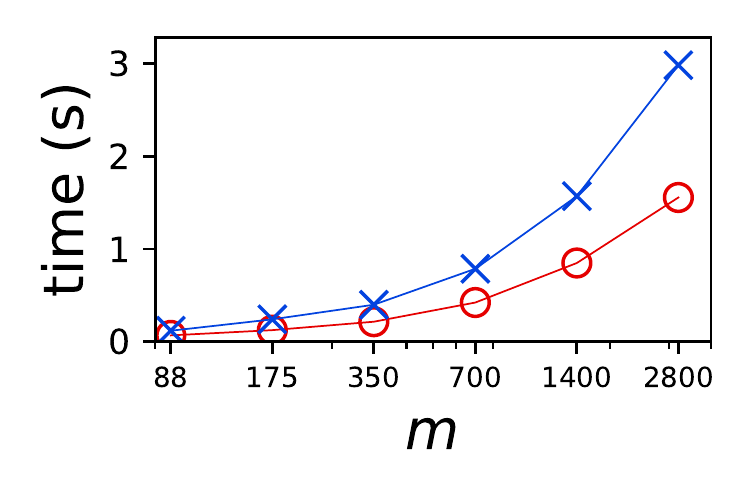}
        \caption{Credit (Housing)}
    \end{subfigure}
    \hfill
    \begin{subfigure}[b]{0.195\textwidth}
        \centering
        \includegraphics[width=\textwidth]{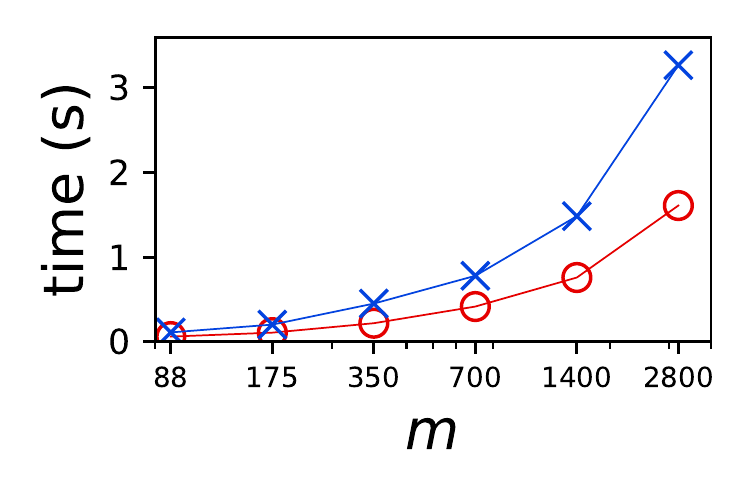}
        \caption{Credit (WY)}
    \end{subfigure}
    \caption{Results for the running time of \AlgBG and \AlgIBG by varying sample size $m$ or maximum sample size $M$.}
    \Description{experimental results}
    \label{fig:delta:time}
\end{figure*}

\section{Missing Proofs}\label{app:proofs}

In this section, we provide the proofs of theorems and lemmas omitted from the paper.

\subsection{Proof of Lemma~\ref{lm:delta:net}}

\noindent\textsc{Lemma~\ref{lm:delta:net}.}\;\;Given a $\delta$-net $\mathcal{N} \subset \mathbb{S}^{d-1}_{+}$, a database $\mathcal{D}$, and a subset $S \subseteq \mathcal{D}$, it holds that $mhr(S) \leq mhr(S | \mathcal{N}) \leq mhr(S) + \frac{2 \delta d}{1 + \delta d}$.

\begin{proof}
    First of all, since $\mathcal{N} \subset \mathbb{S}^{d-1}_{+}$, it is obvious that $mhr(S) = \min_{u \in \mathbb{S}^{d-1}_{+}} hr(u, S) \leq \min_{u \in \mathcal{N}} hr(u, S) = mhr(S | \mathcal{N})$. Then, it suffices to show that $mhr(S | \mathcal{N}) \leq mhr(S) + \frac{2 \delta d}{1 + \delta d}$. For any vector $u \in \mathbb{S}^{d-1}_{+}$, there exists a vector $v \in \mathcal{N}$ with $\langle u, v \rangle \geq \cos \delta$ by the definition of $\delta$-net and we get
    \begin{align*}
      \| u - v \| & = \sqrt{\sum_{i = 1}^{d} (u[i] - v[i])^2} = \sqrt{2 - 2 \sum_{i = 1}^{d} u[i] \cdot v[i]} \\
                  & \leq \sqrt{2 -\cos\delta} = 2 \sin(\delta/2) \leq \delta
    \end{align*}
    based on trigonometric equations. For any point $p \in [0,1]^d$, according to the Cauchy-Schwarz inequality~\cite{Agarwal:2017,Wang:2021b}, we have
    \begin{equation}
    \label{eq:net}
        |\langle u, p \rangle - \langle v, p \rangle| \leq \|u - v\| \cdot \|p\| \leq \delta \|p\| \leq \delta\sqrt{d}
    \end{equation}
    Thus, for any subset $S \subseteq \mathcal{D}$, we suppose that the MHR of $S$ over $\mathcal{D}$ is reached on vector $u'$, i.e., $u' = \argmin_{u \in \mathbb{S}^{d-1}_{+}} hr(u, S)$ and there is a vector $v' \in \mathcal{N}$ with $\langle u', v' \rangle \geq \cos \delta$. By adapting the analysis in~\cite{Agarwal:2017,Wang:2021b} from RMS to HMS, we have the following results for the vectors $u'$ and $v'$:
    \begin{align*}
      mhr(S) & = \frac{\max_{p \in S} \langle u',p \rangle}{\max_{p \in \mathcal{D}} \langle u',p \rangle} \geq \frac{\max_{p \in S} \langle v',p \rangle - \delta\sqrt{d}}{\max_{p \in \mathcal{D}} \langle v',p \rangle + \delta\sqrt{d}} \\
      & \geq \frac{mhr(S|\mathcal{N}) - \frac{\delta\sqrt{d}}{\max_{p \in \mathcal{D}} \langle v',p \rangle}}{1 + \frac{\delta\sqrt{d}}{\max_{p \in \mathcal{D}} \langle v',p \rangle}} \geq \frac{mhr(S|\mathcal{N}) - \delta d}{1 + \delta d} \\
      & \geq mhr(S|\mathcal{N}) - \frac{2 \delta d}{1 + \delta d}
    \end{align*}
    where the first inequality is based on Equation~\ref{eq:net}, the second inequality is obtained from the fact that $mhr(S|\mathcal{N}) \leq \frac{\max_{p \in S} \langle v',p \rangle}{\max_{p \in \mathcal{D}} \langle v',p \rangle}$, the third inequality is because $\max_{p \in \mathcal{D}} \langle v',p \rangle \geq 1/\sqrt{d}$, and the fourth inequality is acquired from $mhr(S|\mathcal{N}) \leq 1$.
\end{proof}

\subsection{Proof of Lemma~\ref{lm:inapprox}}

\noindent\textsc{Lemma~\ref{lm:inapprox}.}\;\;There does not exist any polynomial-time algorithm to approximate the reduced FairHMS problem defined on a set $\mathcal{N}$ of vectors in $\mathbb{S}^{d-1}_{+}$ with a factor of $(1-\varepsilon) \cdot \log{m}$, where $m = |\mathcal{N}|$, for any parameter $\varepsilon > 0$ unless P=NP.

\begin{proof}
	We prove this theorem by reducing from the \textsc{SetCover} problem, a classic NP-Hard problem~\cite{Feige:1998}, to the reduced FairHMS problem when $C = 1$ (\ie without fairness constraints). Given a ground set $\mathcal{E} = \{e_1, \ldots, e_m\}$, a set collection $\mathcal{S} = \{S_1, \ldots, S_n\}$ where $S_j \subseteq \mathcal{E}$, the \textsc{SetCover} decision problem is to determine whether there exists a size-$k$ subset $\mathcal{S}' \subseteq \mathcal{S}$ such that $\bigcup_{S \in \mathcal{S}'} S = \mathcal{E}$. Then, for each element $e_i \in \mathcal{E}$, we define an $m$-dimensional unit vector $u_i = (u_{i}[1], \ldots, u_{i}[m])$, where $u_i[j] = 1$ if $i = j$ and $u_i[j] = 0$ otherwise for each $i, j \in [1,m]$. For each set $S_j \in \mathcal{S}$, we define an $m$-dimensional point $p_j = (p_{j}[1], \ldots, p_{j}[m])$ where $p_{j}[i] = 1$ if $e_i \in S_j$ and $p_{j}[i] = 0$ otherwise for each $i \in [1,m]$ and $j \in [1,n]$. Intuitively, if a sub-collection $\mathcal{S}'$ covers $u_i$, there will exist a point $p$ w.r.t.~some $S \in \mathcal{S}'$ such that $\langle p, u_i \rangle = 1$ and vice versa. In this way, we reduce a \textsc{SetCover} instance on $\mathcal{E}$ and $\mathcal{S}$ to an HMS instance on $U = \{u_i : i \in [1, m]\}$ and $\mathcal{D} = \{p_j : j \in [1, n]\}$. Next, we show that there exists a sub-collection $\mathcal{S}'$ of $k$ sets to cover $\mathcal{E}$ \textsc{if and only if} the minimum happiness ratio $mhr(P | U)$ of the size-$k$ point set $P$ corresponding to $\mathcal{S}'$ defined on $U$ is $1$.
	\begin{itemize}[leftmargin=*]
		\item \textsc{If Direction:} If $mhr(P|U) = 1$, there exists a point $p_j \in P$ such that $\langle p_j, u_i \rangle = 1$ for any $u_i \in U$ and thus $p_j[i] = 1$, which implies that there is a set $S_j \in \mathcal{S}'$ such that $e_i \in S_j$ for every $e_i \in \mathcal{E}$. Thus, $\mathcal{S}'$ is a solution for the \textsc{SetCover} problem.
		\item \textsc{Only if Direction:} If $\bigcup_{S \in \mathcal{S}'} S = \mathcal{E}$, then, for each $u_i \in U$, there is a set $S_j \in \mathcal{S}'$ whose corresponding point $p_j$ satisfies that $\langle p_j, u_i \rangle = 1$. Moreover, since $\max_{p \in \mathcal{D}} \langle p, u_i \rangle = 1$ for any $u_i \in U$, we get $hr(u_i, P) = 1/1 = 1$ and thus $mhr(P) = 1$, where $P = \{p_j : S_j \in \mathcal{S}'\}$.
	\end{itemize}
	Since the above reduction procedure is performed in $O(nm)$ time, we prove that HMS is NP-hard from the NP-hardness of \textsc{SetCover}. Because HMS is a special case of FairHMS when $C=1$, the reduced FairHMS problem is NP-hard as well. Furthermore, it is known that there does not exist any polynomial-time algorithm that approximates the \textsc{SetCover} problem within a factor of $(1-\varepsilon) \log m$ for any $\varepsilon > 0$ unless $\mathrm{NP} \subset \mathrm{TIME}(n^{O(\log{\log n})})$ (\cite{Feige:1998}, \textsc{Theorem} 4.4). Formally, suppose that $\mathcal{S}^*$ be the smallest sub-collection of $\mathcal{S}$ that covers $\mathcal{E}$, no polynomial-time algorithm can guarantee to find a sub-collection of $\mathcal{S}'$ that covers $\mathcal{E}$ with $|\mathcal{S}'| \leq  (1-\varepsilon) \log m \cdot |\mathcal{S}^*|$ for any $\varepsilon > 0$ unless P=NP. The hardness result holds for HMS/FairHMS as well, \ie there is no polynomial-time algorithm that guarantees to find a solution $S_{\tau}$ of size $k_{\tau} = (1-\varepsilon) \log m \cdot k^*_{\tau}$ for any $\varepsilon > 0$ with $mhs(S_{\tau}) = \tau$ unless P=NP, where $k^*_{\tau}$ is the size of the smallest subset $S^*_{\tau}$ for HMS with $mhs(S^*_{\tau}) = \tau$ for any $\tau \in (0, 1)$.
\end{proof}

\subsection{Proof of Lemma~\ref{lem:capped}}

\noindent\textsc{Lemma~\ref{lem:capped}.}\;\;$mhr(S|\mathcal{N}) \geq \tau$ if and only if $mhr_{\tau}(S|\mathcal{N}) = \tau$.

\begin{proof}
    On the one hand, since $\min_{u \in \mathcal{N}} hr(u, S) \geq \tau$, we have $hr_{\tau}(u, S) = \tau$ for any $u \in \mathcal{N}$ and thus $mhr_{\tau}(S|\mathcal{N}) = \frac{m \tau}{m} = \tau$.
    
    On the other hand, we assume that there exists some $u' \in \mathcal{N}$ such that $hr(u', S) < \tau$ and thus $hr_{\tau}(u', S) = \min\{hr(u', S), \tau\} < \tau$. Since $hr_{\tau}(u, S) \leq \tau$, $mhr_{\tau}(S|\mathcal{N}) \leq \frac{(m-1)\tau + hr(u', S)}{m} < \tau$, which contradicts with $mhr_{\tau}(S|\mathcal{N}) = \tau$. Hence, we will have $hr(u, S) \geq \tau$ for each $u \in \mathcal{N}$ and $mhr(S|\mathcal{N}) \geq \tau$ if $mhr_{\tau}(S|\mathcal{N}) = \tau$.
\end{proof}

\subsection{Proof of Lemma~\ref{lem:greedy}}

\noindent\textsc{Lemma~\ref{lem:greedy}.}\;\;Let $\tau^*$ be the largest $\tau$ for which a feasible solution $S^*$ with $mhr_{\tau}(S^*|\mathcal{N}) = \tau$ exists and $S_i$ be the solution returned by the greedy algorithm at the $i$-th round on $\mathcal{D} \setminus \bigcup_{j=1}^{i-1} S_j$ for function $mhr_{\tau^*}(\cdot|\mathcal{N})$. If $\gamma \geq \lceil \log_{2}\frac{m}{\varepsilon} \rceil$, then, for $S = \bigcup_{i=1}^{\gamma} S_i$, it holds that $mhr(S|\mathcal{N}) \geq (1-\varepsilon) \cdot \tau^*$.

\begin{proof}
    First, according to Theorem 3 of~\cite{Anari:2019} generalizing from the statement in~\cite{Fisher:1978}, we have
    \begin{displaymath}
    mhr_{\tau^*}(S|\mathcal{N}) \geq (1 - 2^{-\gamma}) \cdot \tau^* \geq (1 - \frac{\varepsilon}{m}) \cdot \tau^*
    \end{displaymath}
    Suppose that there exists some $u \in \mathcal{N}$ s.t.~$hr_{\tau^*}(u, S)$ $< (1-\varepsilon) \cdot \tau^*$. Because it is known that $hr_{\tau^*}(u, S) \leq \tau^*$ for every $u \in \mathcal{N}$, we have
    \begin{align}
      mhr_{\tau}(S|\mathcal{N}) & \leq \frac{1}{m} \cdot hr_{\tau^*}(u, S) + \frac{m-1}{m} \cdot \tau^* \nonumber\\
                      & < \frac{1-\varepsilon}{m} \cdot \tau^* + \frac{m-1}{m} \cdot \tau^* = (1-\frac{\varepsilon}{m}) \cdot \tau^* \nonumber
    \end{align}
    which contradicts with $mhr_{\tau^*}(S|\mathcal{N}) \geq (1-\frac{\varepsilon}{m}) \cdot \tau^*$. Therefore, we have $hr_{\tau^*}(u, S) \geq (1-\varepsilon) \cdot \tau^*$ for $u \in \mathcal{N}$ and thus $mhr(S|\mathcal{N}) \geq (1-\varepsilon) \cdot \tau^*$ as claimed.
\end{proof}

\begin{figure*}[t]
    \centering
    \begin{subfigure}[b]{0.195\textwidth}
        \centering
        \includegraphics[width=\textwidth]{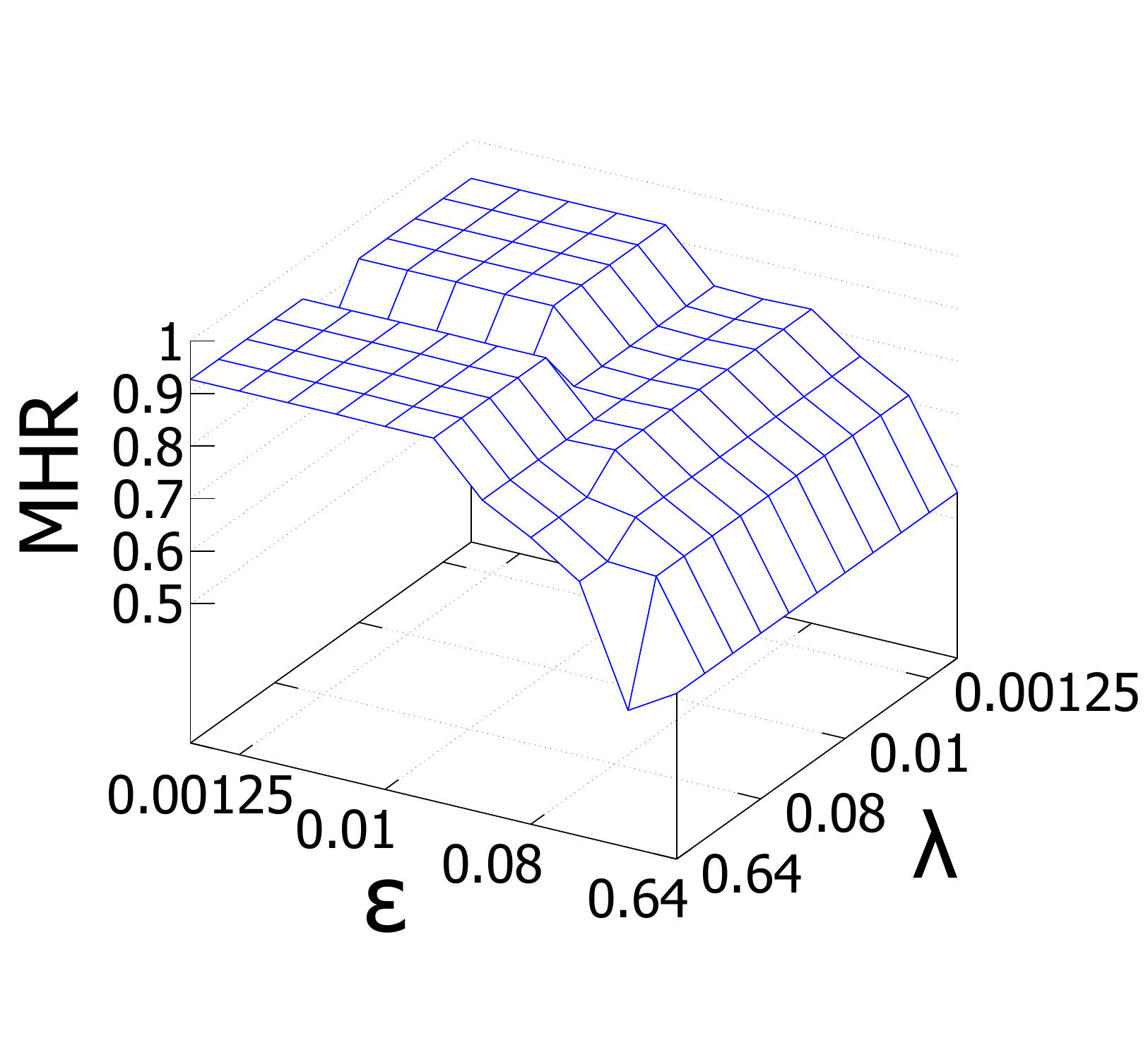}
        \caption{Adult (Gender)}
    \end{subfigure}
    \hfill
    \begin{subfigure}[b]{0.195\textwidth}
        \centering
        \includegraphics[width=\textwidth]{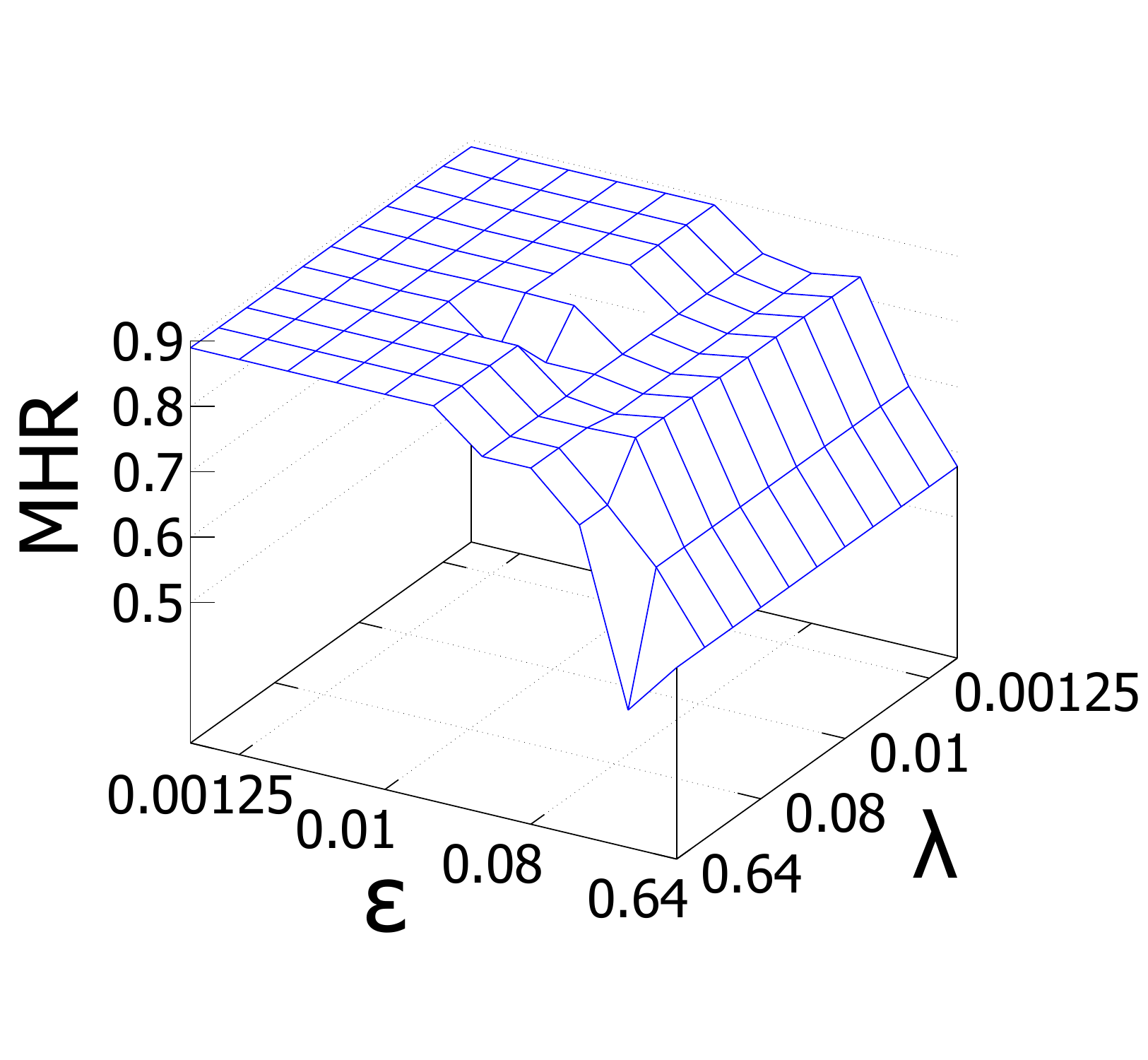}
        \caption{Adult (Race)}
    \end{subfigure}
    \hfill
    \begin{subfigure}[b]{0.195\textwidth}
        \centering
        \includegraphics[width=\textwidth]{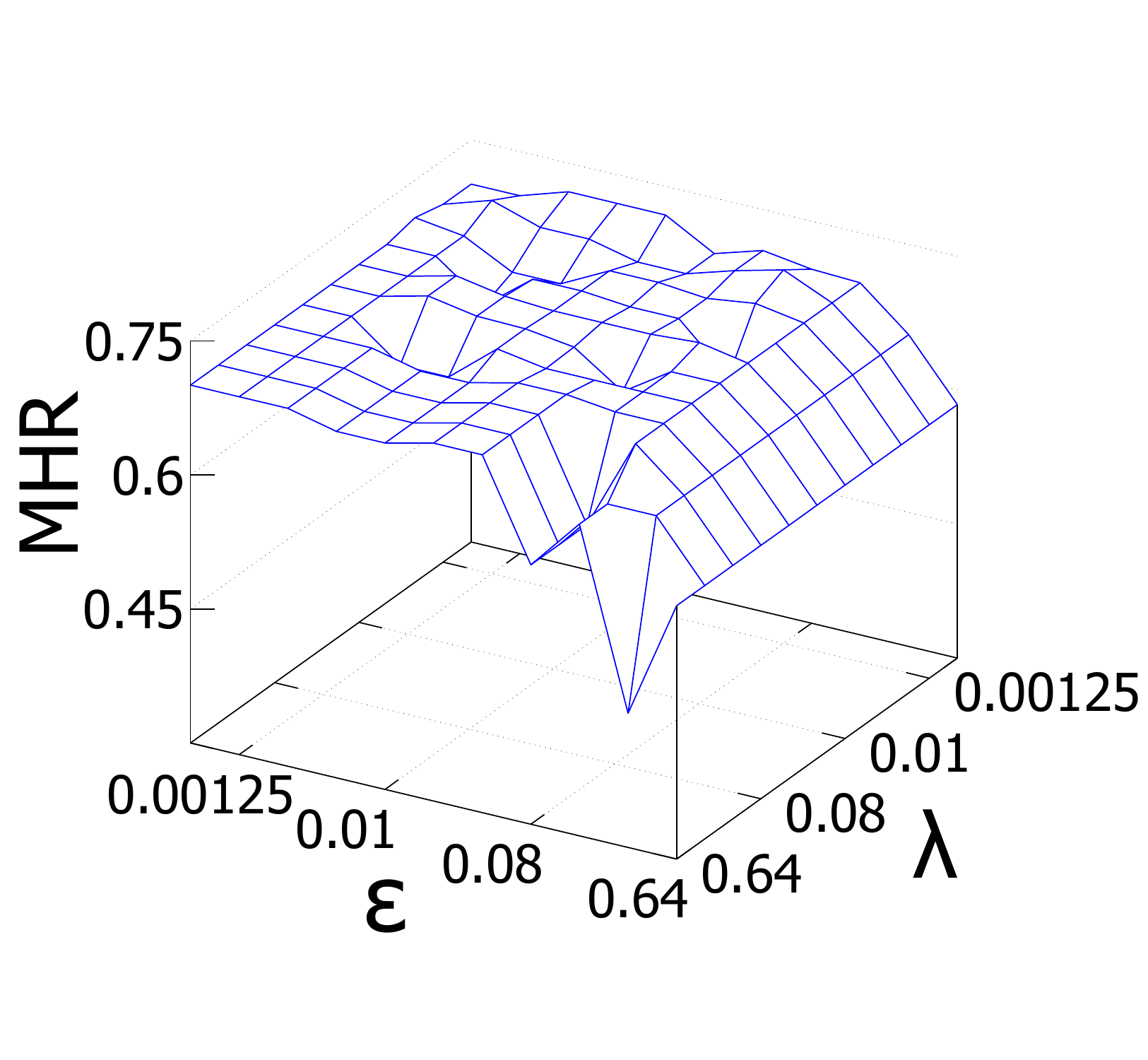}
        \caption{Adult (G+R)}
    \end{subfigure}
    \hfill
    \begin{subfigure}[b]{0.195\textwidth}
        \centering
        \includegraphics[width=\textwidth]{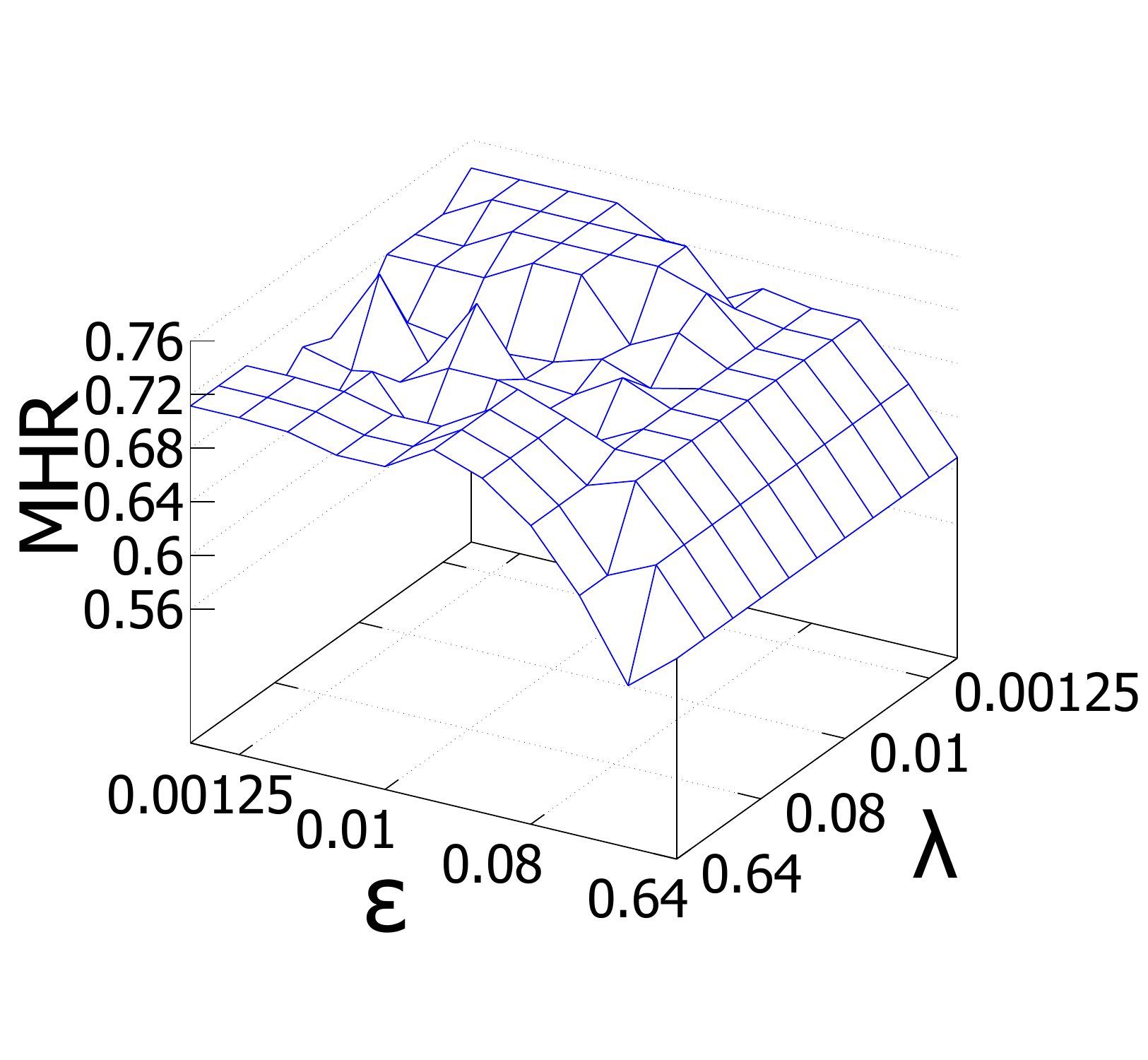}
        \caption{AntiCor\_6D}
    \end{subfigure}
    \hfill
    \begin{subfigure}[b]{0.195\textwidth}
        \centering
        \includegraphics[width=\textwidth]{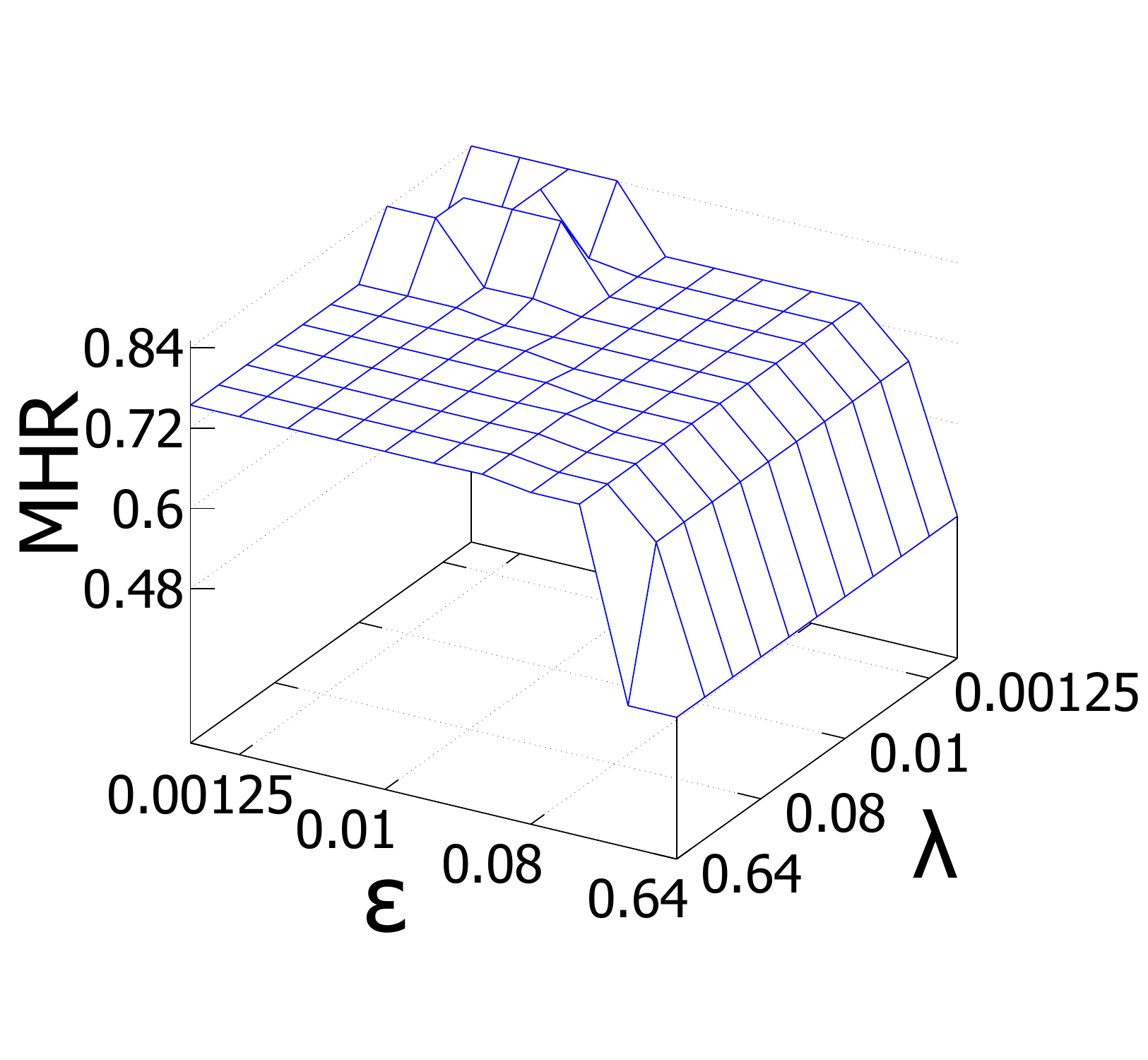}
        \caption{Compas (Gender)}
    \end{subfigure}
    \begin{subfigure}[b]{0.195\textwidth}
        \centering
        \includegraphics[width=\textwidth]{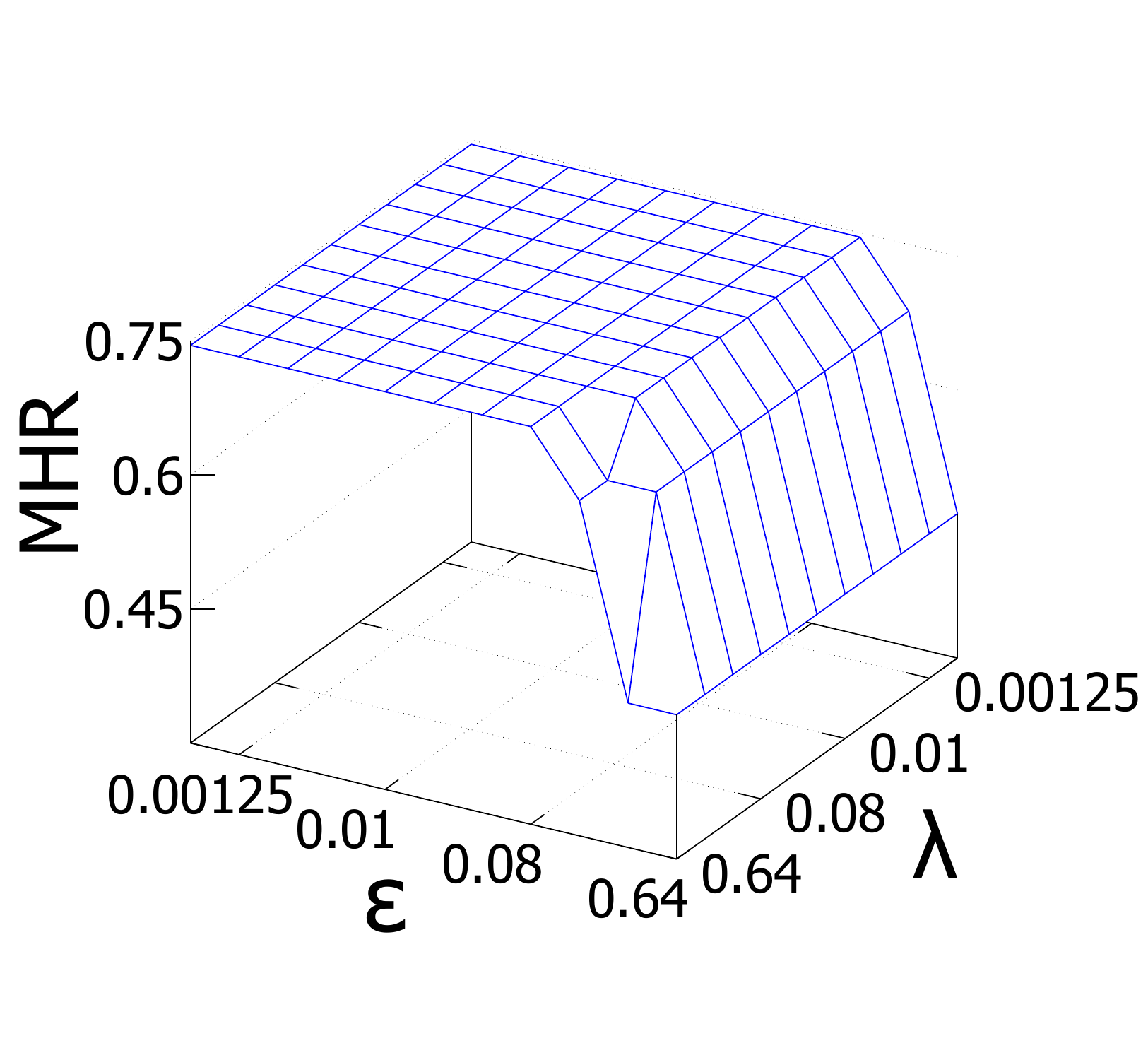}
        \caption{Compas (isRecid)}
    \end{subfigure}
    \hfill
    \begin{subfigure}[b]{0.195\textwidth}
        \centering
        \includegraphics[width=\textwidth]{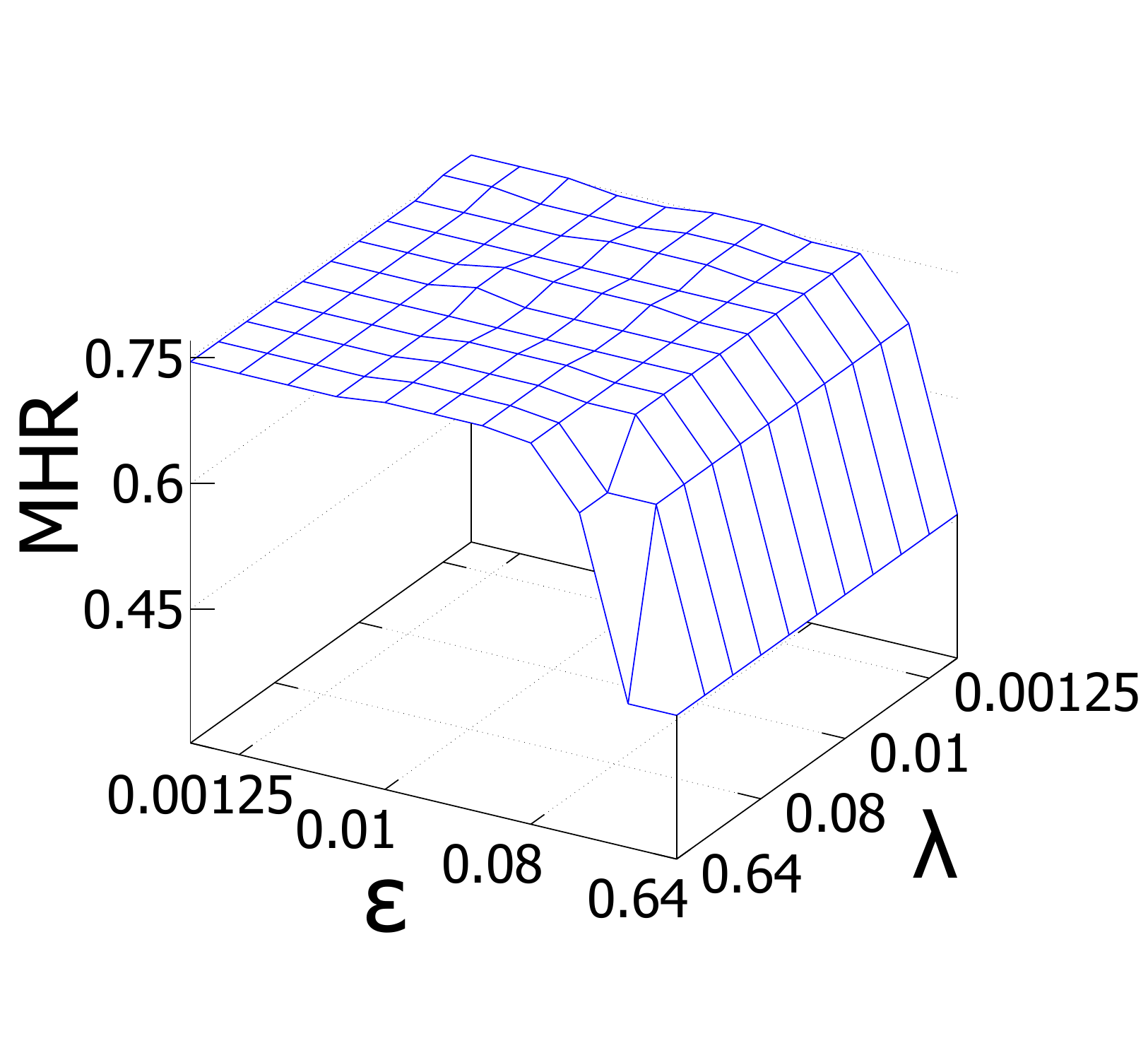}
        \caption{Compas (G+iR)}
    \end{subfigure}
    \hfill
    \begin{subfigure}[b]{0.195\textwidth}
        \centering
        \includegraphics[width=\textwidth]{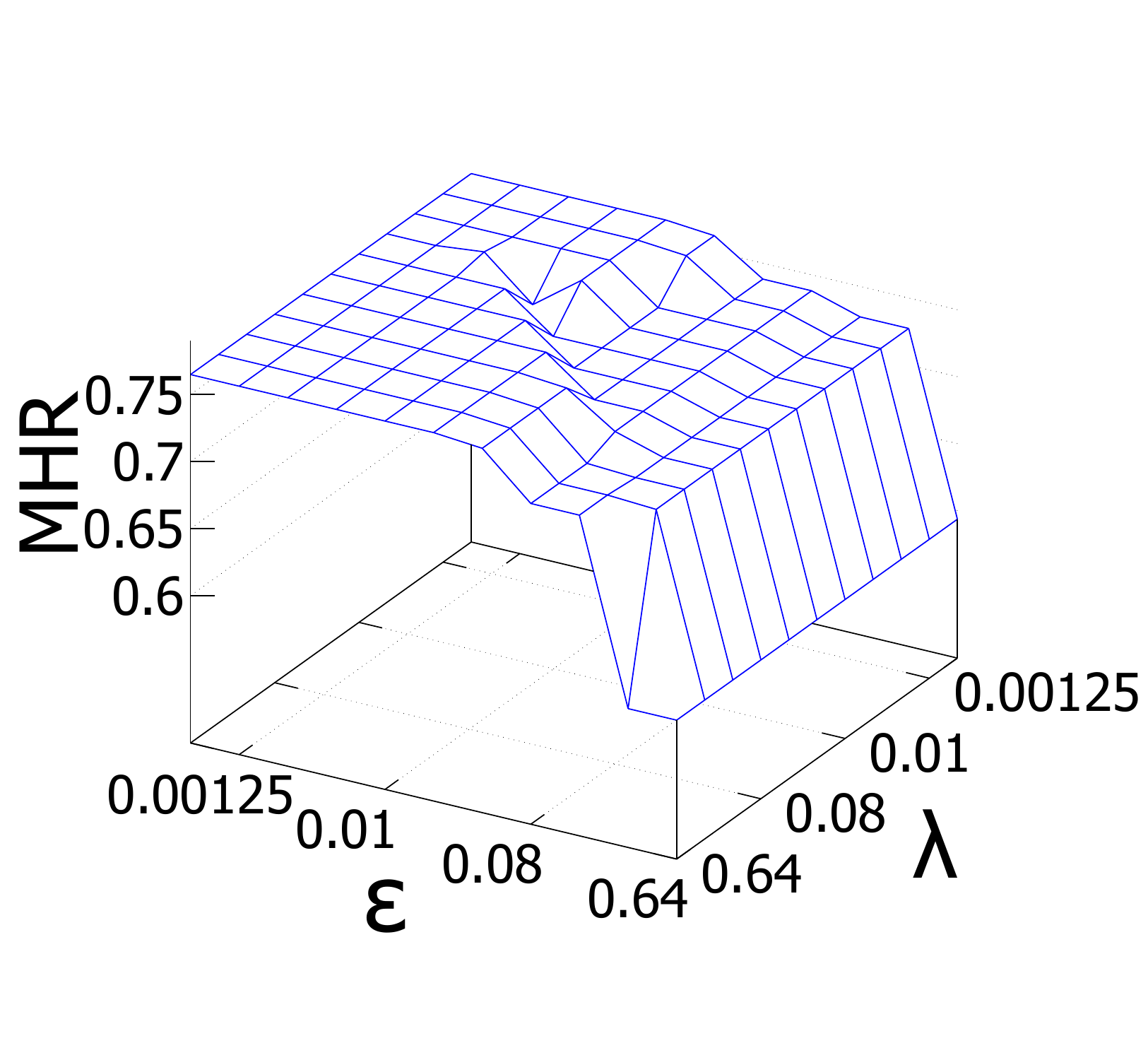}
        \caption{Credit (Job)}
    \end{subfigure}
    \hfill
    \begin{subfigure}[b]{0.195\textwidth}
        \centering
        \includegraphics[width=\textwidth]{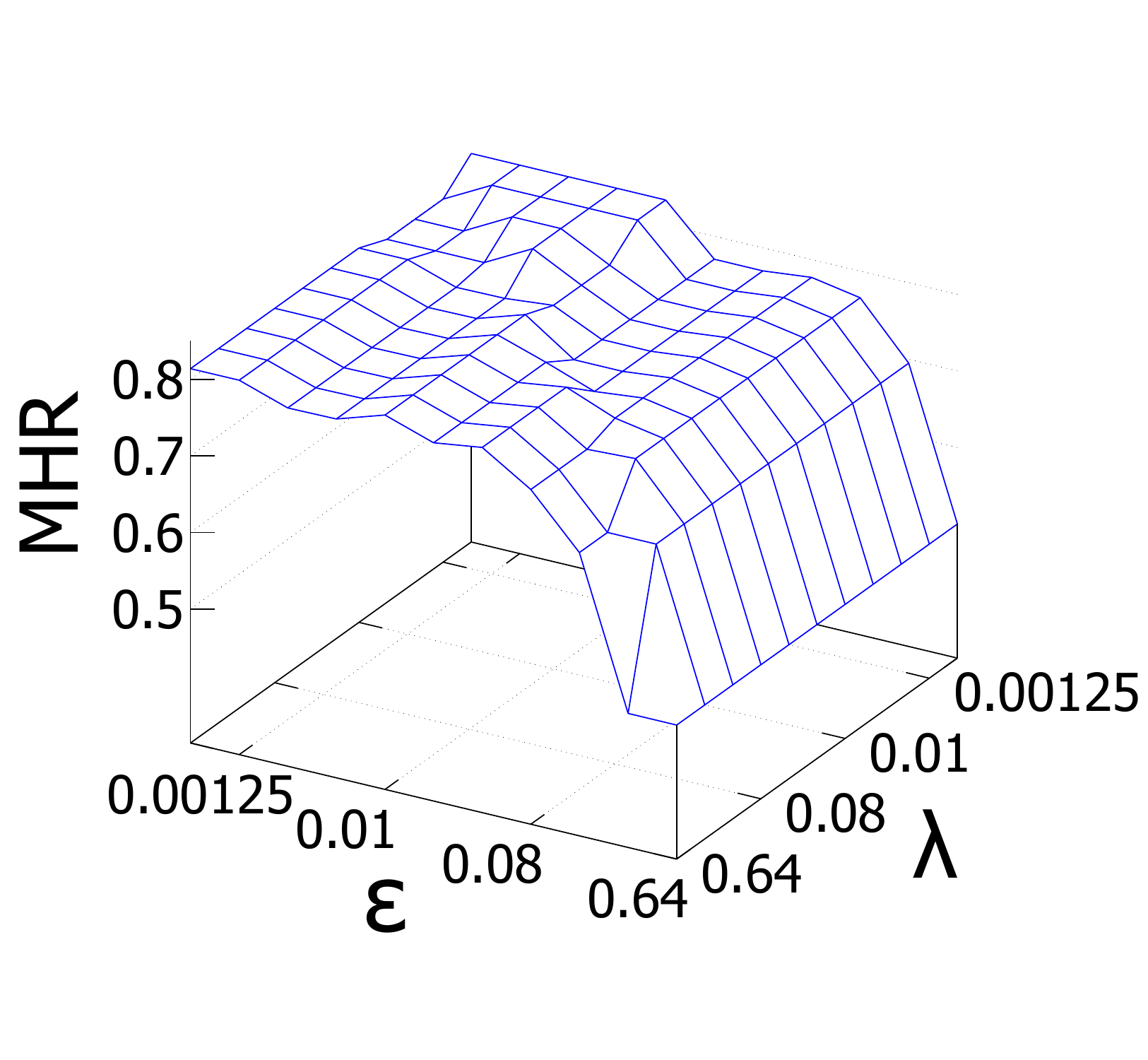}
        \caption{Credit (Housing)}
    \end{subfigure}
    \hfill
    \begin{subfigure}[b]{0.195\textwidth}
        \centering
        \includegraphics[width=\textwidth]{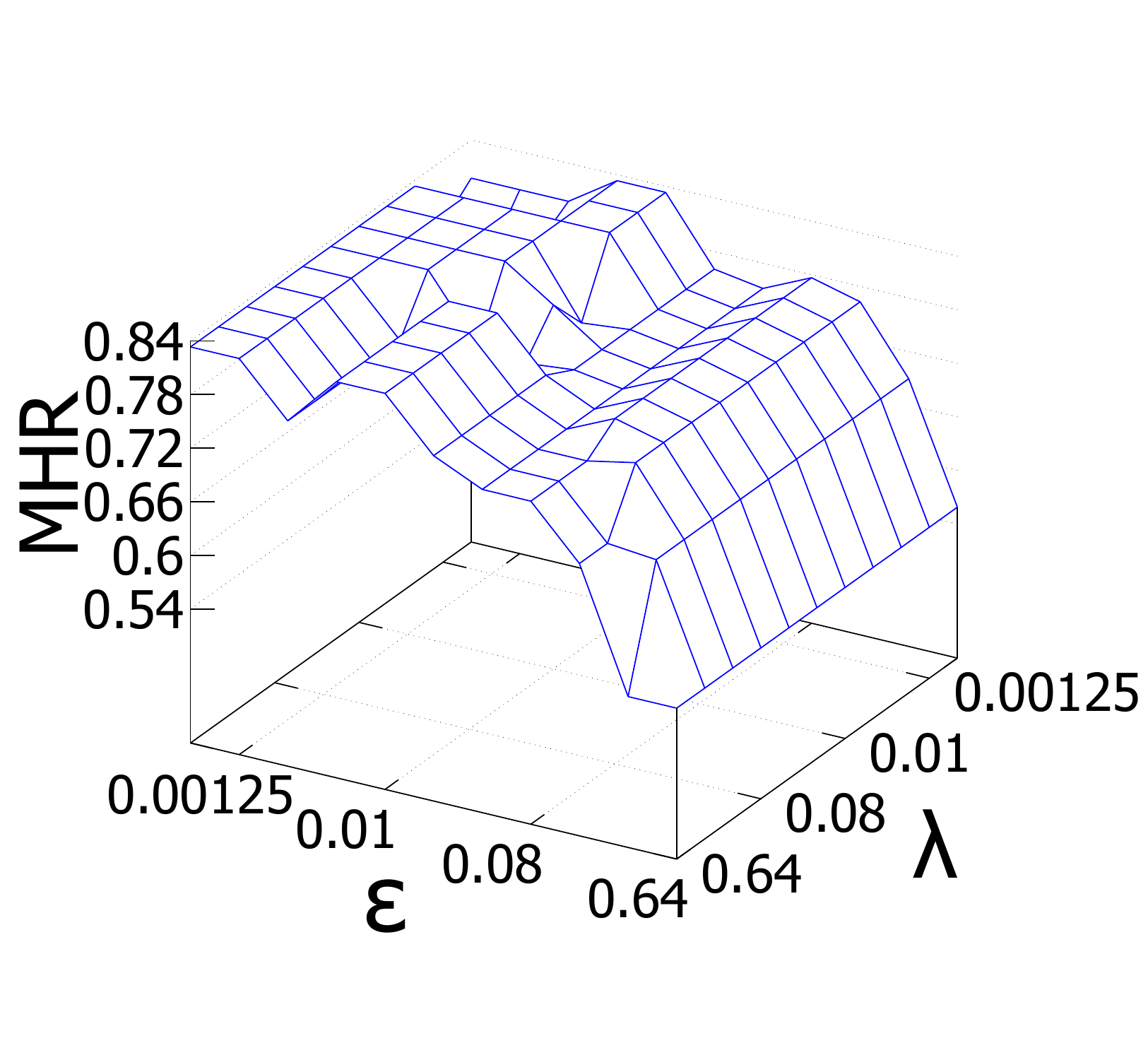}
        \caption{Credit (WY)}
    \end{subfigure}
    \caption{Results for the MHRs of \AlgIBG by varying $\varepsilon$ and $\lambda$.}
    \Description{experimental results}
    \label{fig:el:mhr}
\end{figure*}

\begin{figure*}[t]
    \centering
    \begin{subfigure}[b]{0.195\textwidth}
        \centering
        \includegraphics[width=\textwidth]{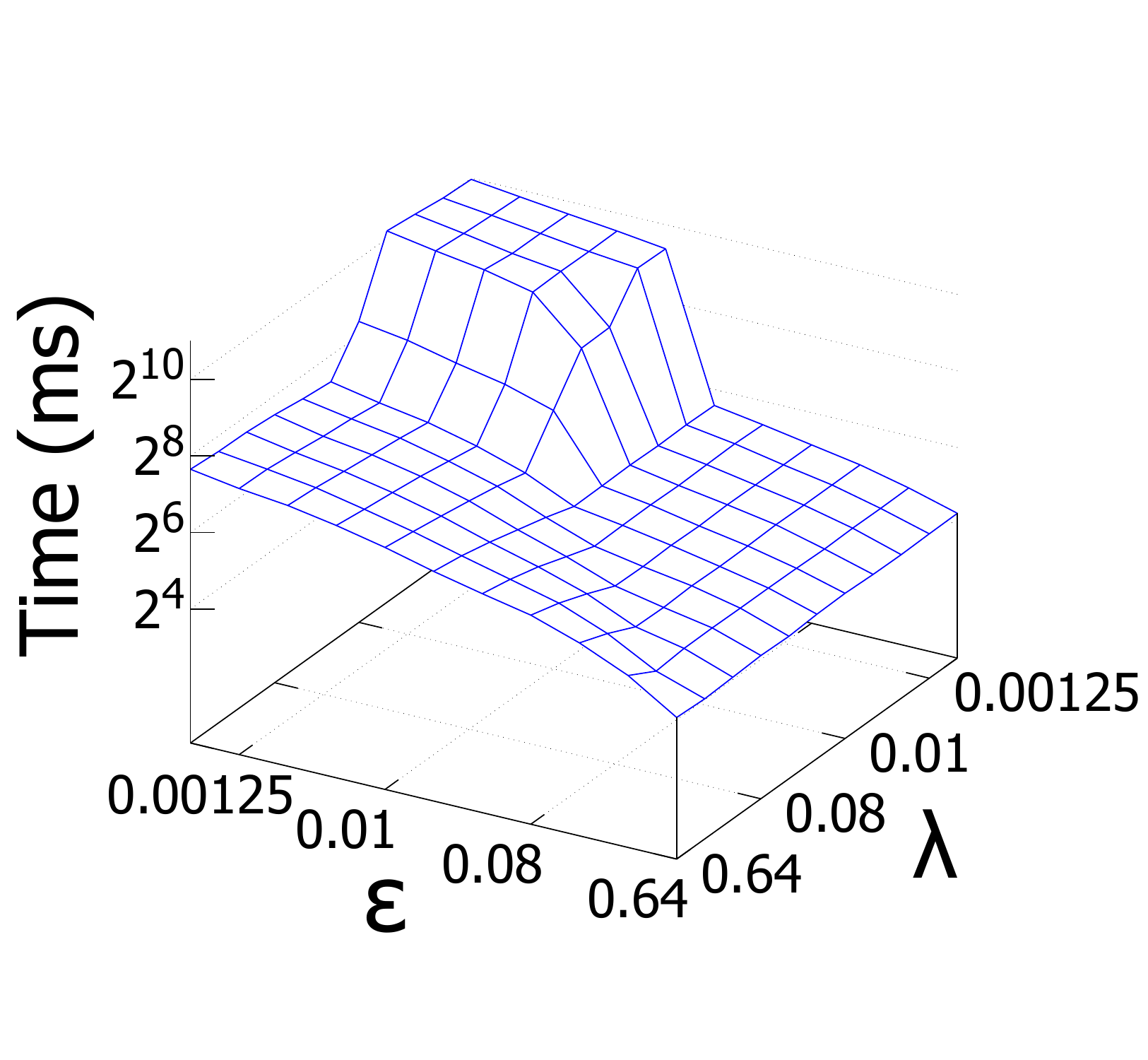}
        \caption{Adult (Gender)}
    \end{subfigure}
    \hfill
    \begin{subfigure}[b]{0.195\textwidth}
        \centering
        \includegraphics[width=\textwidth]{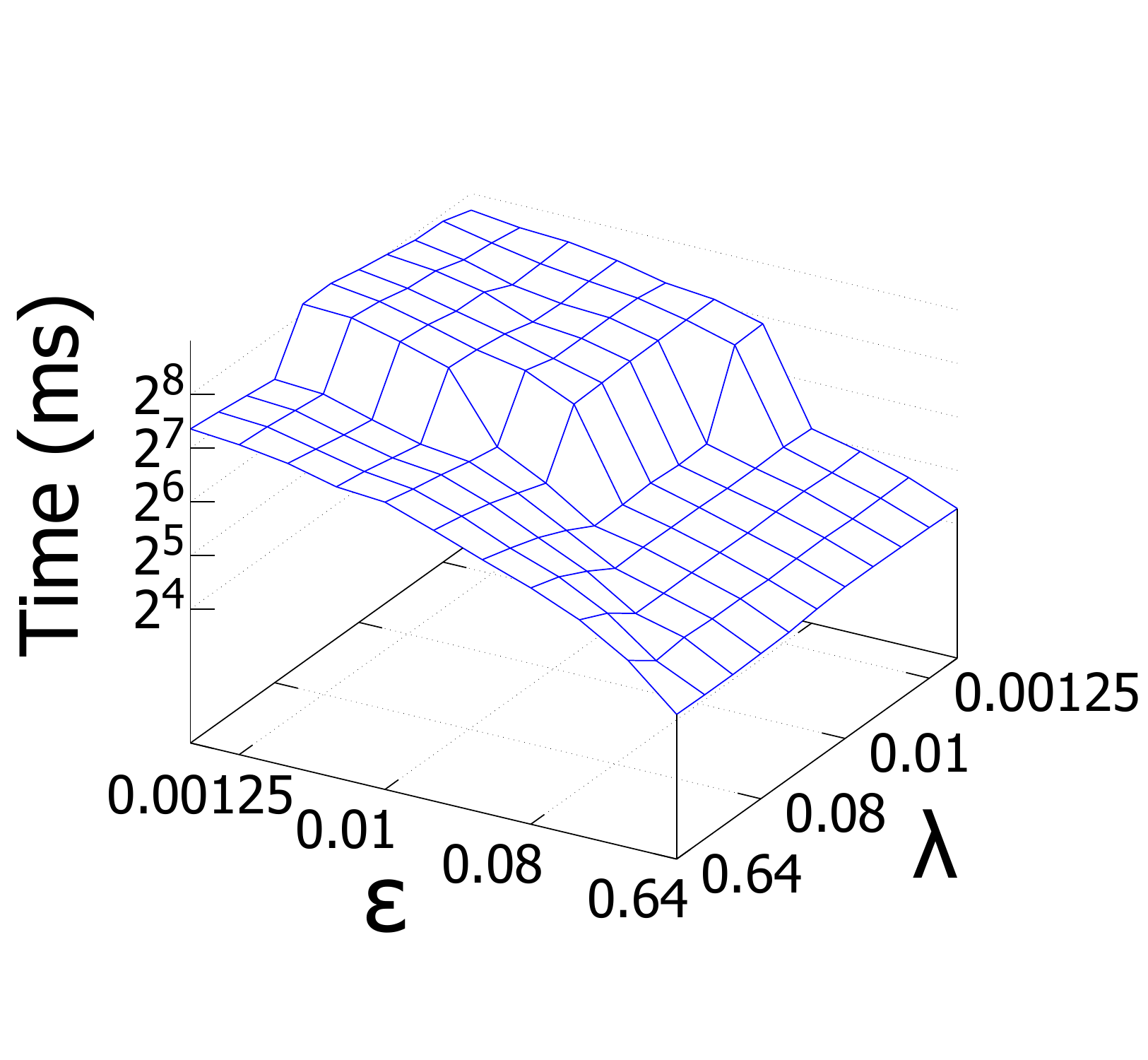}
        \caption{Adult (Race)}
    \end{subfigure}
    \hfill
    \begin{subfigure}[b]{0.195\textwidth}
        \centering
        \includegraphics[width=\textwidth]{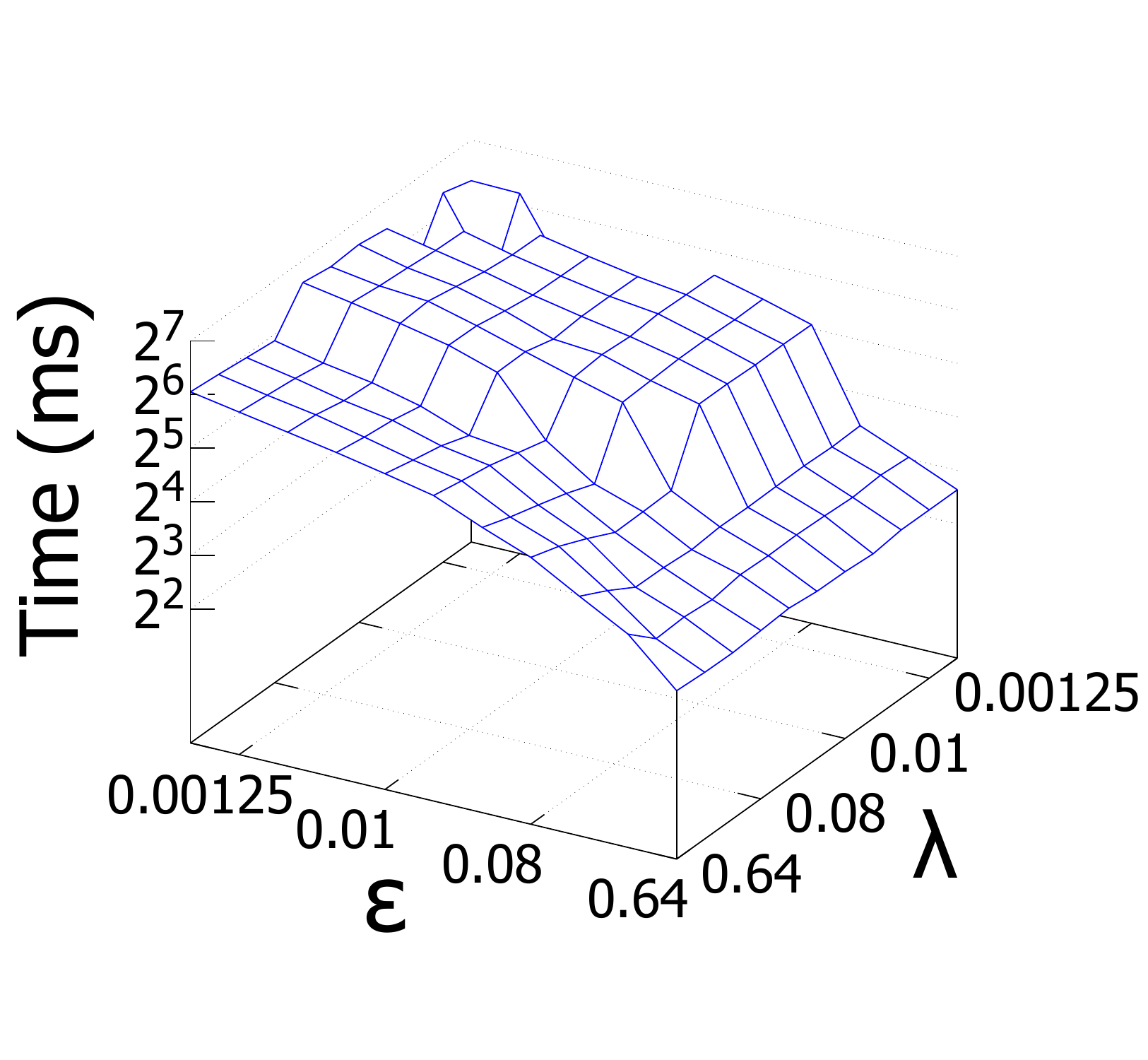}
        \caption{Adult (G+R)}
    \end{subfigure}
    \hfill
    \begin{subfigure}[b]{0.195\textwidth}
        \centering
        \includegraphics[width=\textwidth]{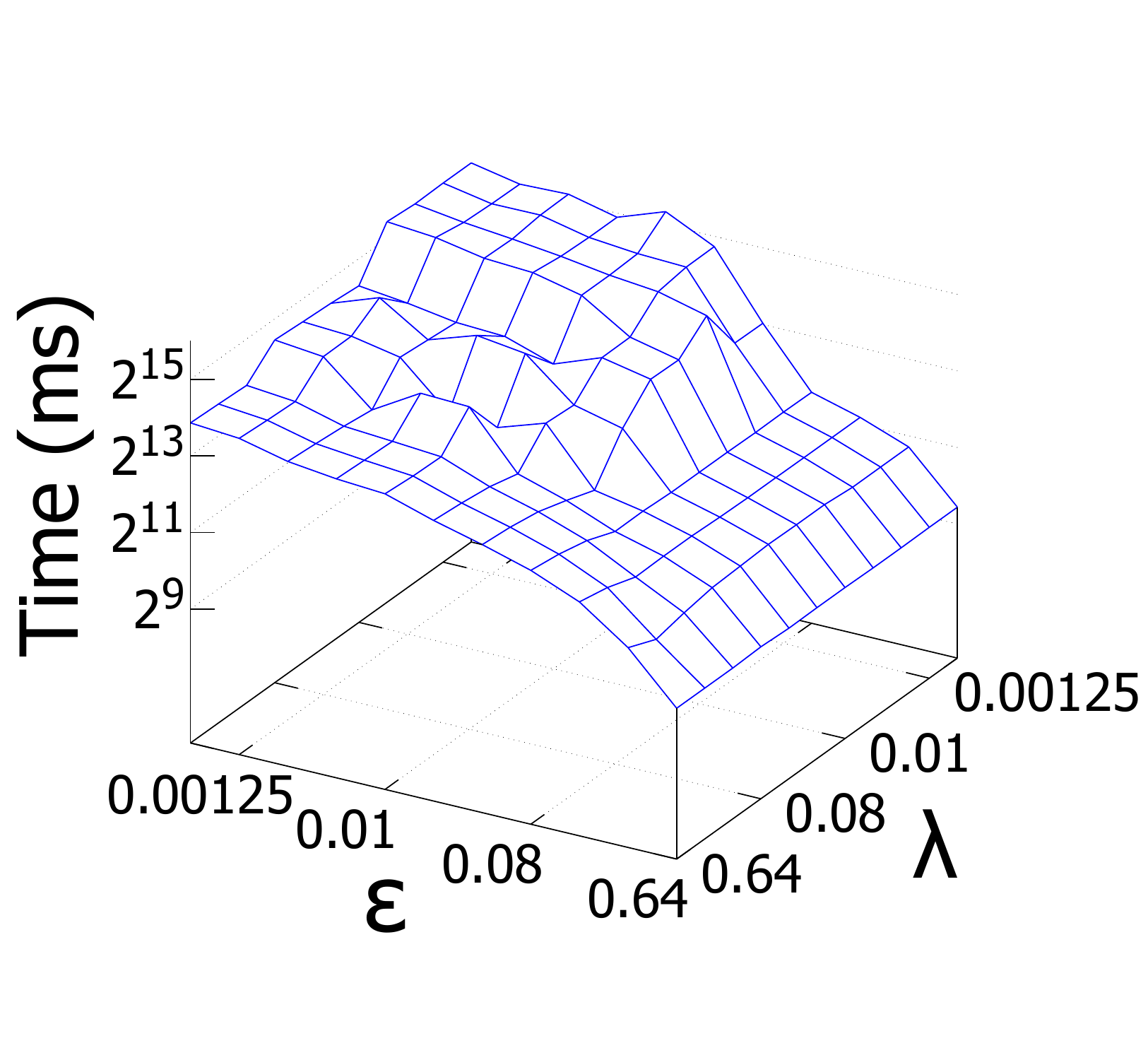}
        \caption{AntiCor\_6D}
    \end{subfigure}
    \hfill
    \begin{subfigure}[b]{0.195\textwidth}
        \centering
        \includegraphics[width=\textwidth]{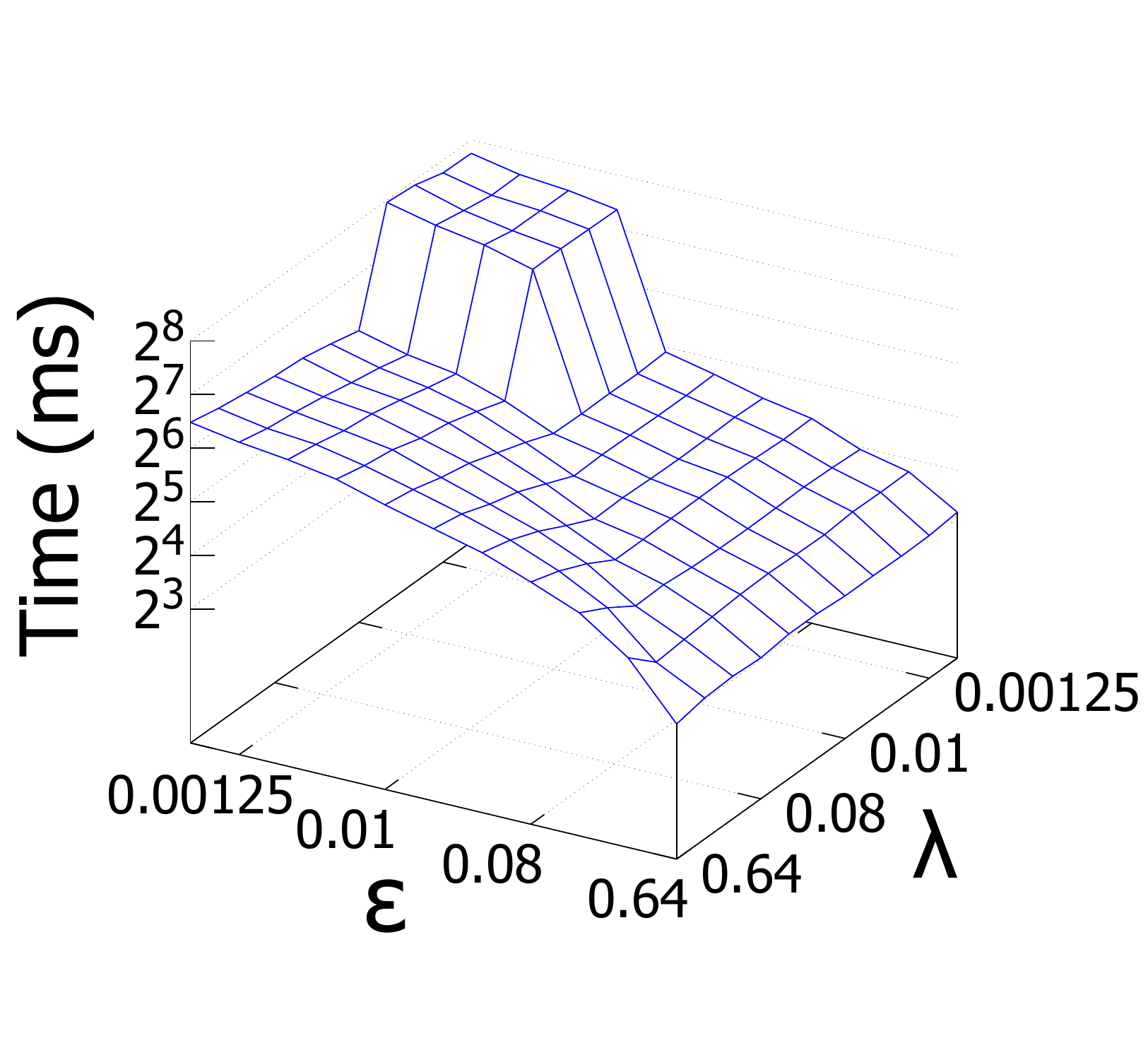}
        \caption{Compas (Gender)}
    \end{subfigure}
    \begin{subfigure}[b]{0.195\textwidth}
        \centering
        \includegraphics[width=\textwidth]{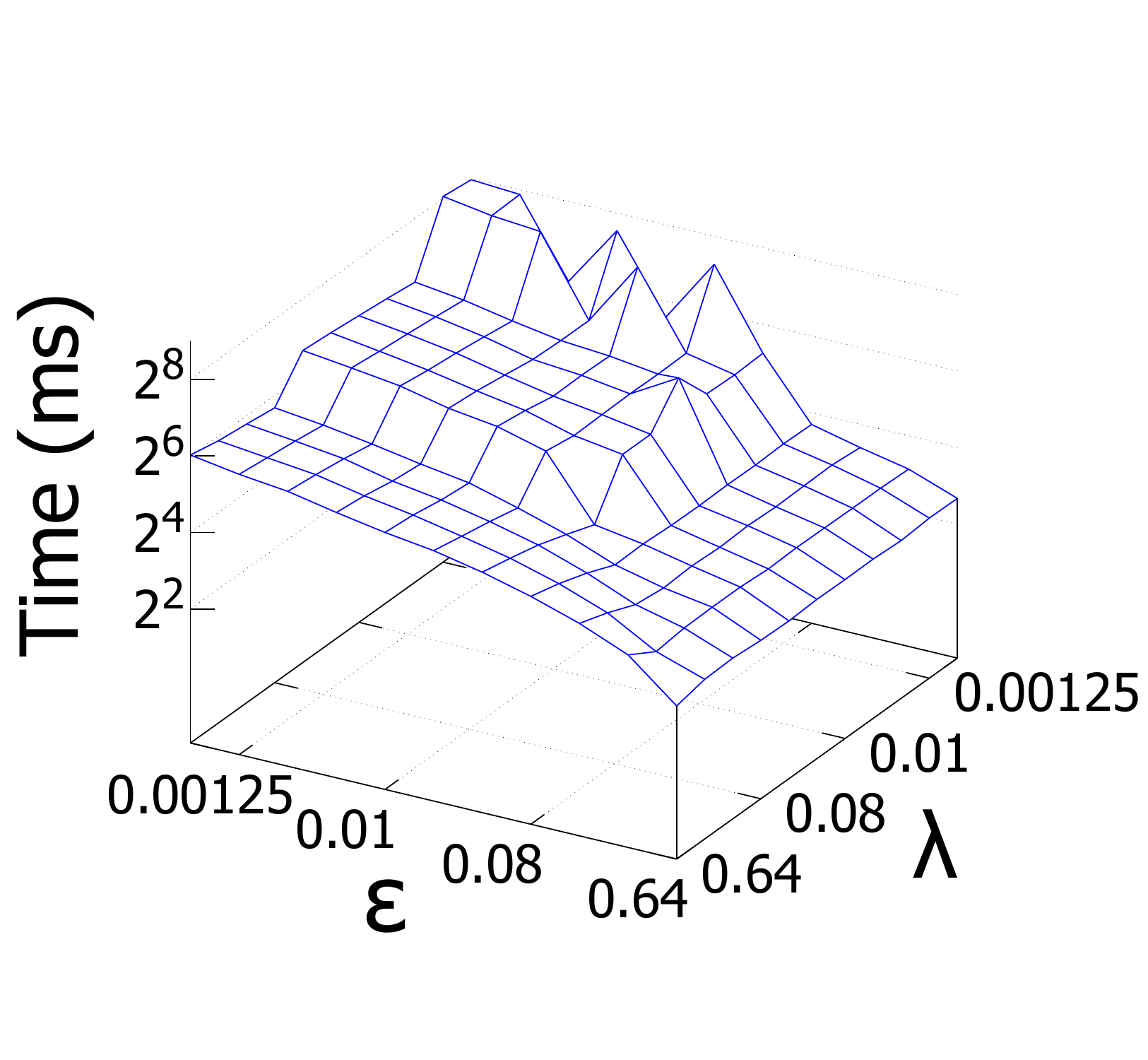}
        \caption{Compas (isRecid)}
    \end{subfigure}
    \hfill
    \begin{subfigure}[b]{0.195\textwidth}
        \centering
        \includegraphics[width=\textwidth]{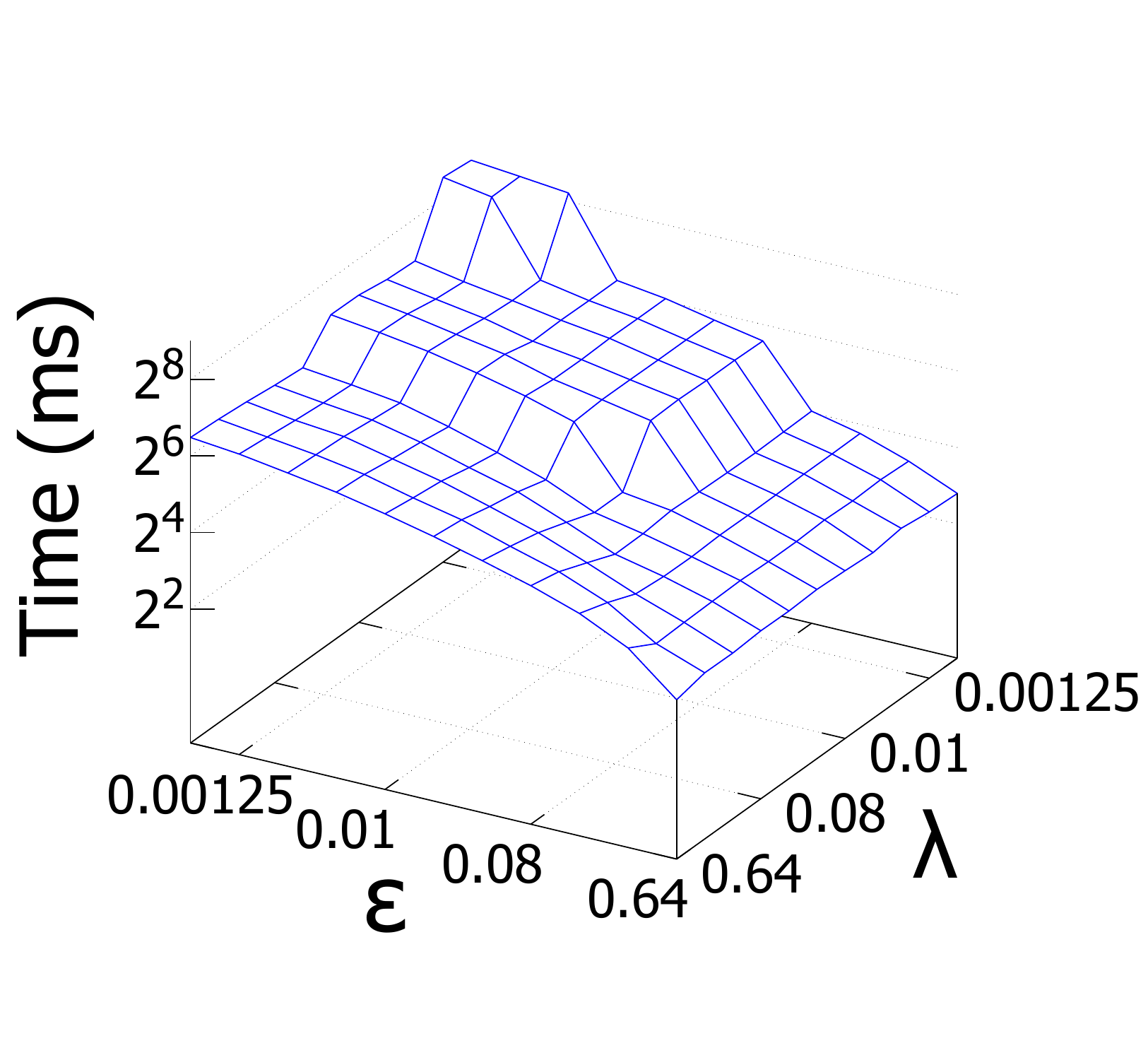}
        \caption{Compas (G+iR)}
    \end{subfigure}
    \hfill
    \begin{subfigure}[b]{0.195\textwidth}
        \centering
        \includegraphics[width=\textwidth]{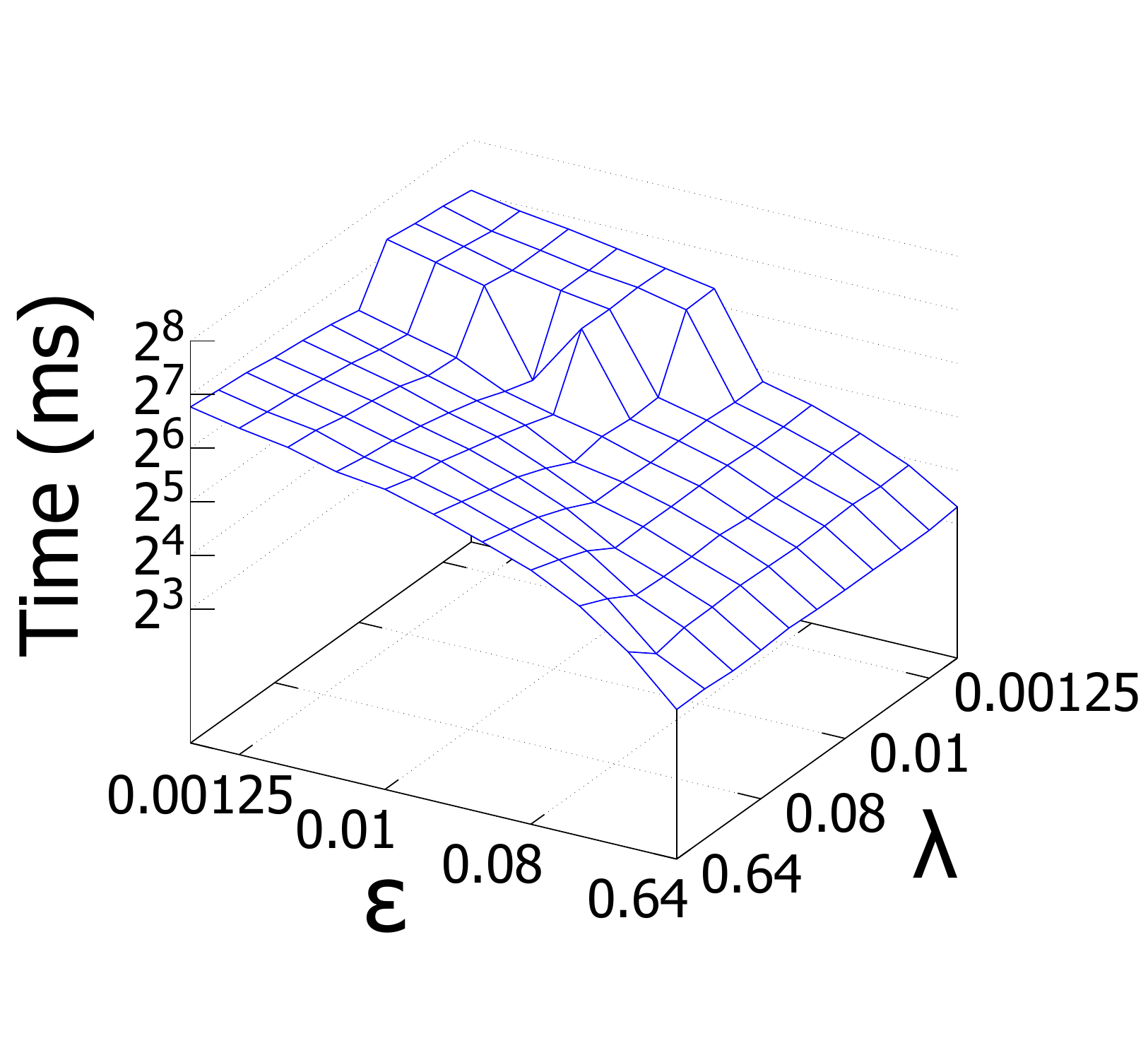}
        \caption{Credit (Job)}
    \end{subfigure}
    \hfill
    \begin{subfigure}[b]{0.195\textwidth}
        \centering
        \includegraphics[width=\textwidth]{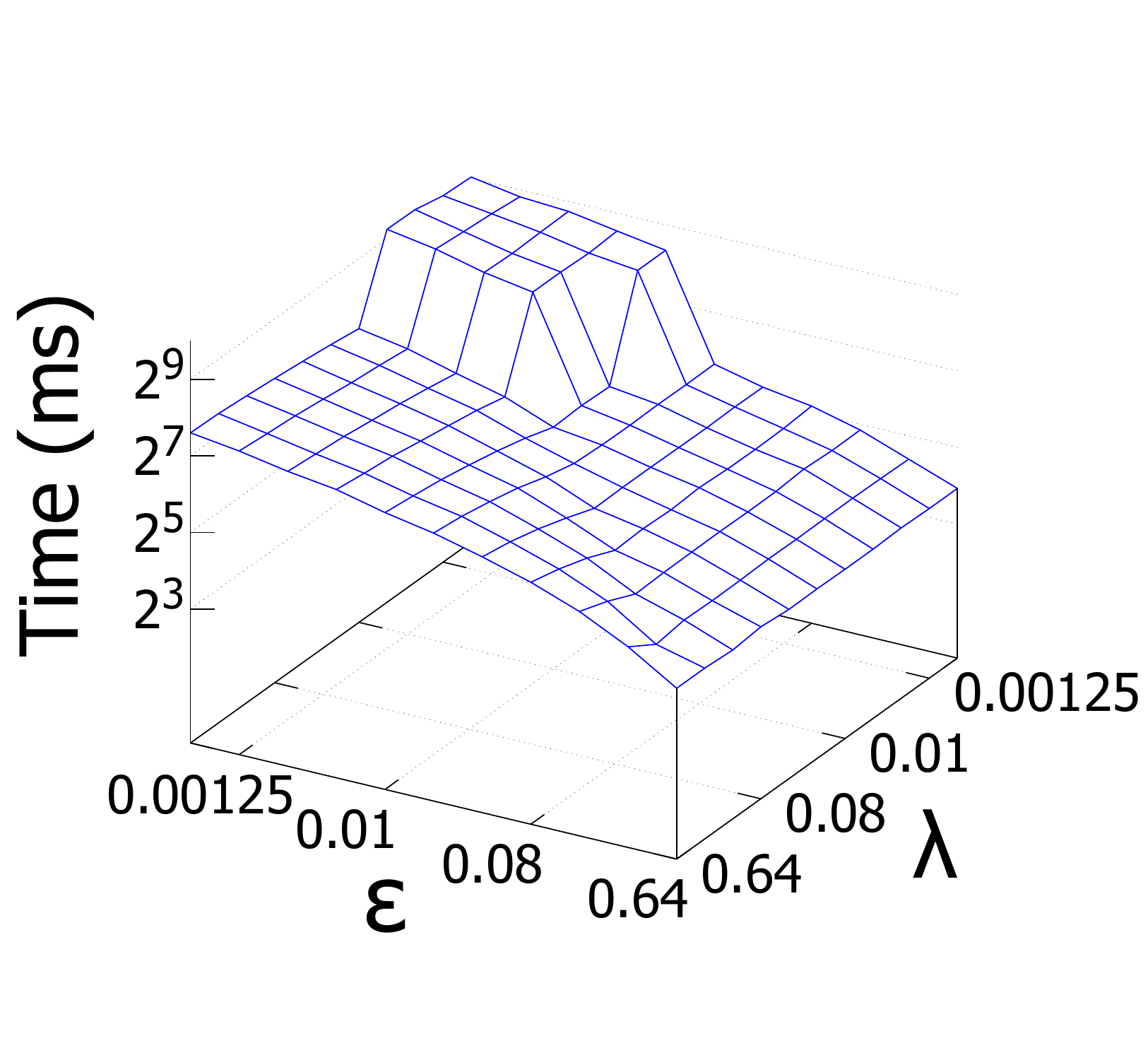}
        \caption{Credit (Housing)}
    \end{subfigure}
    \hfill
    \begin{subfigure}[b]{0.195\textwidth}
        \centering
        \includegraphics[width=\textwidth]{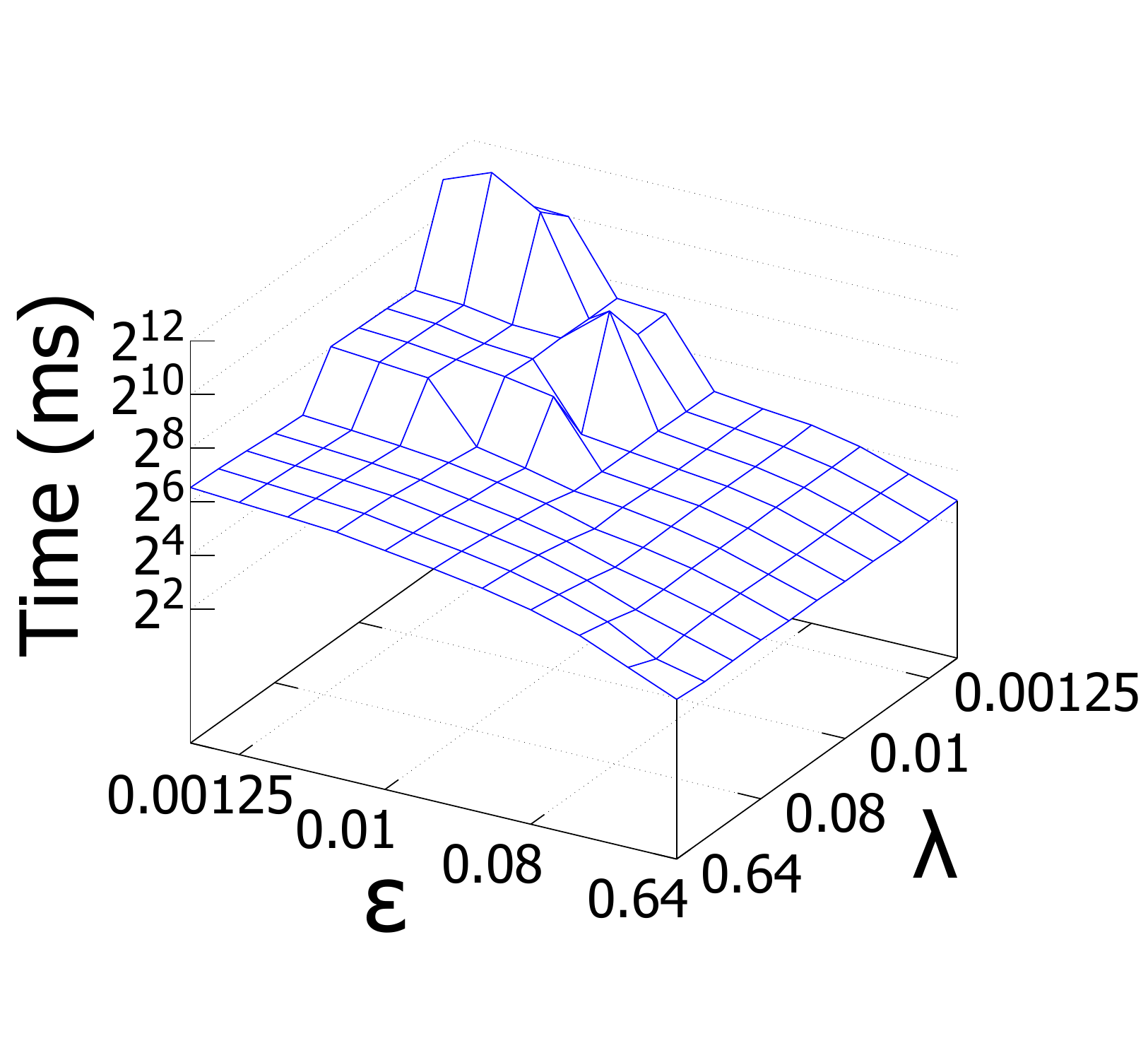}
        \caption{Credit (WY)}
    \end{subfigure}
    \caption{Results for the running time of \AlgIBG by varying  $\varepsilon$ and $\lambda$.}
    \Description{experimental results}
    \label{fig:el:time}
\end{figure*}

\section{Additional Experiments}\label{app:exp}

In this section, we provide the additional experimental results on the impact of parameters $\delta$, $\varepsilon$, and $\lambda$ on the performance of our proposed algorithms \AlgBG and \AlgIBG.

Figures~\ref{fig:delta:mhr} and~\ref{fig:delta:time} present the MHRs and running time of \AlgBG and \AlgIBG by varying the sample size $m$ of the $\delta$-net in \AlgBG or the maximum sample size $M$ in \AlgIBG.
Since the sample size $m$ of a $\delta$-net grows exponentially with respect to $d$ when the value of $\delta$ is fixed, setting a small value of $\delta$ is impractical for high dimensionality $d$.
Following existing studies on using $\delta$-nets for RMS problems~\cite{Agarwal:2017,Kumar:2018,Wang:2021b}, we consider using greater values of $\delta$ for larger $d$'s and testing the corresponding values of $m$ which guarantee that a high-quality solution can be computed in reasonable time.
In practice, we use $m = 10kd$ for \AlgBG and $M = 10kd$ and $m_0 = 0.05M$ for \AlgIBG by default in the experiments.
Here, we further vary $m$ and $M$ in $\{1.25kd, 2.5kd, 5kd, 10kd, 20kd, 40kd\}$ to show their effects on the performance of \AlgBG and \AlgIBG.
First, we observe that the MHRs of both algorithms increase in most cases when the values of $m$ and $M$ grow as the errors in MHR estimations become smaller.
However, such a trend does not strictly follow because \textsc{MRGreedy} is an approximation algorithm and might provide worse solutions even when the estimations for MHRs are more accurate.
Second, the running time of both algorithms increases nearly linearly with $m$ and $M$.
Here, \AlgIBG runs slower than \AlgBG because the setting of $\lambda$ in this set of experiments ensures that \AlgIBG reaches $m_i \geq M$ for testing whether its solution quality still increases when $M$ is larger.
In other experiments, \AlgIBG runs much faster than \AlgBG as $\lambda = 0.04$ is used and \AlgIBG terminates with smaller $m_i$ than $m$.
Based on the above results, we find that the MHRs of the solutions of both algorithms hardly increase in most cases when $m,M > 10kd$ and thus confirm the default setting of $m = 10kd$ in the experiments.

Figures~\ref{fig:el:mhr} and~\ref{fig:el:time} present the MHRs and running time of \AlgIBG for the parameters $\varepsilon \in \{0.00125, 0.0025, \ldots, 0.64\}$ and $\lambda \in \{0.00125, 0.0025, \ldots, 0.64\}$.
In terms of solution quality, we observe that the MHRs increase significantly when the values of $\varepsilon$ and $\lambda$ decrease from $0.64$ to $0.08$ while becoming steady when the values of $\varepsilon$ and $\lambda$ are smaller.
In terms of time efficiency, we find that the decreases of $\varepsilon$ and $\lambda$ both incur higher computational overhead.
The trends in both MHRs and running time are attributed to the facts that smaller $\varepsilon$ leads to more $\tau$ values attempted and smaller $\lambda$ causes larger sample sizes before termination.
Based on these results, we validate that the parameter settings of $\varepsilon=0.02$ and $\lambda=0.04$ in the experiments lead to reasonable trade-offs between efficiency and effectiveness across different datasets.

\end{document}